\newcommand{\PalindromeResult}{\ensuremath{7 (\pi^2 / 6 + 1/2) \alpha n - 5 n - 1}}
\newcommand{\RepeatResult}{\ensuremath{3(\pi^2/6 + 5/2) \alpha n}}
\newcommand{\LCEds}{$\text{LCE}^{\leftrightarrow}$ data structure}
\newcommand{\CustomLabel}[2]{%
   \protected@write \@auxout {}{\string \newlabel {#1}{{#2}{\thepage}{#2}{#1}{}} }%
   \hypertarget{#1}{#2}%
}
\newcommand{\GappedCase}[2]{\noindent\textbf{#1:} #2.}
\newcommand{\GappedSubCase}[3]{\GappedCase{#1}{#2, see \Cref{#3}}}
\newlength{\lenTextFigureBlock}
\newcommand{\TextFigureBlock}[3]{%
	\setlength{\lenTextFigureBlock}{\linewidth-#1\linewidth}
	\begin{textminipage}{#1\linewidth}
		#2
	\end{textminipage}
	\begin{minipage}{\lenTextFigureBlock}
		\hfill
		#3
	\end{minipage}
}
\newcommand{\MyFloatBox}[2]{%
\floatbox[{\capbeside\thisfloatsetup{capbesideposition={right,center}}}]{figure}[\FBwidth]{#2}{#1}}
	\newenvironment{subproof}{\block{Sub-Proof}}{\hfill\ensuremath{\blacksquare}}
\newenvironment{subclaim}{\block{Sub-Claim}}{}
\newcommand{\UnaryOperator}[2][]{%
	\ifx&#1&%
	\ensuremath{\mathop{}\mathopen{}#2\mathopen{}}%
	\else%
	\ensuremath{\mathop{}\mathopen{}#2\mathopen{}\left(#1\right)}%
\fi%
}
\newcommand{\Int}[2]{\ensuremath{\lbrack#1..\nobreak#2\rbrack}}
\newcommand{\substr}[3]{\ensuremath{\mathop{}\mathopen{}#1\mathopen{}\left\lbrack#2..#3\right\rbrack}} %
\newcommand{\sumExp}[1]{\UnaryOperator[#1]{\mathcal{E}}}
\newcommand{\Oh}[1]{\UnaryOperator[#1]{\mathcal{O}}}
\newcommand{\abs}[1]{\left|#1\right|} %
\newcommand{\gauss}[1]{\left\lfloor#1\right\rfloor} %
\newcommand{\upgauss}[1]{\left\lceil#1\right\rceil} %
\newcommand{\tuple}[1]{\left(#1\right)} %
\newcommand{\intWort}[1]{\emph{\textbf{#1}}}
\DeclareMathAlphabet{\mathup}{OT1}{\familydefault}{m}{n}
\newcommand{\idbb}[1]{\ensuremath{\mathbb{#1}}}
\newcommand{\N}{\idbb{N}}
\newcommand{\RealNumber}{\idbb{R}}
\newcommand{\Z}{\idbb{Z}}
\newcommand{\imgRepeat}{\ensuremath{\varphi_{\textup{}}(\grGR)}}
\newcommand{\mapRepeat}{\ensuremath{\varphi_{\textup{}}}}
\newcommand{\imgPali}{\ensuremath{\varphi_{\mathup{\intercal}}(\gpNP)}}
\newcommand{\mapPali}{\ensuremath{\varphi_{\mathup{\intercal}}}}
\newcommand{\Char}[1]{\texttt{#1}}
\newcommand{\occ}          {\ensuremath{\textup{occ}}}
\DeclareRobustCommand\sitemize[1]{\begin{itemize}#1\end{itemize}}
\DeclareRobustCommand\scases[2]{%
	\ifmmode%
	#1\begin{cases}#2\end{cases}
	\else%
	\[#1\begin{cases}#2\end{cases}\]
	\fi%
}
\DeclareRobustCommand\scases*[2]{%
	\saveexpandmode\expandarg%
	\StrSubstitute{\noexpand#2}{&}{\ }[\vv]%
	\StrSubstitute{\vv}{\empty\\}{\ #1}[\vvv]%
	\restoreexpandmode%
	\ensuremath{#1\ \vvv}}
\DeclareRobustCommand\sitemize*[1]{%
	\let\normalnewline=\item
		\let\item=\relax
		#1
	\let\item=\normalnewline
}
\theoremstyle{definition}
\newtheorem{theorem}{Theorem}[section]
\newtheorem{lemma}[theorem]{Lemma}
\newtheorem{corollary}[theorem]{Corollary}
\newtheorem{fact}[theorem]{Fact}
\newtheorem{example}[theorem]{Example}
\newtheorem{definition}[theorem]{Definition}
\Crefname{fact}{Fact}{Facts}
\crefname{fact}{fact}{facts}
\newcommand{\menge}[1]{\left\{#1\right\}} %
\newcommand{\block}[1]{\noindent\textbf{#1. }}
\newcommand{\intervalI}{\mathcal{I}} %
\newcommand{\ibeg}[1]{\mathsf{b}(#1)}%
\newcommand{\iend}[1]{\mathsf{e}(#1)}%
\newcommand{\RR}[1]{\ensuremath{{#1}_{\mathup{\rho}}}}
\newcommand{\LL}[1]{\ensuremath{{#1}_{\mathup{\lambda}}}}
\newcommand{\arm}   {\ensuremath{u}}
\newcommand{\Sarm}{\ensuremath{u}}
\newcommand{\uarm}   {\ensuremath{u}}
\newcommand{\warm}   {\ensuremath{w}}
\newcommand{\subarm}{\ensuremath{s}}
\newcommand{\Ssubarm}{\ensuremath{s}}
\newcommand{\rep}   {\ensuremath{r}}
\newcommand{\gap}   {\ensuremath{v}}
\newcommand{\zvar}  {\ensuremath{z}}
\newcommand{\agr}  {\ensuremath{(\LL\arm,\RR\arm)}} %
\newcommand{\sagr}  {\ensuremath{(\s{\LL\arm},\s{\RR\arm})}} %
\newcommand{\qvar}  {\ensuremath{q}}
\newcommand{\pali}[1]{\ensuremath{{#1}^{\mathup{\intercal}}}}
\newcommand{\CoverN}[1]{\mathcal{C}_{#1}}
\newcommand{\SubCover}{\ensuremath{C}}
\newcommand{\SetE}{\ensuremath{E}}
\newcommand{\weight}[1][]{\UnaryOperator[#1]{\mathsf{w}}}
\newcommand{\corr}[1]{\mathit{c}(#1)}
\newcommand{\grGR}  [1][\warm]{\ensuremath{\mathcal{G}_\alpha(#1)}}
\newcommand{\gpP}   [1][\warm]{\ensuremath{\pali{\beta\mathcal{P}}_{\hspace{-0.2em}\alpha}(#1)}}
\newcommand{\gpNP}  [1][\warm]{\ensuremath{\pali{\overline{\beta\mathcal{P}}_{\hspace{-0.2em}\alpha}\hspace{-0.2em}}(#1)}}
\newcommand{\gpGR}  [1][\warm]{\ensuremath{\pali{\mathcal{G}}_\alpha(#1)}}
\newcommand{\grP}   [1][w]{\ensuremath{\beta\mathcal{P}_{\hspace{-0.2em}\alpha}(#1)}}
\newcommand{\grNP}  [1][w]{\ensuremath{\overline{\beta\mathcal{P}_{\hspace{-0.2em}\alpha}}(#1)}}
\newcommand{\period}{p}
\newcommand{\s}[1]{\overline{#1}}
\newenvironment{textminipage}[1]{%
	\noindent%
	\edef\myindent{\the\parindent}%
	\begin{minipage}{#1}
	\setlength{\parindent}{\myindent}
}{\end{minipage}}
	\definecolor{solarizedYellow}{HTML}{B58900}
	\definecolor{solarizedOrange}{HTML}{CB4B16}
	\definecolor{solarizedRed}{HTML}{DC322F}
	\definecolor{solarizedMagenta}{HTML}{D33682}
	\definecolor{solarizedViolet}{HTML}{6C71C4}
	\definecolor{solarizedBlue}{HTML}{268BD2}
	\definecolor{solarizedCyan}{HTML}{2AA198}
	\definecolor{solarizedGreen}{HTML}{859900}
\tikzstyle{arm} = [pattern=horizontal lines, pattern color=solarizedBlue!15]
\tikzstyle{subarm} = [pattern=vertical lines, pattern color=solarizedYellow!15]
\tikzstyle{gap} = [pattern=north east lines, pattern color=gray!20]
\tikzstyle{rep} = [pattern=north west lines, pattern color=solarizedRed!15]
\tikzstyle{diff} = [pattern=dots, pattern color=solarizedCyan!15]
\edef\ourCol{0}
\pgfmathsetcount{\ourRow}{0}
\newcommand{\col}[3][]{%
	\draw [#1] (\ourCol,\ourRow) rectangle node [#1,fill=white,inner sep=0em] {#3} (\ourCol+#2,\ourRow-1);
\edef\ourCol{\ourCol+#2}
}
\newcommand{\void}[1]{%
\edef\ourCol{\ourCol+#1}
}
\newcommand{\newrow}{%
\pgfmathsetcount{\ourRow}{\ourRow-1}
\edef\ourCol{0}
}
\newcommand{\marke}[1]{%
\draw (\ourCol,\ourRow) -- (\ourCol,\ourRow-1) node [anchor=north] {#1};
}
\newcommand{\markeup}[1]{%
\draw (\ourCol,\ourRow-1) -- (\ourCol,\ourRow) node [anchor=south] {#1};
}
\newcommand{\dist}[2][]{%
\draw [<->] (\ourCol,\ourRow+0.5) -- (\ourCol+#2,\ourRow+0.5) node [midway,anchor=north] {#1};
\edef\ourCol{\ourCol+#2}
}
\newcommand{\distup}[2][]{%
\draw [<->] (\ourCol,\ourRow-0.5) -- (\ourCol+#2,\ourRow-0.5) node [midway,anchor=south] {#1};
\edef\ourCol{\ourCol+#2}
}
\renewcommand*{\NAT@spacechar}{~}
\title{Improved Upper Bounds on all Maximal $\alpha$-gapped Repeats and Palindromes}
\author[1]{Tomohiro I}
\author[2]{Dominik K\"{o}ppl}%
\affil[1]{Kyushu Institute of Technology, Japan}
\affil[2]{Department of Computer Science, TU Dortmund, Germany}
\begin{document}
\maketitle

\begin{abstract}
We show that the number of all maximal $\alpha$-gapped repeats and palindromes of a word of length~$n$ is at most~$\RepeatResult$ and~$\PalindromeResult$, respectively.
\end{abstract}

\providecommand{\PalindromeResult}{\ensuremath{7 (\pi^2 / 6 + 1/2) \alpha n - 5 n - 1}}
\providecommand{\RepeatResult}{\ensuremath{3(\pi^2/6 + 5/2) \alpha n}}

\section{Introduction}\label{secIntro}
Given a word~$\warm$,
a \emph{gapped repeat} is a triple of integers~$(\LL{i}, \RR{i}, \Sarm)$ with the properties
(a)~$0 < \RR{i}-\LL{i}$, and (b)~$\warm[\LL{i}..\LL{i}+\Sarm-1] = \warm[\RR{i}..\RR{i}+\Sarm-1]$.
A variant are \emph{gapped palindromes} with the properties
(a)~$0 \le \RR{i}-\LL{i}$, and (b)~$\warm[\LL{i}..\LL{i}+\Sarm-1]$ is equal to the reverse of $\warm[\RR{i}..\RR{i}+\Sarm-1]$.
In both cases (repeats or palindromes), $\warm[\LL{i}..\LL{i}+\Sarm-1]$ and $\warm[\RR{i}..\RR{i}+\Sarm-1]$ are called \emph{left} and \emph{right arm}, respectively.
Given a real number $\alpha \ge 1$, $(\LL{i}, \RR{i}, \Sarm)$ is called \emph{$\alpha$-gapped} if $\RR{i}-\LL{i} \le \alpha\Sarm$.
A gapped repeat is \emph{maximal} 
if its arms can be extended neither to their left nor to their right sides (to form a larger gapped repeat).
Similarly, a gapped palindrome is \emph{maximal} if it can be extended neither inwards nor outwards.
Maximal $\alpha$-gapped repeats and palindromes starred in several recent papers~\cite{gappedPalindroms,KolpakovPPK14,crochemore16optimal,gawrychowski18tighter}.
The most intriguing questions are: 
\begin{enumerate}
	\item How to compute all maximal $\alpha$-gapped repeats/palindromes efficiently, and:
	\item What is the maximum number of maximal $\alpha$-gapped repeats/palindromes in a word?
\end{enumerate}
Previously, the second question was answered with \Oh{\alpha^2 n}~\cite{gappedPalindroms,KolpakovPPK14},
subsequently with \Oh{\alpha n}~\cite{crochemore16optimal}, and finally with $18\alpha n$ and $28\alpha n + 7n$ for maximal $\alpha$-gapped repeats and maximal $\alpha$-gapped palindromes, respectively~\cite{gawrychowski18tighter}.
Following this line of achievements, this article gives yet another improvement to those answers:
\begin{itemize}
	\item The number of all maximal $\alpha$-gapped repeats in a word of length $n$ is at most $\RepeatResult$~(\Cref{thmMaxReps}).
	\item The number of all maximal $\alpha$-gapped palindromes in a word of length $n$ is at most $\PalindromeResult$~(\Cref{thmMaxPal}). 
\end{itemize} 
The improvement of the upper bound on the number of all maximal $\alpha$-gapped repeats is a small refinement step (in \Cref{lemmaPoints,lemmaPointsRepPropA,lemmaPointsRep}), whereas our 
new upper bound on the number of all maximal $\alpha$-gapped palindromes involves a more thorough analysis (in \Cref{lemmaPalNPCover}).
Here, the main difference to~\cite{gawrychowski18tighter} is that 
\begin{itemize}
	\item we define a periodic gapped palindrome to have a left arm with a sufficiently long periodic suffix (instead of prefix), and that
	\item we support overlaps (previous results assumed that $\LL{i}+\Sarm \le \RR{i}$).
\end{itemize}
The former change helps us to attain a refined upper bound at the expense of a more thorough analysis.
The latter change is a generalization, since our proofs work for both supporting and prohibiting overlaps.
This generalization makes the maximality property more natural, since a left/right extension of a gapped repeat 
(resp.\ an inward extension of a gapped palindrome) is always a gapped repeat (resp.\ gapped palindrome).
\begin{example}\label{exGappedAAA}
The first two characters of $\warm = \Char{aaa}$ form a gapped repeat~$(1,2,1)$. 
The right extensions~$(1,2,2)$ of both arms is only a gapped repeat if overlaps are supported.
Similarly, $(1,3,1)$ is a gapped palindrome, but the inward extension $(1,2,2)$ is a gapped palindrome only if overlaps are supported.
\end{example}

A natural question arising from this generalization is whether we can still compute the set of all maximal $\alpha$-gapped repeats and palindromes 
within the same bounds when supporting overlaps. 
We can answer this question affirmatively in the penultimate section of this article. %
Throughout this article, we heavily borrow the notations and ideas evolved by~\citet{gawrychowski18tighter} and~\citet{KolpakovPPK14}.

\section{Preliminaries}

A (real) \intWort{interval} $\intervalI=[b,e] \subset \RealNumber$ for $b,e \in \RealNumber$ is the set of all real numbers~$i \in \RealNumber$ with $b \le i \le e$.
We write $[b,e)$, $(b,e]$ or $(b,e)$ if $e$, $b$, or both values are not included in the interval.
For an interval $\intervalI$, $\ibeg{\intervalI}$ and $\iend{\intervalI}$ denote the beginning and end of $\intervalI$, respectively.

A special kind of intervals are integer intervals $\intervalI=[b..e]$, 
where $\intervalI$ is the set of consecutive integers from $b = \ibeg{\intervalI} \in \Z$ to $e = \iend{\intervalI} \in \Z$, for $b\le e$. 
We write $\abs{\intervalI}$ to denote the length of $\intervalI$; i.e., $\abs{\intervalI}=\iend{\intervalI}-\ibeg{\intervalI}+1$.

Let $\Sigma$ be a finite alphabet; an element of $\Sigma$ is called \intWort{character}.
$\Sigma^*$ denotes the set of all finite \intWort{words} over $\Sigma$. 
The \intWort{length} of a word $\warm\in \Sigma^*$ is denoted by $\left|w\right|$. 
For $v = xuy$ with $x,u,y \in \Sigma^*$, we call $x$, $u$ and $y$ a \intWort{prefix}, \intWort{factor}, and \intWort{suffix} of $v$, respectively.
We denote by $\warm[i]$ the character occurring at position $i$ in $\warm$, and by $\substr{\warm}{i}{j}$ the factor of $\warm$ starting at position $i$ and ending at position $j,$ consisting of the catenation of the characters $\warm[i], \ldots, \warm[j],$ where $1\leq i\leq j\leq n$; $\substr{\warm}{i}{j}$ is the empty word if $i>j$.
{By ${\pali{\warm}}$ we denote the \intWort{reverse} of $\warm$.} 

The notation $\warm\Int{b}{e}$ can be ambivalent: it can denote both a factor and the occurrence of this factor starting at position~$b$ in~$\warm$.
The second entity is called the segment\footnote{This notion was coined in~\cite{crochemore16optimal}.}~$\substr{\warm}{b}{e}$: A \intWort{segment}~$\substr{\warm}{b}{e}$ of a word $\warm$ is the occurrence of a factor~$f$ equal to $\substr{\warm}{b}{e}$ in $\warm$; 
we say that $f$ \intWort{occurs} at position~$b$ in~$\warm$.
While a factor is identified only by a sequence of characters, a segment is also identified by its position in the word.
A conclusion is that segments are always unique, while a word may contain multiple occurrences of the same factor.
We use the same notation for defining factors and segments of a word.
For two segments $\uarm$ and $\s\uarm$ of a word~$\warm$, we write $\uarm \equiv \s\uarm$ 
if they start at the same position in~$\warm$ and have the same length.
We write $\uarm = \s\uarm$ if the factors identifying these segments are the same (hence $\uarm \equiv \s\uarm \Rightarrow \uarm = \s\uarm$).
We implicitly use segments both like factors of $\warm$ 
and as intervals contained in $\Int{1}{\abs{\warm}}$, e.g., 
we write $\uarm \subseteq \s\uarm$ if
two segments $\uarm := \substr{\warm}{b}{e}, \s\uarm := \substr{\warm}{\s{b}}{\s{e}}$ of $\warm$ satisfy $\Int{b}{e} \subseteq \Int{\s{b}}{\s{e}}$, i.e.,
$\ibeg{\s\uarm} \le \ibeg{\uarm} \le \iend{\uarm} \le \iend{\s\uarm}$.

A \intWort{period} of a word $\warm$ over $\Sigma$ is a positive integer $\period < \abs{\warm}$ such that $\warm[i]=\warm[j]$ for all $i$ and $j$ 
with $1 \le i,j \le \abs{\warm}$ and $i \equiv j\pmod{p}$.
A word $\warm$ whose smallest period is at most $\gauss{\abs{\warm}/2}$ is called \intWort{periodic};
otherwise, $\warm$ is called \intWort{aperiodic}.
A \intWort{repetition} in a word $\warm$ is a periodic factor;
a \intWort{run} is a maximal repetition; the \intWort{exponent} of a run is the (rational) number of times the smallest period fits in that run. 
The exponent of a run~$\rep$ is denoted by $\exp(\rep)$.
The sum of the exponents of runs in the word~$\warm$ is denoted by $\sumExp{\warm}$.
We use the following results from the literature:

\begin{lemma}[{\cite{FineWilf65}}]\label{lemmaWeekPeriodicity}
Given a word~$\warm$ with two periods~$\period$ and~$\period'$ such that $\period+\period' \le \abs{\warm}$, 
the greatest common divisor~$\gcd(\period,\period')$ of~$\period$ and~$\period'$ is also a period of~$\warm$.
\end{lemma}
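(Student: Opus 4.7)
The plan is to prove this classical Fine--Wilf theorem by strong induction on $p+p'$, so that the inductive hypothesis can be applied to a strictly smaller pair of periods. Without loss of generality assume $p \ge p'$. The base case $p=p'$ is trivial because then $\gcd(p,p')=p$, which is already a period by assumption.

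For the inductive step I would show that the difference $d:=p-p'$ is itself a period of $\warm$. Given any index $i$ with $1 \le i$ and $i+d \le \abs{\warm}$, I want to conclude $\warm[i]=\warm[i+d]$ by chaining one application of the period $p$ and one of the period $p'$. There are two natural cases depending on whether the index is close to the left or right boundary: if $i > p'$, then $i-p'$ is a valid index and the chain $\warm[i+d]=\warm[i-p'+p]=\warm[i-p']=\warm[i]$ works (using period $p$ for the first equality and period $p'$ for the second); if $i \le p'$, then $i+p \le p'+p \le \abs{\warm}$ by hypothesis, and the chain $\warm[i]=\warm[i+p']=\warm[i+p]=\warm[i+d+p']=\warm[i+d]$ works symmetrically. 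This is the routine but bookkeeping-heavy part.

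Once $d$ is known to be a period, I would apply the induction hypothesis to the pair $(p',d)$: these sum to $p'+(p-p')=p \le \abs{\warm}$, so the hypothesis is satisfied. Induction yields that $\gcd(p',d)=\gcd(p',p-p')=\gcd(p,p')$ is a period of $\warm$, completing the argument.

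The main obstacle is making the index-range bookkeeping in the case split airtight, because the hypothesis $p+p' \le \abs{\warm}$ is precisely what keeps all the intermediate indices inside $\Int{1}{\abs{\warm}}$; without it, the chain of equalities can break. Everything else is elementary arithmetic on the greatest common divisor.
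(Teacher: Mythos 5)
The paper does not prove this statement at all: it is the classical Fine--Wilf periodicity lemma, imported from the literature~(\cite{FineWilf65}), so there is no in-paper argument to compare yours against. Your Euclidean-style induction on $\period+\period'$ is the standard proof of this weak form, and its skeleton is sound: once $d:=\period-\period'$ is known to be a period, the induction hypothesis applies to the pair $(\period',d)$ because $\period'+d=\period\le\abs{\warm}$ and this sum strictly decreased, giving $\gcd(\period',d)=\gcd(\period,\period')$. The first of your two chains (the case $i>\period'$) checks out.

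One link in the second chain is broken, though. For $i\le\period'$ you write $\warm[i]=\warm[i+\period']=\warm[i+\period]=\warm[i+d+\period']=\warm[i+d]$; the equality $\warm[i+\period']=\warm[i+\period]$ compares two positions that differ by exactly $d$, which is precisely the period you are in the middle of establishing, so as written this step is circular and unjustified. The repair is immediate: delete the detour through $i+\period'$ and argue $\warm[i]=\warm[i+\period]$ directly by the period~$\period$ (valid since $i+\period\le\period'+\period\le\abs{\warm}$), then $\warm[i+\period]=\warm[(i+d)+\period']=\warm[i+d]$ by the period~$\period'$, the middle step being an identity of indices. With that correction, and the routine observation that $1\le d<\period<\abs{\warm}$ so that $d$ is an admissible period under the paper's definition, your argument is complete.
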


\begin{figure}[ht]
	\MyFloatBox{%
				\begin{tikzpicture}[yscale=0.4,xscale=0.5]
		  \col[arm]{6}{$\LL\uarm$}
		  \newrow
		  \void{1}
		  \col[arm]{6}{$\RR\uarm$}
		  \newrow
		  \dist[$\delta$]{1}
		  \void{2}
		  \col[subarm]{4}{prefix of $\uarm$}
		  \newrow
		  \dist[$\period$]{3}
		  \newrow
		  \void{1}
		  \dist[$\period-\delta$]{2}
		  \dist[$>\period$]{4}
			\end{tikzpicture}
		}{%
		\caption{Setting of the proof of \Cref{lemmaRepetitiveOccs} with $\delta < p$. There are two occurrences~$\LL\uarm$ and $\RR\uarm$ of~$\uarm$ with an overlap of $2p-\delta$ characters.
			Both occurrences induce a run of period~$\delta$.
			There are at least three occurrences of $\uarm$'s prefix of length~$p+1$ (starting at $\ibeg{\LL\uarm}, \ibeg{\RR\uarm}$, and $\ibeg{\LL\uarm}+p$).
		}
		\label{figGappedRepetitiveOccs}
}%
\end{figure}
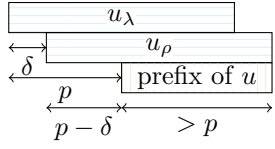

\begin{corollary}\label{lemmaRepetitiveOccs}
A periodic factor~$\uarm$ in a word~$\warm$ with the smallest period~$\period$ cannot have two distinct occurrences~$\LL\uarm$ and $\RR\uarm$ in~$\warm$ with
$\abs{\ibeg{\LL\uarm} - \ibeg{\RR\uarm}} < \period$.
\end{corollary}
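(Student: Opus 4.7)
The plan is to argue by contradiction. Suppose two distinct occurrences $\LL\arm$ and $\RR\arm$ of $\arm$ in $\warm$ exist with $\delta := \abs{\ibeg{\RR\arm} - \ibeg{\LL\arm}} < \period$, and after renaming if necessary assume $\ibeg{\LL\arm} < \ibeg{\RR\arm}$ so that $\delta \ge 1$. Because $\arm$ is periodic, its smallest period $\period$ satisfies $\period \le \gauss{\abs{\arm}/2}$, and therefore $\delta < \period \le \gauss{\abs{\arm}/2} < \abs{\arm}$. In particular, the two occurrences overlap along the positions $[\ibeg{\RR\arm}, \iend{\LL\arm}]$ of $\warm$, a stretch of $\abs{\arm} - \delta \ge 1$ consecutive positions.

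On this overlap, each position $k$ of $\warm$ equals both $\arm[k - \ibeg{\LL\arm} + 1]$ (reading it as a character of $\LL\arm$) and $\arm[k - \ibeg{\RR\arm} + 1]$ (reading it as a character of $\RR\arm$). Setting $j := k - \ibeg{\LL\arm} + 1$ rewrites this identity as $\arm[j] = \arm[j - \delta]$ for every $j$ with $\delta + 1 \le j \le \abs{\arm}$, so $\delta$ is itself a period of $\arm$ in the sense recalled in the preliminaries.

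Consequently $\arm$ admits the two periods $\period$ and $\delta$, and these satisfy $\period + \delta < 2\period \le \abs{\arm}$, so \Cref{lemmaWeekPeriodicity} applies and delivers $\gcd(\period, \delta) \le \delta < \period$ as a further period of $\arm$, contradicting the minimality of $\period$. Invoking Fine--Wilf here is really only for framing, since the plain observation that $\delta < \period$ is itself a period of $\arm$ already contradicts minimality; keeping \Cref{lemmaWeekPeriodicity} in the argument matches the corollary framing of the statement. The one technical step worth care is verifying the non-emptiness of the overlap, which is what motivates using the full hypothesis $\period \le \gauss{\abs{\arm}/2}$ rather than just $\delta < \period$.
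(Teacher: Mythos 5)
Your proof is correct, and its core step is the same as the paper's: both arguments observe that the overlap of the two occurrences forces $\uarm[j]=\uarm[j-\delta]$ on the overlapped range, so that $\delta<\period$ is a period of $\uarm$. Where you diverge is in how the contradiction is then extracted. You (rightly) point out that $\delta$ being a period with $\delta<\period$ already contradicts the minimality of $\period$, and your invocation of \Cref{lemmaWeekPeriodicity} on the pair $(\period,\delta)$ is indeed only decorative. The paper instead takes a longer detour: after obtaining $\delta$ as a period, it argues that $\period-\delta$ is \emph{also} a period (via an additional occurrence of a long prefix of $\uarm$ starting at $\ibeg{\LL\uarm}+\period-\delta$) and then applies \Cref{lemmaWeekPeriodicity} to the pair $(\delta,\period-\delta)$ to produce a period $\gcd(\delta,\period-\delta)<\period$. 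Both routes are valid; yours is shorter and makes explicit that the only hypotheses really used are $0<\delta<\period<\abs{\uarm}$ (non-emptiness of the overlap), whereas the paper's version leans on the full periodicity hypothesis $\period\le\gauss{\abs{\uarm}/2}$ to set up the Fine--Wilf application. Your care in checking that the overlap is non-empty is exactly the right technical point to verify.
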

\begin{proof}
	Since the smallest period of~$\uarm$ is~$\period$, $\abs{\uarm} > 2 \period$ holds.
	Assume for a contradiction that two distinct occurrences $\LL\uarm$ and~$\RR\uarm$ of $\uarm$ exist in~$\warm$ with
	a distance $\delta := \ibeg{\RR\uarm} - \ibeg{\LL\uarm}$ such that $0 < \delta < \period$ (see also \Cref{figGappedRepetitiveOccs}).
	Since $\abs{\LL\uarm \cap \RR\uarm} \ge 2p-\delta \ge p$, $\delta$ is a period of $\uarm$.
	Additionally, since~$\uarm$ has the smallest period~$\period$, 
	there is another occurrence of a prefix of~$\uarm$ starting at~$\ibeg{\LL\uarm}+\period-\delta$ with a length of at least~$\period+\delta > \period$.
	Hence, $\period-\delta$ is also a period of~$\uarm$.
	Because the sum of both periods~$\delta$ and~$\period-\delta$ is less than $\abs{\uarm}$, \Cref{lemmaWeekPeriodicity} states that
	$\gcd(\delta,\period-\delta) < \period$ is a period of~$\uarm$.
	This contradicts the fact that~$\period$ is the smallest period of~$\uarm$.
\end{proof}

\begin{lemma}[\cite{runstheorem}]\label{lemmaExponent}
	For a word $\warm$, $\sumExp{\warm} < 3\abs{\warm}$.
\end{lemma}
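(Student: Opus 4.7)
The plan is to follow the strategy of Bannai et al.~\cite{runstheorem}, which proves this inequality via the \emph{Lyndon root} technique. First, I would fix a total order~$<$ on~$\Sigma$ and, for each position~$i$ of~$\warm$, define the \emph{longest Lyndon extension} $\mathrm{LLE}_<(i)$ as the longest factor of~$\warm$ starting at~$i$ that is a Lyndon word under~$<$. For each run~$\rep$ of smallest period~$\period$, the unique rotation of $\substr{\warm}{\ibeg{\rep}}{\ibeg{\rep}+\period-1}$ that is Lyndon under~$<$ (called the \emph{Lyndon root} of~$\rep$) occurs either $\lfloor\exp(\rep)\rfloor$ or $\lfloor\exp(\rep)\rfloor-1$ times inside~$\rep$, so the count of Lyndon-root occurrences in~$\rep$ is $\exp(\rep)$ up to an additive constant.

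Next, I would employ the two-orderings trick: carry out the same argument for both~$<$ and its reverse~$>$. The central combinatorial claim is that every Lyndon-root occurrence inside a run, save for at most a constant number of boundary exceptions per run, coincides with $\mathrm{LLE}_\tau(i)$ for some position~$i$ and some ordering~$\tau\in\{<,>\}$. Since each pair~$(i,\tau)$ is used by at most one run, this caps the total number of interior Lyndon-root occurrences across all runs by~$2\abs{\warm}$.

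Putting the pieces together, summing over all runs gives $\sum_\rep \exp(\rep)$ bounded by the Lyndon-root count plus an $\Oh{1}$ slack per run; combined with the companion bound (proved by the same technique) that the number of runs is strictly less than~$\abs{\warm}$, this yields $\sumExp{\warm} < 3\abs{\warm}$.

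I expect the main obstacle to be the injectivity step in the two-orderings argument: showing that two distinct Lyndon-root occurrences coming from different runs cannot both be mapped to the same pair~$(i,\tau)$. This relies on periodicity reasoning in the spirit of~\Cref{lemmaWeekPeriodicity} and~\Cref{lemmaRepetitiveOccs}, and must be carried out with care at positions where two runs of compatible periods overlap.
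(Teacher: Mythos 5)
The paper offers no proof of this statement at all: it is imported verbatim, with a citation, from the ``runs theorem'' of Bannai et al., so there is no in-paper argument to compare yours against --- what you are sketching is a reproof of that external result. Your overall strategy (Lyndon roots of runs, longest Lyndon extensions, the two opposite orderings) is indeed the one used in the cited source.

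There is, however, a quantitative gap that prevents your plan, as stated, from reaching the constant $3$. By your own accounting, a run~$\rep$ of exponent $e$ and period~$\period$ contributes at least $\gauss{e}-1 > e-2$ counted Lyndon-root occurrences, so the per-run slack between $\exp(\rep)$ and the counted occurrences is essentially $2$, not $1$. Combining this with your cap of $2\abs{\warm}$ on the total count (injectivity only over \emph{pairs}~$(i,\tau)$) and with the number of runs being less than $\abs{\warm}$ yields only $\sumExp{\warm} < 2\abs{\warm} + 2\abs{\warm} = 4\abs{\warm}$. The argument of Bannai et al.\ closes this gap by proving the strictly stronger statement that the retained Lyndon-root starting positions of distinct runs are pairwise disjoint as subsets of \emph{positions} in $\Int{2}{\abs{\warm}}$, not merely as (position, ordering) pairs; the key auxiliary fact is that no position can start a longest Lyndon word of length at least two under both orderings (if $\warm\Int{k}{m}$ is Lyndon under $<$ then $\warm[k] < \warm[m]$, while if $\warm\Int{k}{m'}$ with $m' \ge m$ is Lyndon under the reverse order then $\warm[k] \ge \warm[m]$, a contradiction). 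This gives a cap of $\abs{\warm}-1$ instead of $2\abs{\warm}$, whence $\sumExp{\warm} < (\abs{\warm}-1) + 2(\abs{\warm}-1) < 3\abs{\warm}$; it also delivers the bound on the number of runs as a corollary of the very same disjointness, rather than as an independently proved ``companion bound'' to be cited. Without upgrading your injectivity claim from pairs to positions, the plan does not establish the stated inequality.
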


Instead of working with triples of integers~$(\LL{i}, \RR{i}, \Sarm)$ as in \Cref{secIntro} when representing gapped repeats and palindromes, 
we stick to pairs of segments~$\tuple{\substr{\warm}{\LL{i}}{\LL{i}+\Sarm-1},\substr{\warm}{\RR{i}}{\RR{i}+\Sarm-1}}$ for convenience:
For a word $\warm$, we call a pair of segments $(\LL\arm,\RR\arm)$ a \intWort{gapped repeat} (resp.\ \intWort{gapped palindrome})
with \intWort{period} $\qvar = \ibeg{\RR\arm} - \ibeg{\LL\arm}$ iff 
\begin{itemize}
	\item $\ibeg{\LL\arm}+1 \le \ibeg{\RR\arm}$ and $\RR\arm = \LL\arm$ in the case of a gapped repeat, or
	\item $\ibeg{\LL\arm} \le \ibeg{\RR\arm}$ and $\RR\arm = \pali{\LL\arm}$ in the case of a gapped palindrome (it is possible that $\LL\arm \equiv \RR\arm$).
\end{itemize}
The segments $\LL\arm$ and $\RR\arm$ are called left and right \intWort{arm}, respectively.
The value~$\ibeg{\RR\arm}-\iend{\LL\arm} -1$ is called the \intWort{gap}, and is the distance between both arms in case that it is positive.
For $\alpha\geq 1$, the gapped repeat or gapped palindrome $(\LL\arm,\RR\arm)$ is called \intWort{$\alpha$-gapped} 
iff its period~$\qvar$ is at most $\alpha \abs{\LL\arm}$.

Given a gapped repeat $(\LL\arm,\RR\arm)$, it is called \intWort{maximal} 
iff the characters to the immediate left and to the immediate right of its arms differ (as far as they exist), i.e.,
\begin{itemize}
	\item $\warm[\ibeg{\LL\arm}-1] \not= \warm[\ibeg{\RR\arm}-1]$ (or $\ibeg{\LL\arm} = 1$) and 
	\item $\warm[\iend{\LL\arm}+1] \not= \warm[\iend{\RR\arm}+1]$ (or $\iend{\RR\arm} = \abs{\warm}$).
\end{itemize}
Similarly, a gapped palindrome $(\LL\arm,\RR\arm)$ is called \intWort{maximal} 
iff it can be extended neither inwards nor outwards, i.e.,
\begin{itemize}
	\item $\warm[\ibeg{\LL\arm}-1] \not= \warm[\iend{\RR\arm}+1]$ (or $\ibeg{\LL\arm} =1$ or $\iend{\RR\arm}=n$) and
	\item $\warm[\iend{\LL\arm}+1] \not= \warm[\ibeg{\RR\arm}-1]$. 
\end{itemize}
Let $\grGR$ (resp.\ $\gpGR$) denote the set of all maximal $\alpha$-gapped repeats (resp.\ palindromes) in~$\warm$.

Gapped palindromes generalize the definition of ordinary palindromes:
A gapped palindrome $(\LL\arm,\RR\arm)$ is an \intWort{ordinary palindrome} if $\LL\arm \equiv \RR\arm$.
For a maximal gapped palindrome with a gap~$\ibeg{\RR\arm}-\iend{\LL\arm} -1 \le 1$ it follows that $\LL\arm \equiv \RR\arm$ (otherwise it could be extended inwards).

  \begin{figure}[ht]
	\MyFloatBox{%
\begin{tikzpicture}
\pgfmathsetmacro\Mgamma{7/9}
\pgfmathsetmacro\Mnums{6}
\pgfmathsetmacro\MgammaCount{int(floor(\Mnums*\Mgamma))}

\foreach \x in {1,...,\Mnums} {%
    \draw [thin,color=gray!80,dotted] (\x,0) -- (\x,\Mnums);
    \draw (\x,-4pt) -- (\x,4pt)
        node [below,yshift=-2ex] {\x};
}
\foreach \y in {1,...,\Mnums} {%
    \draw [thin,color=gray!80,dotted] (0,\y) -- (\Mnums,\y);
    \draw (-4pt,\y) -- (4pt,\y)
        node [left,xshift=-2ex] {\y};
}
\foreach \y in {1,...,\MgammaCount} {%
    \pgfmathsetmacro\yresult{\y / \Mgamma}
	\draw [color=gray!50] (-3ex,\yresult) -- (4pt,\yresult)
        node [left,color=gray,xshift=-5ex] {\pgfmathprintnumber{\yresult}};
    \draw [thin,color=gray!50] (0,\yresult) -- (\Mnums,\yresult);
}

\foreach \y in {1,3,...,\Mnums} {%
	\pgfmathsetmacro\Mxmod{mod((\y+1)/2,2)+1}
	\pgfmathsetmacro\MxmodP{mod((\y+1)/2,2)+3}
	\foreach \x in {\Mxmod,\MxmodP,...,\Mnums} {%
		\pgfmathsetmacro\mycolorY{mod((\y*7+\x*11*\Mgamma),220)}
		\pgfmathsetmacro\mycolorX{mod((\y*\y*3*\Mnums+\x*31),220)}
		\pgfmathsetmacro\mycolorZ{mod((\y*\y*7*\Mnums+\x*\x*17),220)}
		\definecolor{mycolor}{RGB}{\mycolorX,\mycolorY,\mycolorZ}
		\pgfmathsetmacro\xmin{\x - \y*\Mgamma}
		\pgfmathsetmacro\ymin{\y*(1-\Mgamma)}
		\draw [mycolor,densely dashdotted] (\xmin,\ymin) rectangle (\x,\y);
		\filldraw [mycolor] (\x,\y) circle (2pt);
	}
}

    \draw [<->,thick] (0,\Mnums+1) node (yaxis) [above] {$y$}
        |- (\Mnums+1,0) node (xaxis) [right] {$x$};

\end{tikzpicture}
}{%
	\caption{$7/9$-cover of the points $\lbrace (2x - (y+1 \mod 2), 2y - 1) \mid 1 \le x,y \le 3 \rbrace \subset \N^2$. 
		The dash-dotted rectangle of a point~$\vec{p}$ comprises all points that are $7/9$-covered by~$\vec{p}$ 
		(the rectangle of~$\vec{p}$ is the rectangle that has~$\vec{p}$ as its top right vertex).
		A point~$(x,y)$ with $y=1$ only $7/9$-covers itself.
		The light-gray dotted lines create the grid~$\N^2$. 
		Each value of $i/\gamma$ for $\gamma := 7/9$ and $i \ge 1$ on the $y$-axis is indicated with a gray horizontal line.
	}
  	\label{figPointCover}
}%
  \end{figure}
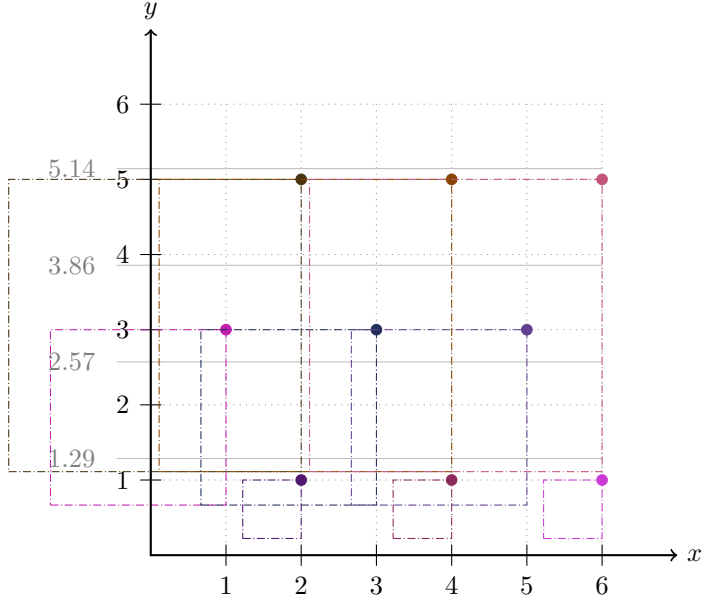
  \section{Improved Point Analysis}\label{secPointAnalysis}
A pair of integers is called a \intWort{point}.
In~\cite{gawrychowski18tighter}, a certain subset~$\SubCover$ of maximal $\alpha$-gapped repeats and maximal $\alpha$-gapped palindromes are mapped to points injectively.
The cardinality of~$\SubCover$ is estimated with the property that 
every point of $\SubCover$ has a large vicinity that does not contain another point of $\SubCover$.
This vicinity is given formally by the following definition:

\begin{definition}
For a real number~$\gamma$ with $\gamma \in (0,1]$, we say that a point $(\hat{x}, \hat{y}) \in \Z^2$ $\gamma$-\intWort{covers} a point $(x, y) \in \Z^2$
iff $\hat{x} - \gamma \hat{y} \le x \le \hat{x}$ and $\hat{y} (1 - \gamma) \le y \le \hat{y}$.
\end{definition}

  	\Cref{figPointCover} gives an example for $\gamma := 7/9$.
In Lemma~7 of~\cite{gawrychowski18tighter}, it is shown that $\abs{\SubCover} < 3n / \gamma$ holds for every set of points $\SubCover \subseteq \Int{1}{n}^2$ with the property that
no two distinct points in~$\SubCover$ $\gamma$-cover the same point.
In the following, we devise an improved version of this lemma to upper bound the number of the $\beta$-aperiodic repeats/palindromes.

\TextFigureBlock{0.8}{%
For our purpose, it is sufficient to focus on the set
$\CoverN{n} := \menge{ (x, y) \mid 1 \le y \le n-1 \text{~and~} 1 \leq x \le n - y }$, 
since we will later show that we can map all maximal $\alpha$-gapped repeats/palindromes to the set
injectively.
Before that, we introduce two small helper lemmas that improve an inequality needed in~\Cref{lemmaPoints}:
}{%
\begin{tikzpicture}[scale=0.6]
    \draw [<->,thick] (0,3) node (yaxis) [above] {$y$}
        |- (3,0) node (xaxis) [right] {$x$};
    \draw (0.5,0.5) coordinate (c) -- (2.5,0.5) coordinate (xmax);
    \draw (c) -- (0.5,2.5) coordinate (ymax);
    \draw (xmax) -- (ymax);

	\node at (1,1) {$\CoverN{n}$};

	\draw[dashed] (yaxis |- ymax) node[left] {$n$} -- (ymax);
	\draw[dashed] (xaxis -| xmax) node[below] {$n$} -- (xmax);

     \draw[dashed] (yaxis |- c) node[left] {$1$}
         -| (xaxis -| c) node[below] {$1$};
\end{tikzpicture}
}%

\begin{lemma}\label{lemmaOneOverGammaIntervalAbs}
	Given a real interval~$\intervalI := [\psi-1/\gamma,\psi)$ with $\gamma,\psi\in \RealNumber$ and $0< \gamma \le 1$, 

	\TextFigureBlock{0.5}{%
	\[
	\abs{\intervalI \cap \Z} =
	\begin{cases}
		\gauss{1/\gamma}+1 & \text{if~} 0 < \psi - \gauss{\psi} \le \delta, \\
		\gauss{1/\gamma} & \text{otherwise},
	\end{cases}
\]
	where $\intervalI \cap \Z = \menge{ i \in \Z \mid i \in \intervalI}$ and $\delta := 1/\gamma - \gauss{1/\gamma}$.
}{%
\begin{tikzpicture}
	\node [color=solarizedViolet] at (2.3,-1.2) {$\gamma = \frac{7}{9}$};
	\node [color=solarizedViolet] at (2.3,-0.8) {$\gamma\psi \in \N$};
    \draw [->] (0,0) -- (4.2,0);
    \foreach \x/\xtext in {0/0,0.5/1,1/2,1.5/3,2/4,2.5/5,3/6,3.5/7,4/8}
      \draw(\x,5pt)--(\x,-5pt) node[below] {\xtext};
	  \foreach \x/\xtext in {0/$\frac{0}{\gamma}$,1.29/$\frac{1}{\gamma}$,2.57/$\frac{2}{\gamma}$,3.85/$\frac{3}{\gamma}$} %
      \draw [color=solarizedViolet,thick] (\x,-5pt)--(\x,5pt) node[above] {\xtext};
	
    \draw[|<->|, yshift=-5ex]  (1,0) -- node[below=0.4ex] {$\delta$}  (1.29,0);
     \draw[-]  (3.5,-0.6) -- node[anchor=south,below=0.2em] {$\gauss{\frac{3}{\gamma}}$}  (3.5,-0.7);
\end{tikzpicture}
}%

\end{lemma}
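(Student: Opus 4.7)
The plan is to apply the standard identity $\abs{[a,b) \cap \Z} = \upgauss{b} - \upgauss{a}$ for real $a \le b$ (the smallest integer at least~$a$ is $\upgauss{a}$ and the largest integer strictly below~$b$ is $\upgauss{b} - 1$, whenever such integers exist). Specialising to $a := \psi - 1/\gamma$ and $b := \psi$ reduces the lemma to evaluating the difference $\upgauss{\psi} - \upgauss{\psi - 1/\gamma}$ in terms of the fractional parts of~$\psi$ and of~$1/\gamma$.

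Introduce $f := \psi - \gauss{\psi} \in [0,1)$ for the fractional part of~$\psi$, so that $\upgauss{\psi}$ equals $\gauss{\psi}$ when $f = 0$ and $\gauss{\psi} + 1$ when $f > 0$. Using $\delta = 1/\gamma - \gauss{1/\gamma} \in [0,1)$, I would rewrite
\[
\psi - 1/\gamma \;=\; \bigl(\gauss{\psi} - \gauss{1/\gamma}\bigr) + (f - \delta),
\]
and note that $f - \delta \in (-1, 1)$, hence the ceiling of the left-hand side equals $\gauss{\psi} - \gauss{1/\gamma}$ when $f \le \delta$ and $\gauss{\psi} - \gauss{1/\gamma} + 1$ when $f > \delta$.

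Combining the two evaluations finishes the proof via a short three-way case split on~$f$ relative to~$\delta$. When $f = 0$, the count is $\gauss{\psi} - (\gauss{\psi} - \gauss{1/\gamma}) = \gauss{1/\gamma}$. When $0 < f \le \delta$, only the upper ceiling is bumped, producing $\gauss{1/\gamma} + 1$. When $\delta < f < 1$, both ceilings are bumped by one and the difference remains $\gauss{1/\gamma}$. These three outcomes match the piecewise formula in the statement exactly.

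I expect the only obstacle to be careful bookkeeping with the boundary cases: one has to separate $f = 0$ from $f > 0$ when dropping the ceiling on~$\psi$, and to verify that the degenerate situation $\delta = 0$ (that is, $1/\gamma \in \N$) makes the exceptional region $(0, \delta]$ vacuous, so that the count is uniformly $\gauss{1/\gamma} = 1/\gamma$, which is still consistent with the stated formula.
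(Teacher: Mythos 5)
Your proof is correct and follows essentially the same route as the paper: both arguments reduce to the decomposition $\psi - 1/\gamma = (\gauss{\psi} - \gauss{1/\gamma}) + (f - \delta)$ and the same three-way case split on the fractional part $f$ of $\psi$ relative to $\delta$. Packaging the count as $\upgauss{\psi} - \upgauss{\psi - 1/\gamma}$ is merely a tidier way of writing out the explicit integer ranges the paper lists in each case.
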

\begin{proof}
	In the case that $\psi = \gauss{\psi}$ (i.e., $\psi \in \Z$), $\ibeg{\intervalI} = \psi - 1/\gamma \le \psi - \gauss{1/\gamma} \in \intervalI \cap \Z$.
	Hence, $\{\psi - \gauss{1/\gamma}, \ldots, \psi -1\} = \intervalI \cap \Z$, and $\abs{\intervalI \cap \Z} = \gauss{1/\gamma}$.

In the case that $0 < \psi - \gauss{\psi} \le \delta$, we have $\psi - \delta \le \gauss{\psi}$, and therefore
$\ibeg{\intervalI} = \psi - 1/\gamma  = \psi - \gauss{1/\gamma} - \delta \le \gauss{\psi} - \gauss{1/\gamma} \in \intervalI \cap \Z$.
Hence, $\menge{\gauss{\psi} - \gauss{1/\gamma}, \ldots, \gauss{\psi}} = \intervalI \cap \Z$, 
and $\abs{\intervalI \cap \Z} = \gauss{1/\gamma}+1$ (because $\gauss{\psi} < \psi$).

The remaining case is that $\psi - \gauss{\psi} > \delta$.
With $\gauss{\psi} < \psi - \delta = \psi - 1/\gamma + \gauss{1/\gamma}$, we obtain that
$\ibeg{\intervalI} = \psi - 1/\gamma > \gauss{\psi} - \gauss{1/\gamma} \not\in \intervalI \cap \Z$.
Hence, $\menge{\gauss{\psi} - \gauss{1/\gamma}+1, \ldots, \gauss{\psi}} = \intervalI \cap \Z$, and $\abs{\intervalI \cap \Z} = \gauss{1/\gamma}$.
\end{proof}

\begin{lemma}\label{lemmaNonIncreasingFunction}
	Given the function
	$g : \N \rightarrow \N$
	with $g(i) := \abs{\menge{y \in \N \mid (i-1)/\gamma \le y < i/\gamma}}$ for $1 \le i \le \upgauss{n\gamma}$,
	and a nonincreasing function~$f : \N \rightarrow \RealNumber$, the inequality
	\begin{equation}\label{eqGappedNonIncreasingFunction}
		\sum_{i = 1}^{\lceil n \gamma \rceil} (f(i) g(i)) \leq \sum_{i = 1}^{\lceil n \gamma \rceil} f(i) / \gamma
	\end{equation}
	holds for every natural number~$n$ and every real number~$\gamma \in (0,1]$. %
\end{lemma}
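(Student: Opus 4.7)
The plan is to reduce the inequality to a pointwise comparison of partial sums via summation by parts.

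First, I would establish the key partial-sum bound: for every $k \in \menge{1, \ldots, \upgauss{n\gamma}}$,
\[
G_k := \sum_{i=1}^{k} g(i) \le k/\gamma.
\]
Since the real intervals $[(i-1)/\gamma,\, i/\gamma)$ for $i = 1, \ldots, k$ are pairwise disjoint and their union equals $[0,\, k/\gamma)$, $G_k$ counts exactly those natural numbers contained in $[0,\, k/\gamma)$, a count bounded above by the length $k/\gamma$ of the interval.

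Next, writing $N := \upgauss{n\gamma}$ and $T_k := k/\gamma$, I would apply Abel summation (summation by parts). With the convention $G_0 = T_0 = 0$, this yields
\[
\sum_{i=1}^{N} f(i)\, g(i) = f(N)\, G_N + \sum_{i=1}^{N-1} (f(i) - f(i+1))\, G_i,
\]
and the completely analogous identity holds with $g(i)$ replaced by the constant $1/\gamma$ and $G_k$ by $T_k$, giving
\[
\sum_{i=1}^{N} f(i)/\gamma = f(N)\, T_N + \sum_{i=1}^{N-1} (f(i) - f(i+1))\, T_i.
\]

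Finally, I would compare the two expressions summand by summand. Because $f$ is non-increasing, $f(i) - f(i+1) \ge 0$ for each $i \in \{1, \ldots, N-1\}$, and together with $G_i \le T_i$ from the first step this dominates each inner summand; the boundary contribution $f(N) G_N \le f(N) T_N$ then closes the argument and yields the claimed inequality. The main obstacle is nailing down the partial-sum inequality $G_k \le k/\gamma$ with the correct treatment of interval endpoints (and, if one wishes to be rigorous about the boundary term, a non-negativity condition at $i=N$ inherited from the way $f$ arises in the application); once these two points are handled, the summation-by-parts comparison is routine.
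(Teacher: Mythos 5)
Your proof is correct (modulo one caveat you yourself flag) but takes a genuinely different route from the paper. The paper proves \cref{eqGappedNonIncreasingFunction} by an amortized ``credit game'': in round~$i$ it collects a credit of $\delta f(i)$ with $\delta := 1/\gamma - \gauss{1/\gamma}$, pays $f(i)$ whenever $g(i) = \gauss{1/\gamma}+1$, and shows by a somewhat delicate induction that the sorted indices $i_1 < i_2 < \cdots$ with $g(i_j) = \gauss{1/\gamma}+1$ satisfy $\delta i_j > j$, so the credit never goes negative. Your argument isolates the same combinatorial content in the single clean statement $G_k := \sum_{i=1}^{k} g(i) \le k/\gamma$, obtained immediately from the fact that the half-open intervals $[(i-1)/\gamma, i/\gamma)$ tile $[0, k/\gamma)$ (so $G_k = \upgauss{k/\gamma}-1 < k/\gamma$, using that $0 \notin \N$ here, consistent with the paper's remark that $Y_1$ cannot contain zero); Abel summation then does the rest. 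This is shorter and more transparent --- the paper's claim $\delta i_j > j$ is essentially a disguised form of $G_k \le k/\gamma$. One point deserves emphasis: your boundary term $f(N) G_N \le f(N) T_N$ needs $f(N) \ge 0$, and indeed the lemma as literally stated is false for a negative constant $f$ (e.g.\ $\gamma = 1/2$, $f \equiv -1$, $N=2$ gives $-3 \not\le -4$), so some nonnegativity assumption is genuinely required. You flag this explicitly; the paper's credit game silently assumes $f \ge 0$ as well (credits $\delta f(i)$ must be nonnegative for the argument to work). Since the lemma is only applied with $f(i) = n/i^2 > 0$, neither proof is harmed, but your version makes the hidden hypothesis visible and in fact needs it only at $i = \upgauss{n\gamma}$.
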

\begin{proof}
	We set $Y_i := \menge{y \in \N \mid (i-1)/\gamma \le y < i/\gamma}$.
	Our task is to upper bound the sizes of~$Y_i$, since $g(i) = \abs{Y_i}$.
	It is clear that $\abs{Y_i} \le \gauss{1 / \gamma}+1$. 
	Since~$Y_1$ cannot contain zero, it holds that $\abs{Y_1} \le \gauss{1 / \gamma}$ (if $1/\gamma \in \N$ then $\abs{Y_1} = 1 / \gamma -1$, otherwise $\abs{Y_1} = \gauss{1 / \gamma}$).
	For $i \ge 2$, \Cref{lemmaOneOverGammaIntervalAbs} provides that
	\begin{equation}\label{eqGappedGhighest}
		\abs{Y_i} = \gauss{1/\gamma} + 1 \text{~iff~} 0 < i/\gamma - \gauss{i/\gamma} \le \delta, \text{~where~} \delta := 1 / \gamma - \gauss{1 / \gamma} < 1.
	\end{equation}
Having \cref{eqGappedGhighest}, \cref{eqGappedNonIncreasingFunction} is a conclusion of the following game estimating the cumulative sum of $f(i) / \gamma - f(i) g(i)$:
	The game is divided in $\upgauss{n\gamma}$ rounds. In the $i$-th round ($1 \le i \le \upgauss{n\gamma}$),
    we receive a credit of $(1 / \gamma - \gauss{1 / \gamma}) f(i) = \delta f(i)$,
	but we additionally pay $f(i)$ from the credit when $g(i) = \gauss{1/\gamma} + 1$.
    If the credit does not become negative, 
	it holds that $\sum_{i = 1}^{\lceil n \gamma \rceil} (f(i) g(i)) \leq \sum_{i = 1}^{\lceil n \gamma \rceil} f(i) / \gamma$ (which is what we want to show in this proof).

    Let $i_1, i_2, \dots$ be the sequence of integers such that $g(i_j) = \gauss{1/\gamma} + 1$ for each~$j$.
	After sorting this sequence ascendingly, it holds that $\delta i_j > j$ for every $j$.
	To see this, we write 
	$i/\gamma - \gauss{i/\gamma} = i/\gamma - i \gauss{1/\gamma} - \gauss{i/\gamma - i\gauss{1/\gamma}} = \delta i - \gauss{\delta i}$,
	and apply \cref{eqGappedGhighest}:
	First, $\delta i_1 \ge 1$, since otherwise ($\delta i_1 < 1$) we obtain a contradiction to \cref{eqGappedGhighest} with
	$\delta i_1 - \gauss{\delta i_1} = \delta i_1 > 2 \delta$ (remember that $i_1 \ge 2$ because $\abs{Y_1} \le \gauss{1/\gamma}$).
	Next, assume that there exists a $j \ge 2$ such that $j \le \delta i_j < \delta i_{j+1} < j+1$.
	Then $\delta i_{j+1} - \gauss{\delta i_{j+1}} \ge \delta (i_j + 1) - \gauss{\delta i_j} > \delta$ (since $\delta i_j - \gauss{\delta i_j} > 0$), a contradiction that 
	\cref{eqGappedGhighest} holds for $i_{j+1}$.
	We conclude that $\delta i_j > j$ for every~$j$.
	
	Back to our game, we claim that there is at least $(\delta i_j - j) f(i_j)$ credit remaining after the $i_j$-th round.
	When reaching the $i_1$-th round, we have already gathered a credit of~$\sum_{i=1}^{i_1} \delta f(i)$. 
	Remember that we have to pay the amount $f(i_1)$.
	From our gathered credit we can pay $f(i_1)$ with $s := \delta f(1) + \delta f(2) + \cdots + \delta f(i_1 - 1) + (1 - \delta (i_1 - 1)) f(i_1)$:
	First, $s$ is smaller than our gathered credit, since $f(i_1) < \delta i_1 f(i_1)$, and hence $(1-\delta(i_1-1))f(i_1) < \delta f(i_1)$.
	Second, $s \ge f(i_1)$, because $\delta (i_1-1) f(i_1) \le \sum_{i=1}^{i_1-1} \delta f(i)$ (remember that $f$ is nonincreasing).
	By paying the amount~$s$, a credit of at least $f(i_1) (\delta i_1 - 1)$ remains.

	Under the assumption that our claim holds after the $i_j$-th round for an integer~$j \in \N$, we show that the claim holds after the \mbox{$i_{j+1}$-th} round, too:
	According to our assumption, 
	we have gathered a credit of at least $(\delta i_j - j) f(i_j) + \sum_{i=i_j+1}^{i_{j+1}} \delta f(i)$ at the beginning of the $i_{j+1}$-th round.
	We pay the amount~$f(i_{j+1})$ with
	$s := (\delta i_j - j) f(i_j) + \delta f(i_j + 1) + \cdots + \delta f(i_{j+1} - 1) + (j+1 - \delta (i_{j+1} - 1)) f(i_{j+1})$.
	First, $s$ is smaller than our gathered credit, since $\delta i_{j+1} > j+1$, and hence $(j+1 - \delta (i_{j+1} - 1)) f(i_{j+1}) < \delta f(i_{j+1})$.
	Second, $s \ge f(i_{j+1})$, because
	$\delta (i_{j+1}-1) f(i_{j+1}) \le (\delta i_j - j) f(i_j) + j f(i_{j+1}) + \sum_{i=i_j+1}^{i_{j+1}-1} \delta f(i)$.
	Similar to the $i_1$-th round, a credit of at least~$(\delta i_{j+1} - j - 1) f(i_{j+1})$ remains.
\end{proof}

\begin{lemma}\label{lemmaPoints}
Let $\gamma$ be a real number with $\gamma \in (0,1]$, and
$\SubCover \subseteq \CoverN{n}$ be a set of points such that no two distinct points in~$\SubCover$ $\gamma$-cover the same point.
Then $\abs{\SubCover} < n \pi^2 / (6 \gamma)$. 
In particular, $\abs{\SubCover} \le n \pi^2/6 - 3n/4$ for $\gamma = 1$.
\end{lemma}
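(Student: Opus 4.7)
My plan is to partition $\SubCover$ by the $y$-coordinate bucket $B_i:=\{\hat y\in\N : (i-1)/\gamma\le \hat y<i/\gamma\}$ for $i=1,\dots,\upgauss{n\gamma}$, matching the indexing of \Cref{lemmaNonIncreasingFunction}, and write $S_i := \SubCover\cap(\Z\times B_i)$. The geometric observation underlying everything is that a point $(\hat x,\hat y)\in\SubCover$ with $\hat y\in B_i$ has $\gamma\hat y\in[i-1,i)$, so its $\gamma$-covered rectangle $[\hat x-\gamma\hat y,\hat x]\times[\hat y(1-\gamma),\hat y]$ contains exactly the $i\times i$ integer block $\{\hat x-i+1,\dots,\hat x\}\times\{\hat y-i+1,\dots,\hat y\}\subseteq\Z^2$; the no-shared-cover hypothesis then makes these $i^2$-blocks pairwise disjoint over all of $\SubCover$.

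I would first establish a row-wise packing bound: at a fixed row $\hat y\in B_i$, any two points there have coinciding $y$-grids, so their $x$-intervals of length $i$ must be pairwise disjoint inside $[1,n-\hat y]$, giving $N_{\hat y}\le(n-\hat y-1)/i+1$. This within-row estimate alone only yields a logarithmic bound in $n$, so the key refinement is to chain disjointness across the $i$ rows of each point's $y$-grid and across buckets whose $y$-grids overlap. After this cross-row step, the total contribution of bucket $i$ can be bounded by a nonincreasing function $f(i)$ (up to lower-order corrections) for which $\sum_{i}f(i)\le n\pi^2/6$; the canonical choice is $f(i):=n/i^2$. Invoking \Cref{lemmaNonIncreasingFunction} then gives
\[
\abs{\SubCover}=\sum_{i=1}^{\upgauss{n\gamma}}\abs{S_i}\le\sum_{i=1}^{\upgauss{n\gamma}} f(i)g(i)\le\sum_{i=1}^{\upgauss{n\gamma}} f(i)/\gamma=\frac{n}{\gamma}\sum_{i=1}^{\upgauss{n\gamma}}\frac{1}{i^2}<\frac{n\pi^2}{6\gamma}.
\]

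For the refinement at $\gamma=1$, the buckets collapse: $g(1)=0$ (since the only candidate $\hat y=0$ is excluded from $\CoverN{n}$) and $g(i)=1$ for $i\ge 2$, so the $i=1$ term of $\sum 1/i^2$ drops out automatically. The further improvement of $-3n/4$ comes from a tighter treatment of the smallest nontrivial buckets, where the triangular constraint $\hat x\le n-\hat y$ restricts the admissible $x$-range well below the generic $[1,n-\hat y]$ bound used in the general argument. I anticipate the hardest step to be the cross-row packing: the naive within-row bound loses a factor of $i$ (yielding only $O(n\log(n\gamma)/\gamma)$), and recovering it requires a careful area-style argument on $\sum_p i_p^2$ over the triangular region $\CoverN{n}$, packaged so that the resulting per-bucket estimate meshes cleanly with the nonincreasing-function sum of \Cref{lemmaNonIncreasingFunction}.
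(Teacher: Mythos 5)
Your proposal correctly identifies the two structural ingredients the paper uses -- the fact that a point of $\SubCover$ in $y$-bucket $i$ $\gamma$-covers exactly an $i\times i$ integer block, that these blocks are pairwise disjoint by hypothesis, and that \Cref{lemmaNonIncreasingFunction} converts $\sum_i f(i)g(i)$ into $\sum_i f(i)/\gamma$ -- but the step that actually turns disjointness into the bound $n/i^2$ per row is precisely the step you defer (``I anticipate the hardest step to be the cross-row packing\dots requires a careful area-style argument''), so the proof is not complete. The paper's mechanism is a charging argument: each covered point receives weight $1/i^2$ where $i$ is the bucket of its \emph{coverer}; since each coverer distributes total weight exactly $1$ over its block, $\abs{\SubCover}$ \emph{equals} the total weight, and the total weight is then bounded \emph{row by row} (not bucket by bucket): a point in row $y$ with row-bucket $i$ can only be covered by a point with $\hat{y}\ge y$, hence weight at most $1/i^2$, and the boundary columns $x\le 1$ contribute at most $1/i$ in total, giving $\sum_x \weight[x,y]\le n/i^2$. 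Your intermediate inequality $\abs{S_i}\le g(i)\,n/i^2$ is a different (per-bucket) statement that the paper neither proves nor needs: the $i\times i$ blocks of bucket-$i$ points extend $i-1$ rows \emph{below} the bucket, so a naive area argument confined to bucket $i$ only yields $\abs{S_i}\le n(g(i)+i-1)/i^2$, and blocks from different buckets compete for the same rows. The row-indexed weight bound is exactly what resolves this, and it is absent from your write-up.

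The $\gamma=1$ refinement is also not what you describe. For $\gamma=1$ the row $y=0$ \emph{can} be covered (this is the only case where it can), and it contributes up to $n/2^2$; the paper's bound is $n/4+n\sum_{y\ge 1}1/(y+1)^2 = n\pi^2/6-3n/4$. Your sketch, which drops the $i=1$ term and ignores $y=0$, would give $n(\pi^2/6-1)$, and your attribution of the $-3n/4$ to ``a tighter treatment of the smallest nontrivial buckets'' via the triangular constraint does not match any step of the actual argument. So both the main inequality and the $\gamma=1$ refinement have genuine gaps as written.
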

\begin{proof}
	Given that a point~$\vec{p}$ in $\Z^2$ is $\gamma$-covered by a point $(\hat{x}, \hat{y})$ of $\SubCover$ with
	$(i-1) / \gamma \leq \hat{y} < i / \gamma$ for a positive integer $i$, we assign~$\vec{p}$ the weight $1/i^2$.
		Otherwise ($\vec{p}$ is not $\gamma$-covered by any point of~$\SubCover$), we assign $\vec{p}$ the weight zero.
	Let us fix a point $(\hat{x}, \hat{y}) \in \SubCover$ with $(i-1) / \gamma \leq \hat{y} < i / \gamma$ for an integer $i$.
	We have
	$\hat{x} - i < \hat{x} - \gamma \hat{y} \le \hat{x} - (i-1)$, and these inequalities also hold when substituting $\hat{x}$ with $\hat{y}$, i.e.,
	$\hat{y} - i < \hat{y} - \gamma \hat{y} \le \hat{y} - (i-1)$.
	There are exactly $i^2$ points~$(x,y) \in \Z^2$ that are $\gamma$-covered by~$(\hat{x},\hat{y})$, since for each of them
	it holds that
	$\hat{x} - i < \hat{x} - \gamma \hat{y} \le \hat{x} - (i-1) \le x \le \hat{x}$ and
	$\hat{y} - i < \hat{y} - \gamma \hat{y} \le \hat{y} - (i-1) \le y \le \hat{y}$.
	Therefore,
	the sum of the weights of the points that are $\gamma$-covered by $(\hat{x}, \hat{y})$ is one.
	As a consequence, the size of $\SubCover$ is equal to the sum of the weights of all points in $\Z^2$.
	In the following, let $\weight[\vec{p}]$ denote the weight of a point~$\vec{p}$.
    In what follows, we upper bound the sum of all weights.

    First, we fix an integer $y$ with $1 \leq y \leq n$, and show that the sum of the weights of all points $(\cdot, y)$ is less than $n / i^2$, where
    $i$ is the integer with $(i-1) / \gamma \leq y < i / \gamma$.
	Given an integer~$x \in \Z$, 
	we conclude by the definition of $\CoverN{n}$ that 
	\[
		\weight[x, y]
	\begin{cases}
	\leq 1/i^2 & \text{for~} 1 \leq x < n - y, \text{and} \\
	= 0 & \text{for~} x \geq n - y.
	\end{cases}
	\]
	The sum $\sum_{x = -\infty}^{1} \weight[x,y]$ is maximized to $1/i$ when each point in 
	$\SetE := \{ (x, y) \in \Z^2 \mid 2 - i \leq x \leq 1 \} $ with $\abs{\SetE} = i$ has weight $1/i^2$, and the other points $\{ (x, y) \in \Z^2 \mid x \le 1 - i  \} $ are not $\gamma$-covered.
	This can be seen by the following fact:
	A point~$(x,y)$ with $x \le 1-i$ can only be $\gamma$-covered by a point~$(\hat{x},\hat{y}) \in \CoverN{n}$ when
	$\hat{x} - \gamma \hat{y} \le x \le 1-i$, or equivalently $i \le \gamma \hat{y}$ (the smallest value for~$\hat{x}$ is one).
	Assume that such a point~$(\hat{x},\hat{y})$ exists. 
	Then there is an integer $j$ with $i < j$ such that $\gamma \hat{y} < j$ and $(j-1)/\gamma \le \hat{y} < j/\gamma$. 
	Since $1-j \le \hat{x} - \gamma \hat{y} \le x \le 1$, there are at most $\abs{\menge{(x, y) \mid 2 - j \leq x \leq 1 }} = j$ many different values for~$x$.
	Furthermore, since $(\hat{x},\hat{y}) \in \CoverN{n}$ $\gamma$-covers $(x,y)$, it is not possible that another element of $\CoverN{n}$
	$\gamma$-covers~$(x',y)$ with $x' < x$ (otherwise it would also cover~$(x,y)$).
    In total, the sum under consideration~$\sum_{x \le 1} \weight[x,y]$ can be at most~$1/j$, which is less than~$1/i$.
	With $\sum_{x \leq 1} \weight[x,y] \le 1/i$ we obtain
 $\sum_{x \in \Z} \weight[x,y] \le (n - y - 1 + i)/i^2 \le (n - y + \gamma y)/i^2 \le n/i^2$.

	Having computed~$\sum_{x \in \Z} \weight[x,y]$ for a fixed~$y$, we compute the sum over all~$y$ with $y\in\Z$.
	First, we deal with the special case that $\gamma = 1$. 
	That is because it is the only case where $\weight[\cdot, 0]$ might not be zero (given $(\hat{x},\hat{y}) \in \CoverN{n}$ and $\gamma < 1$, it holds that $\hat{y} \ge 1$ and therefore $0 < \hat{y} - \gamma \hat{y}$).
	A point~$(x,y)$ is $1$-covered by~$(\hat{x},\hat{y}) \in \CoverN{n}$ iff $0 \le y \le \hat{y}$ and $\hat{x}- \hat{y} \le x \le \hat{y}$ hold.
	The weight of a point~$(x,0)$ with $0 \leq x \leq n - 1$ is maximized to $1/2^2$ if it is $\gamma$-covered by a point~$(\hat{x},\hat{y}) \in \CoverN{n}$ with the lowest possible value of~$\hat{y}$, which is one.
	We conclude that $\sum_{x\in \Z} \weight[x, 0] \leq n/2^2$.
	With the same argument we conclude that $\sum_{x \in \Z} \weight[x,y] \le n / (y+1)^2$ for every positive integer $y$.
	Summing up everything yields
	$\sum_{(x, y) \in \Z^2} \weight[x,y] \le n/2^2 + n \sum_{y = 1}^{n}(1/(y+1)^2) = n/4 + n \sum_{i=2}^{\infty} (1/i^2) = n/4 + n \pi^2/6 - n = n \pi^2/6 - 3n/4$ due to the Basel problem.

    Finally we consider the case that $\gamma < 1$.
	The idea is to cover the interval $\Int{1}{n-1}$ with the sets $Y_i := \menge{y \in \N \mid (i-1)/\gamma \le y < i/\gamma}$ for $1 \le i \le \upgauss{n\gamma}$.
	Since a point $(x,y_i)$ with $y_i \in Y_i$ has a weight of at most $1/i^2$, summing up all weights gives
	$\sum_{(x,y)\in\Z^2} \weight[x, y] \le \sum_{i=1}^{\upgauss{n\gamma}} n \abs{Y_i}/i^2$.
	To compute~$\abs{Y_i}$, we use the function $g(i) := \abs{Y_i}$ as defined in \Cref{lemmaNonIncreasingFunction}.
    With $g$ the upper bound of $\sum_{(x, y) \in \Z^2} \weight[x,y]$ can be stated as $\sum_{i = 1}^{\lceil n \gamma \rceil} (g(i) n / i^2)$.
    Since $g(i) \le \gauss{1 / \gamma} + 1$, it is easy to see that
    $\sum_{i = 1}^{\lceil n \gamma \rceil} (g(i) n / i^2) < n (\gauss{1 / \gamma} + 1) \sum_{i = 1}^{\lceil n \gamma \rceil} (1/i^2) < n (\gauss{1 / \gamma} + 1) \pi^2 / 6$.
	By defining the non-increasing function~$f$ with $f(i) := n/i^2$,
	\Cref{lemmaNonIncreasingFunction} yields 
	$\sum_{i = 1}^{\upgauss{n\gamma}} (g(i) n / i^2) = \sum_{i=1}^{\upgauss{n\gamma}} g(i) f(i) \le 
	(n / \gamma) \sum_{i = 1}^{\upgauss{n \gamma}} (1 / i^2) < \sum_{i = 1}^{\infty} n / (\gamma i^2) = n \pi^2 / (6 \gamma)$,
    which is also an upper bound of $\abs{\SubCover}$.
\end{proof}

By restricting the subset $\SubCover \subseteq \CoverN{n}$ in \Cref{lemmaPoints} to be additionally bijective to the set of all maximal $\alpha$-gapped repeats or palindromes,
we can refine the upper bound attained in \Cref{lemmaPoints}.
For the maximal $\alpha$-gapped repeats, we follow the approach of \citet{gawrychowski18tighter} 
who map a maximal $\alpha$-gapped repeat $(\LL\arm,\RR\arm)$ with period~$\qvar := \ibeg{\RR\arm}-\ibeg{\LL\arm}$ to $(\iend{\LL\arm}, \qvar)$.
It holds that $(\iend{\LL\arm}, \qvar) \in \CoverN{n}$, because
$\iend{\RR\arm}$ and $\qvar$ are positive, and $\iend{\LL\arm}+\qvar = \iend{\RR\arm} \le n$.
In particular $\iend{\LL\arm} \le n-1$,
since otherwise ($\iend{\LL\arm} = n$) both endings $\iend{\RR\arm}$ and $\iend{\LL\arm}$ would be equal, and therefore $\LL\arm \equiv \RR\arm$ (a contradiction to the definition of gapped repeats).
Let~$\mapRepeat$ denote this mapping, and let
$\imgRepeat := \menge{\mapRepeat(\LL\arm,\RR\arm) \mid \agr \text{~is a maximal~} \alpha\text{-gapped repeat} } \subset \CoverN{n}$ denote the image of $\mapRepeat$.
The following \lcnamecref{lemmaPointsRepPropA} bounds the size of $\imgRepeat$ to be roughly at half of the size of~$\CoverN{n}$, a fact that will be used in \Cref{lemmaPointsRep}.
\begin{lemma}\label{lemmaPointsRepPropA}
	If $(x, y) \in \imgRepeat$, then $(x+1,y) \notin \imgRepeat$.
\end{lemma}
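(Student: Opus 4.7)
The plan is to argue by contradiction: assume that both $(x,y)$ and $(x+1,y)$ belong to $\imgRepeat$, and derive a conflict between the maximality of the first preimage and the matching condition of the second preimage. Let $(\LL\arm,\RR\arm)$ be the maximal $\alpha$-gapped repeat with $\mapRepeat(\LL\arm,\RR\arm)=(x,y)$, so by the definition of $\mapRepeat$ we have $\iend{\LL\arm}=x$, period $\qvar=y$, and $\iend{\RR\arm}=x+y$. Let $(\LL\arm',\RR\arm')$ be the maximal $\alpha$-gapped repeat with $\mapRepeat(\LL\arm',\RR\arm')=(x+1,y)$, so $\iend{\LL\arm'}=x+1$, period $y$, and $\iend{\RR\arm'}=x+y+1$.

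First I will use the second repeat to deduce an \emph{equality} of characters: since $\RR\arm'=\LL\arm'$ and the shift between them equals the period $y$, applying this identity at the rightmost position of the left arm yields $\warm[x+1]=\warm[x+1+y]$. This step is immediate from the definition of a gapped repeat and uses only that $x+1$ lies inside $\LL\arm'$ (which is trivially true, being its right endpoint).

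Second, I will use the first repeat to deduce the \emph{inequality} $\warm[x+1]\neq \warm[x+y+1]$ from the right-side maximality condition $\warm[\iend{\LL\arm}+1]\neq \warm[\iend{\RR\arm}+1]$. For this to be a character inequality rather than a boundary assertion, I need $\iend{\RR\arm}<n$, i.e., $x+y\le n-1$. This is the one small obstacle, but it is handled for free: the mere existence of $(\LL\arm',\RR\arm')$ requires $\iend{\RR\arm'}=x+y+1\le n$, which is exactly $x+y\le n-1$. Thus the boundary escape hatch of the maximality condition is excluded, and we really obtain the character inequality.

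Combining the equality $\warm[x+1]=\warm[x+y+1]$ from step one with the inequality $\warm[x+1]\neq \warm[x+y+1]$ from step two produces the desired contradiction, so $(x+1,y)\notin \imgRepeat$. I do not foresee any further difficulty; the whole argument is a short ``one character shift'' pigeonhole using maximality on one side and the matching property on the other.
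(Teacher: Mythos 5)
Your proposal is correct and follows essentially the same argument as the paper: the repeat mapped to $(x+1,y)$ forces $\warm[x+1]=\warm[x+y+1]$, which contradicts the right-side maximality of the repeat mapped to $(x,y)$. Your extra care in ruling out the boundary escape $\iend{\RR\arm}=n$ (via $\iend{\RR\arm'}=x+y+1\le n$) is a detail the paper leaves implicit, but it does not change the route.
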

\begin{proof}
	Let $(\LL\arm,\RR\arm)$ be a maximal $\alpha$-gapped $\beta$-aperiodic repeat 
	with period~$\qvar = \ibeg{\RR\arm}-\ibeg{\LL\arm}$, and $(x,y) := \mapRepeat(\LL\arm,\RR\arm) = (\iend{\LL\arm}, \qvar)$.
	If $(x + 1, y) \in \imgRepeat$, then $\warm[x + 1] = \warm[\iend{\LL\arm} + 1] = \warm[x + y + 1] = \warm[\iend{\RR\arm} + 1]$, which contradicts the maximality of $(\LL\arm,\RR\arm)$.
\end{proof}

With \Cref{lemmaPointsRepPropA} we attain a version of \Cref{lemmaPoints} tailored to subsets of~$\imgRepeat$:

\begin{figure}[ht]
		\usetikzlibrary{calc}
	\MyFloatBox{%
		\begin{tikzpicture}
		\pgfmathsetmacro\Mgamma{7/9}
		\pgfmathsetmacro\MnumsX{3}
		\pgfmathsetmacro\MnumsY{2}
		\pgfmathsetmacro\MgammaCount{int(floor(\MnumsY*\Mgamma))}

		\filldraw [] (1,3) circle (2pt) node [anchor=west] {$\in \SetE \setminus \SubCover$};
		\filldraw [color=solarizedViolet] (1,2.5) circle (2pt) node [anchor=west] {$\in \SubCover$};
		\draw [color=gray] (0.8,2.3) rectangle (2.5,3.3);
		
		\foreach \x in {1,...,\MnumsX} {%
		    \draw [thin,color=gray!80,dotted] (\x,0) -- (\x,\MnumsY);
		}
		    \draw (1,-4pt) -- (1,4pt) node [below,yshift=-2ex] {$x$};
		    \draw (2,-4pt) -- (2,4pt) node [below,yshift=-2ex] {$x+1$};
		    \draw (3,-4pt) -- (3,4pt) node [below,yshift=-2ex] {$x+2$};
		\foreach \y in {1,...,\MnumsY} {%
		    \draw [thin,color=gray!80,dotted] (0,\y) -- (\MnumsX,\y);
		}
		    \draw (-4pt,1) -- (4pt,1) node [left,xshift=-2ex] {$y$};
			\draw (-4pt,2) -- (4pt,2) node [left,xshift=-2ex] {$\hat{y}$};
		\foreach \y in {1,...,\MgammaCount} {%
		    \pgfmathsetmacro\yresult{\y / \Mgamma}
			\draw [color=gray!50] (-3ex,\yresult) -- (4pt,\yresult)
			node [left,color=gray,xshift=-5ex] {$1/\gamma$};
		    \draw [thin,color=gray!50] (0,\yresult) -- (\MnumsX+1,\yresult);
		}
		
		\filldraw [color=solarizedViolet] (1,1) circle (2pt);
		\draw [color=solarizedViolet,densely dashdotted] ($(1,1)-\Mgamma*(1,1)$) rectangle (1,1);
		\filldraw [] (2,1) circle (2pt);
		\filldraw [] (3,1) circle (2pt);
		\filldraw [color=solarizedViolet] (3,2) circle (2pt);
		\draw [color=solarizedViolet,densely dashdotted] ($(3,2)-\Mgamma*(2,2)$) rectangle (3,2);
		
		    \draw [<->,thick] (0,\MnumsY+1) node (yaxis) [above] {}
		        |- (\MnumsX+1,0) node (xaxis) [right] {};
		
		\end{tikzpicture}
		}{%
		\caption{Setting of the proof of \Cref{lemmaPointsRep}, where the point~$(x,y) \in \SubCover$, but $(x+1,y) \not\in \SubCover$ with $\weight[x+1,y] > 0$. 
			Thus $(x+1,y)$ is $\gamma$-covered by a point~$(\hat{x},\hat{y}) \in \SubCover$ ($\hat{x} = x+2$ in this figure). 
			Like in \Cref{figPointCover}, the dash-dotted rectangle of a point~$\vec{p} \in \SubCover$ comprises all points that are $\gamma$-covered by~$\vec{p}$.
			The points that are $\gamma$-covered by~$(\hat{x},\hat{y})$ are contained in the top right dashed rectangle.
			It can be seen that $(x+2,y)$ is also $\gamma$-covered by~$(\hat{x},\hat{y})$, and therefore cannot be in~$\SubCover$.
		}
	\label{figPointsRep}
}%
\end{figure}
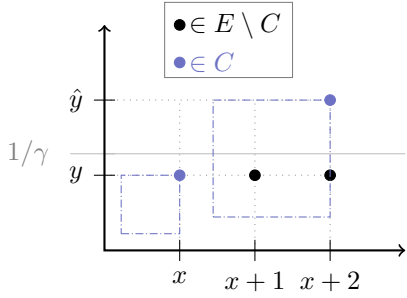

\begin{lemma}\label{lemmaPointsRep}
    Let $\gamma$ be a real number with $\gamma \in (0,1]$.
	A set of points $\SubCover \subseteq \imgRepeat$ such that no two distinct points in~$\SubCover$ $\gamma$-cover the same point obeys the inequality 
	$\abs{\SubCover} < n (\pi^2 / 6 - 1/2)/\gamma$.
\end{lemma}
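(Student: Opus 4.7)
The plan is to adapt the weight-based argument of \Cref{lemmaPoints}, sharpening the per-row estimate only for rows $y$ in the lowest band (band $i=1$, i.e.\ $0\le y<1/\gamma$), where \Cref{lemmaPointsRepPropA} provides new information. I keep the same weight assignment: each $(\hat{x},\hat{y})\in\SubCover$ whose $\hat{y}$ lies in band $i$ distributes weight $1/i^2$ to each of the $i^2$ integer points it $\gamma$-covers, so $\abs{\SubCover}=\sum_{\vec{p}\in\Z^2}\weight[\vec{p}]$. For rows $y$ in bands $i\ge 2$ I would reuse the bound $\sum_x\weight[x,y]\le n/i^2$ from \Cref{lemmaPoints}; the work is to prove the refined estimate $\sum_x\weight[x,y]\le n/2$ for every row $y$ in band $i=1$ (and $\SubCover\subseteq\imgRepeat$ forces $y\ge 1$).

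For the band-$1$ estimate, fix such a $y$ and let $x_1<x_2<\cdots<x_k$ be the $x$-coordinates of the $\SubCover$-points in row $y$; each $\gamma$-covers only itself, and by \Cref{lemmaPointsRepPropA}, $x_{j+1}\ge x_j+2$. The key observation (\Cref{figPointsRep}) is that when $x_{j+1}=x_j+2$ the position $(x_j+1,y)$ has weight zero: any $(\hat{x},\hat{y})\in\SubCover$ in some band $b\ge 2$ that $\gamma$-covers $(x_j+1,y)$ must, in order not to $\gamma$-cover $(x_j,y)\in\SubCover$, satisfy $\hat{x}=x_j+b$; but then its row-$y$ territory $\{x_j+1,\dots,x_j+b\}$ also contains $(x_{j+1},y)\in\SubCover$, contradicting that no two $\SubCover$-points $\gamma$-cover the same point.

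Partitioning $[x_1,n-y]$ into chunks $[x_j,x_{j+1}-1]$ for $j<k$ plus a final chunk $[x_k,n-y]$, this implies each chunk of length $L$ contributes at most $L/2$ to the row weight: $L=2$ contributes exactly $1$ by the observation, and $L\ge 3$ contributes at most $1+(L-1)/4\le L/2$ because non-$\SubCover$ positions in row $y$ can only be $\gamma$-covered by band-$b$ points with $b\ge 2$ and therefore have weight at most $1/4$. For the remaining boundary terms I would prove (i)~$\weight[n-y,y]=0$ whenever $(n-y,y)\notin\SubCover$ (the constraint $\hat{x}+\hat{y}\le n$ from $\CoverN{n}$ combined with $\hat{x}\ge n-y$ and $\hat{y}\ge y$ forces $(\hat{x},\hat{y})=(n-y,y)$), and (ii)~$\sum_{x\le 1}\weight[x,y]\le 1/2$ whenever $(1,y)\notin\SubCover$ (any contributing $(\hat{x},\hat{y})$ satisfies $\hat{x}\in[1,b]$ and thus $\gamma$-covers $(1,y)$, so at most one contributor exists and its row-$y$ contribution $1/b$ is maximized at $1/2$ for $b=2$, $\hat{x}=1$). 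A short case analysis on whether $x_1=1$ or $x_1\ge 2$ and whether $x_k=n-y$ or $x_k<n-y$ then gives $\sum_x\weight[x,y]\le n/2$.

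Finally, defining the nonincreasing function $f(1):=n/2$, $f(i):=n/i^2$ for $i\ge 2$ and invoking \Cref{lemmaNonIncreasingFunction} yields $\abs{\SubCover}\le\sum_{i=1}^{\upgauss{n\gamma}}g(i)f(i)\le(1/\gamma)\sum_{i=1}^{\upgauss{n\gamma}}f(i)<(1/\gamma)\sum_{i=1}^{\infty}f(i)=n(\pi^2/6-1/2)/\gamma$, where the strict inequality comes from the finite partial sum, as at the end of \Cref{lemmaPoints}. The hard part is the per-row band-$1$ bound: it is tight (attained e.g.\ when $y=1$, $x_1=1$, $x_k=n-y$), so the argument has no slack, and the two boundary facts about $(n-y,y)$ and $(1,y)$ must be combined with the chunk bound $L/2$ so that no contribution is double-counted.
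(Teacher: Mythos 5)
Your proposal is correct and follows essentially the same route as the paper's proof: the same weight scheme, the same use of \Cref{lemmaPointsRepPropA} combined with the covering geometry to show that each weight-one point in the lowest band is adjacent either to a weight-zero point or to two points of weight at most $1/4$, and the same conclusion via $f(1):=n/2$ and \Cref{lemmaNonIncreasingFunction}. Your chunk-based, row-by-row bookkeeping (including the explicit treatment of the boundaries $x\le 1$ and $x=n-y$) is just a more detailed rendering of the paper's ``average weight at most $1/2$ per node'' argument over the set $\SetE$.
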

\begin{proof}
    If $\gamma = 1$, \Cref{lemmaPoints} already gives $\abs{\SubCover} < n \pi^2/6 - 3n/4 < n \pi^2 / 6 - n/2$.
	For the case $\gamma < 1$,
	we focus on the points~$\SetE := \menge{(x,y) \mid 1 \le x \le n \text{~and~} y < 1 / \gamma}$.
	In the proof of \Cref{lemmaPoints}, we used the weights~$\weight[\cdot]$ of all points in~$\Z^2$ as an upper of~$\abs{\SubCover}$.
	There, we bounded the sum of the weights of all points in~$\SetE$ by~$n/\gamma$ (assign each point the weight~$1$).
	We can refine this upper bound by halving the weights of the points in~$\SetE$.
	We justify this with the following analysis.

	First, each point $(n,y) \in \SetE$ has weight zero, since there is no point $(\hat{x},\hat{y}) \in \SubCover$ (and even in $\CoverN{n}$)
	with $n \le \hat{x}$. Thus the sum of the weights of the points $(n-1,y)$ and $(n,y)$ is at most one, for every~$y$ with $1 \le y < 1/\gamma$.

	Second, a point $(x,y) \in \SetE \cap \SubCover$ can only cover itself, since $y < 1/\gamma$.
	Consequently, a point $(x,y) \in \SetE \setminus \SubCover$ can have a weight of at most $1/2^2 = 1/4$, 
	since all points~$(\hat{x},\hat{y}) \in \SubCover \setminus \SetE$ have $\hat{y} \ge 1/\gamma$.
	Given that $\SetE \cap \SubCover = \emptyset$, the total weight of all points in $\SetE$ is at most $(1/4) \abs{\SetE}$.

	Finally, suppose there is a point $(x,y) \in \SetE \cap \SubCover$. 
	Then $\weight[x,y] = 1$.
	Given $x \le n-2$, $(x+1,y) \in \CoverN{n}$, but $(x+1,y) \not\in \SubCover$ according to \Cref{lemmaPointsRepPropA}.
	We consider two cases:

		\begin{itemize}
			\item $\weight[x+1,y] = 0$. Then both points $(x,y)$ and $(x+1,y)$ together have a weight of one.
			\item $\weight[x+1,y] > 0$, see also \Cref{figPointsRep}. Since $(x+1,y) \not\in \SubCover$, $\weight[x+1,y] \le 1/4$, i.e., it is $\gamma$-covered by a point $(\hat{x},\hat{y}) \in \SubCover \setminus \SetE$.
				Since $\hat{y} \ge 1/\gamma$, the point~$(\hat{x},\hat{y})$ $\gamma$-covers at least four points (including itself).
				Since $\weight[x,y] =1$, $(\hat{x},\hat{y})$ cannot $\gamma$-cover $(x,y)$. 
				Instead, it $\gamma$-covers the point~$(x+2,y)$.
				We conclude that $(x+2,y) \not\in \SubCover$.
				All three points $(x,y)$, $(x+1,y)$, and $(x+2,y)$ have a total weight of at most $1 + 1/4 + 1/4 = 3/2$.
		\end{itemize}

	In both cases, a node has the average weight of at most~$1/2$.
	Summing up all average weights yields the total weight of all points in~$\SetE$, which is at most $(1/2)\abs{\SetE} = n/(2\gamma)$.

	Following the proof of \Cref{lemmaPoints}, our modification of the weights modifies the 
	nonincreasing function~$f$, which is now defined by $f(1) := n/2$ and $f(i) := n/i^2$ for $i \ge 2$.
Modifying $f$ yields the upper bound
    $\sum_{i = 1}^{\lceil n \gamma \rceil} (f(i) g(i)) \leq n (1/2 + \sum_{i = 2}^{\lceil n \gamma \rceil} (1 / i^2))/\gamma < n (\pi^2 / 6 - 1/2)/\gamma$
	on the size of~$\SubCover$.
\end{proof}

This result already improves the upper bound on the maximum number of all maximal $\alpha$-gapped repeats.
The improvement is clarified in the following theorem:

\begin{theorem}\label{thmMaxReps}
    Given a real number $\alpha$ with $\alpha > 1$ and a word~$\warm$ of length~$n$, 
	the number of all $\alpha$-gapped repeats $\abs{\grGR}$ is less than $\RepeatResult$.
\end{theorem}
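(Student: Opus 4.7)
The plan is to follow the partition strategy of \citet{gawrychowski18tighter}: fix a threshold parameter $\beta > 1$ and split $\grGR = \grNP \cup \grP$, where $\grP$ collects the $\beta$-\emph{periodic} maximal $\alpha$-gapped repeats (those whose left arm carries a sufficiently long periodic suffix of small period) and $\grNP$ collects the complementary $\beta$-\emph{aperiodic} repeats. The two classes will be bounded independently, with $\beta$ tuned so that $\gamma := 1/(3\alpha)$ governs the aperiodic analysis and so that the two contributions sum to $\RepeatResult$.

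For the aperiodic class, the goal is to show that $\mapRepeat$ restricted to $\grNP$ is injective with image in $\CoverN{n}$ satisfying the hypothesis of \Cref{lemmaPointsRep}: no two distinct images $\gamma$-cover a common point. Suppose toward a contradiction that two distinct elements $\agr$ and $\sagr$ of $\grNP$ yield images $(\iend{\LL\arm}, \qvar)$ and $(\iend{\s{\LL\arm}}, \s{\qvar})$ that $\gamma$-cover a common point; then their periods and arm endings differ only by amounts of order $\gamma \max(\qvar, \s{\qvar})$, forcing both repeats' left arms to coincide on a long factor while their right arms sit at nearby positions. Combining this shared structure with \Cref{lemmaRepetitiveOccs} (two close occurrences of an aperiodic factor) and \Cref{lemmaWeekPeriodicity} will force a common small period, contradicting $\beta$-aperiodicity. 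Applying \Cref{lemmaPointsRep} then yields $\abs{\grNP} < 3\alpha n(\pi^2/6 - 1/2)$.

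For the periodic class, the plan is to charge each repeat $(\LL\arm, \RR\arm) \in \grP$ to the unique run $\rep$ of $\warm$ that extends its long periodic suffix, and to show that at most $3\alpha \exp(\rep)$ elements of $\grP$ are charged to $\rep$. By \Cref{lemmaRepetitiveOccs} the period $\qvar$ of such a repeat must essentially be a multiple of the smallest period of $\rep$; the $\alpha$-gapped constraint caps $\qvar$ by $\alpha \abs{\LL\arm}$; and maximality pins the arm's left endpoint inside a short window determined by $\rep$. Summing over runs and invoking \Cref{lemmaExponent} yields $\abs{\grP} \leq 3\alpha \sumExp{\warm} < 9\alpha n$. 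Adding both estimates gives $\abs{\grGR} = \abs{\grNP} + \abs{\grP} < 3\alpha n(\pi^2/6 - 1/2) + 9\alpha n = 3(\pi^2/6 + 5/2)\alpha n = \RepeatResult$.

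The main obstacle will be establishing the $\gamma$-cover uniqueness on $\grNP$. Because the paper's definition supports overlapping arms ($\LL\arm \cap \RR\arm \ne \emptyset$), the case analysis must additionally handle near-glued configurations, and $\beta$ must be chosen so that a single argument covers both the non-overlapping and overlapping cases while leaving the per-run charge $3\alpha \exp(\rep)$ intact.
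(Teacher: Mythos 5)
Your proposal follows essentially the same route as the paper: the paper's proof of \Cref{thmMaxReps} likewise splits $\grGR$ into $\beta$-periodic and $\beta$-aperiodic repeats, bounds the former by $2\alpha\sumExp{\warm}/\beta < 9\alpha n$ via \cite[Lemma~8]{gawrychowski18tighter} (re-proved for overlaps in \Cref{lemmaPRepeats}) and the latter by $(\pi^2/6-1/2)\alpha n/(1-\beta)$ via \cite[Lemma~9]{gawrychowski18tighter} combined with \Cref{lemmaPointsRep}, and optimizes at $\beta = 2/3$, i.e., $\gamma = 1/(3\alpha)$, exactly as you do. The only corrections needed are the slip ``$\beta > 1$'' (it must be $2/3 \le \beta < 1$ for a periodic prefix/suffix of length $\beta\abs{\LL\arm}$ to make sense) and that the paper's periodicity notion for \emph{repeats} uses a periodic prefix rather than a suffix, though for repeats the two are interchangeable by symmetry.
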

\begin{proof}
	We follow the approach of \cite[Theorem~11]{gawrychowski18tighter}, 
	where $\grGR$ is split into a set of $\beta$-\emph{periodic} maximal $\alpha$-gapped repeats \grP{} and $\beta$-\emph{aperiodic} maximal $\alpha$-gapped repeats \grNP{}, for a real number~$\beta$ with $2/3 \le \beta < 1$.
	The set~\grP{} has at most $2 \alpha \sumExp{\warm} /\beta$ elements due to \cite[Lemma~8]{gawrychowski18tighter}.
	Combining the results of \Cref{lemmaPointsRep} and \cite[Lemma~9]{gawrychowski18tighter} yields that the set~\grNP{}
	has at most $(\pi^2/6 - 1/2) \alpha n / (1 - \beta)$ elements.
	Summing up the sizes of both sets yields
	$\abs{\grGR} < 2 \alpha \sumExp{\warm} /\beta + (\pi^2/6 - 1/2) \alpha n / (1 - \beta)$.
	This number becomes minimal with
	$\abs{\grGR} < 9 \alpha n + 3(\pi^2/6 - 1/2) \alpha n = \RepeatResult$
	when setting $\beta$ to $2/3$. 
\end{proof} %

\section{On the Number of all Maximal \texorpdfstring{$\alpha$}{Alpha}-gapped Palindromes}
Our approach is to partition the set of all maximal $\alpha$-gapped palindromes~$\gpGR$ into subsets,
and to analyze these subsets individually, whose definitions follow:
Given a real number~$\beta > 0$, 
a gapped palindrome $(\LL\arm,\RR\arm)$ with $\LL\arm \not\equiv \RR\arm$ belongs to 
the set of all maximal $\alpha$-gapped $\beta$-periodic palindromes~\gpP{}
iff $\LL\arm$ contains a periodic suffix of length at least $\beta \abs{\LL\arm}$.
We call the elements of $\gpP{}$ \intWort{$\beta$-periodic}.
If a maximal $\alpha$-gapped palindrome~$(\LL\arm,\RR\arm)$ is neither $\beta$-periodic nor a maximal ordinary palindrome,
we call it \intWort{$\beta$-aperiodic}. 
The set of all maximal $\alpha$-gapped $\beta$-aperiodic palindromes is denoted by~$\gpNP$.
To sum up, we partition the set of all maximal $\alpha$-gapped palindromes~\gpGR{} in 
\begin{itemize}
	\item the set of all maximal $\alpha$-gapped $\beta$-periodic palindromes $\gpP$,
	\item the set of all maximal ordinary palindromes, and
	\item the set of all maximal $\alpha$-gapped $\beta$-aperiodic palindromes $\gpNP$.
\end{itemize}
The size of the second set is known to be at most $2\abs{\warm}-1$.
In the following, we give an upper bound on the number of maximal $\alpha$-gapped palindromes that are $\beta$-periodic or $\beta$-aperiodic with \Cref{lemmaP} or \Cref{lemmaPointsPali}, respectively.

\begin{figure}[ht]
	\centering{%
\hfill\subfloat[ ] {%
			\begin{tikzpicture}[yscale=0.4,xscale=0.5]
			\col[arm]{2}{$\LL\arm$}
			\col[gap]{1}{}
			\col[arm]{2}{$\RR\arm$}
			\newrow
			\void{1}
			\col[rep]{1}{$\LL\rep$}
			\void{0.5}
			\col[rep]{1.5}{$\RR\rep$}
			\newrow
			\newrow
			\void{1}
			\dist[$\ge \beta$]{1}
			\end{tikzpicture}
			\hspace{2em}
			}\hfill\subfloat[\label{figPaliTypesB} ]{%
			\begin{tikzpicture}[yscale=0.4,xscale=0.5]
			\col[arm]{2}{$\s{\LL\arm}$}
			\col[gap]{1}{}
			\col[arm]{2}{$\s{\RR\arm}$}
			\newrow
			\void{1}
			\col[rep]{3}{$\LL\rep$}
			\newrow
			\newrow
			\void{1}
			\dist[$\ge \beta$]{1}
	\draw [dashed,->|] (\ourCol,\ourRow-0.5) -- (\ourCol+3,\ourRow-0.5) node [midway,anchor=south] {\hspace{4em}extend~$\s{\LL\arm}$ to $\LL\arm$};
			\newrow
			\col[arm]{5}{$\LL\arm \equiv \RR\arm$}
			\end{tikzpicture}
			\hspace{2em}
}\hfill\subfloat[ ] {%
			\begin{tikzpicture}[yscale=0.4,xscale=0.5]
			\col[arm]{5}{$\LL\arm \equiv \RR\arm$}
			\newrow
			\void{1}
			\void{1}
			\newrow
			\void{1}
			\end{tikzpicture}
			\hspace{2em}
}\hfill\subfloat[ ] {%
\begin{tikzpicture}[yscale=0.4,xscale=0.5]
\col[arm]{2}{$\LL\arm$}
\col[gap]{1}{}
\col[arm]{2}{$\RR\arm$}
\end{tikzpicture}
			\hspace{2em}
}%

	}%
	\caption{%
	Types of maximal $\alpha$-gapped palindromes~$(\LL\arm,\RR\arm)$ under consideration.
	$\LL\rep$ and $\RR\rep$ are runs.
	Figure~(a) shows a $\beta$-periodic $\alpha$-gapped palindrome, counted in \Cref{lemmaP}.
	The run $\LL\rep$ in Figure~(b) covering a suffix of length~$\beta$ of the left arm~$\LL\arm$ of a maximal $\beta$-periodic gapped palindrome enforces that $\ibeg{\RR\arm} \le 2 + \iend{\LL\arm}$ (see proof of \Cref{lemmaP}), i.e., Figure~(b) shows that the gapped palindrome~$(\s{\LL\arm},\s{\RR\arm})$ can be extended inwards to form a maximal ordinary palindrome.
	Figure~(c) shows an even palindrome,
	Figure~(d) shows an $\alpha$-gapped $\beta$-aperiodic palindrome, counted in \Cref{lemmaPalNP}.
  }
	\label{figPaliTypes}
\end{figure}

\begin{lemma}\label{lemmaP}
Let $\warm$ be a word, and $\alpha$ and $\beta$ two real numbers with $\alpha > 1$ and $0 < \beta < 1$.
Then $\abs{\gpP}$ is at most $2 (\alpha-1) \sumExp{\warm} / \beta$. 
\end{lemma}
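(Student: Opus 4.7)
The strategy parallels the analogous repeat bound in~\cite[Lemma~8]{gawrychowski18tighter}: I will charge each $\beta$-periodic maximal $\alpha$-gapped palindrome to a run in~$\warm$, bound the number of palindromes charged to each run in terms of its exponent, and sum over all runs using the definition of~$\sumExp{\warm}$. The charging itself is natural: for each $(\LL\arm,\RR\arm)\in\gpP$, fix the smallest period~$\period$ of a periodic suffix of~$\LL\arm$ of length at least $\beta\abs{\LL\arm}$; this suffix extends uniquely to a maximal run~$\LL\rep$ of period~$\period$, to which I charge the palindrome. Since the suffix lies inside $\LL\rep$ and has length at least $\beta\abs{\LL\arm}$ but at most $\abs{\LL\rep}$, we obtain $\abs{\LL\arm}\le\abs{\LL\rep}/\beta=\period\cdot\exp(\LL\rep)/\beta$. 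Symmetrically, the mirror periodic prefix of $\RR\arm$ extends to a run $\RR\rep$ of the same period~$\period$.

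The structural heart of the proof is the observation illustrated in \Cref{figPaliTypesB}: if $\LL\rep$ were long enough to reach $\ibeg{\RR\arm}$ (equivalently, if $\LL\rep$ and $\RR\rep$ were the same run, spanning the gap between the arms), then the $\period$-periodicity of $\LL\rep$ combined with $\RR\arm=\pali{\LL\arm}$ would let us extend the palindrome inward until $\LL\arm\equiv\RR\arm$, contradicting both maximality and the assumption $(\LL\arm,\RR\arm)\in\gpP$ (which excludes ordinary palindromes). Consequently, every palindrome counted in $\gpP$ satisfies $\iend{\LL\rep}<\ibeg{\RR\arm}$, as asserted in the caption of \Cref{figPaliTypesB}.

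Combining this constraint with the $\alpha$-gapped bound $\ibeg{\RR\arm}\le\ibeg{\LL\arm}+\alpha\abs{\LL\arm}$ restricts, for each fixed $\iend{\LL\arm}\in\LL\rep$, the position $\ibeg{\RR\arm}$ to a window of length at most $(\alpha-1)\abs{\LL\arm}+\Oh{1}\le(\alpha-1)\abs{\LL\rep}/\beta+\Oh{1}$. Since a maximal palindrome is uniquely determined by the pair $(\iend{\LL\arm},\ibeg{\RR\arm})$, the number of palindromes in $\gpP$ charged to a single run~$\rep$ of period~$\period$ can thereby be bounded by $2(\alpha-1)\exp(\rep)/\beta$, where the factor~$2$ absorbs boundary effects and the discretization slack across the $\iend{\LL\arm}$-axis within $\rep$. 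Summing over all runs and invoking the definition of $\sumExp{\warm}$ then gives $\abs{\gpP}\le 2(\alpha-1)\sumExp{\warm}/\beta$.

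The main obstacle will be the rigorous execution of the per-run counting, and in particular the justification of the refined factor $(\alpha-1)$ instead of $\alpha$. Whereas the repeat version of this argument loses a full factor $\alpha\abs{\LL\arm}$ per fixed $\iend{\LL\arm}$, here the constraint $\ibeg{\RR\arm}>\iend{\LL\rep}\ge\iend{\LL\arm}$ shaves off the portion of the $\alpha$-window in which $\ibeg{\RR\arm}$ would lie within or just after $\LL\arm$; together with the prohibition of ordinary palindromes in $\gpP$, this yields the improved factor. Making the shaving fully precise requires a careful analysis of how the two runs $\LL\rep$ and $\RR\rep$ interact with the palindrome's inward extension, which is where the bulk of the case analysis must be invested.
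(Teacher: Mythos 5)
Your overall skeleton matches the paper's: charge each palindrome to the run generating the periodic suffix of its left arm (or, symmetrically, the periodic prefix of its right arm), dispose of the case where both runs coincide by showing the palindrome would then be extendable inwards to an ordinary palindrome, and sum exponents over all runs. However, the per-run counting --- which you yourself flag as ``the main obstacle'' --- is exactly where your argument has a genuine gap, and the two ideas needed to close it are absent. First, you let $\iend{\LL\arm}$ range freely over $\LL\rep$ and, for each such choice, allow a window of $\Ot{(\alpha-1)\abs{\LL\rep}/\beta}$ positions for $\ibeg{\RR\arm}$; multiplying these out gives something on the order of $\abs{\LL\rep}\cdot(\alpha-1)\abs{\LL\rep}/\beta$ per run, not $(\alpha-1)\exp(\LL\rep)/\beta$, and no amount of ``boundary effects and discretization slack'' absorbed into a factor $2$ bridges that. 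The paper first pins down one degree of freedom: by maximality of $\agr$ and maximality of the runs, either $\iend{\LL\arm}=\iend{\LL\rep}$ or $\ibeg{\RR\arm}=\ibeg{\RR\rep}$ must hold (otherwise the palindrome extends inwards), so once the run is fixed the palindrome is determined by the gap~$\gap$ alone. Second, to count admissible values of~$\gap$ inside the window $2\le\gap\le(\alpha-1)\abs{\LL\subarm}/\beta$, one needs that two such values differ by at least the period~$\period$ of the run; this follows from \Cref{lemmaRepetitiveOccs}, because each value of~$\gap$ places an occurrence of the periodic factor $\pali{\LL\subarm}$ at position $\iend{\LL\rep}+\gap+1$, and distinct occurrences of a periodic factor with smallest period~$\period$ cannot start within distance~$\period$ of each other. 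Dividing the window length by~$\period$ is what produces $\exp(\LL\rep)(\alpha-1)/\beta$ per run; you never perform this division nor invoke any spacing argument that would justify it.

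A smaller but telling inaccuracy: the factor~$2$ in the bound does not come from discretization slack. It comes from the two symmetric cases --- the palindrome may be determined by $\LL\rep$ (when $\iend{\LL\arm}=\iend{\LL\rep}$) or by $\RR\rep$ (when $\ibeg{\RR\arm}=\ibeg{\RR\rep}$) --- each contributing $(\alpha-1)\sumExp{\warm}/\beta$. Also, the factor $(\alpha-1)$ needs no delicate ``shaving'' analysis of how the runs interact with the inward extension: it is immediate arithmetic, $\gap=\ibeg{\RR\arm}-\iend{\LL\arm}-1=\qvar-\abs{\LL\arm}\le(\alpha-1)\abs{\LL\arm}$, since $\agr$ is $\alpha$-gapped. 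The structural case analysis you anticipate investing in is only needed for the $\LL\rep\equiv\RR\rep$ case and for pinning the endpoint, both of which are short.
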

\begin{proof}
  Let $\agr \in \gpP$. 
  By definition, the left arm~$\LL\arm$ has a periodic suffix~$\LL\subarm$ of length at least $\beta \abs{\LL\arm}$.
  Let $\LL\rep$ denote the run that generates $\LL\subarm$, i.e.,
  $\LL\subarm \subseteq \LL\rep$. 
  By the definition of the gapped palindromes, there is a reverse copy $\RR\subarm$ of $\LL\subarm$ contained in $\RR\arm$ with
  $\RR\subarm \equiv \substr{\warm}{\ibeg{\RR\arm}}{ \ibeg{\RR\arm}+\abs{\LL\subarm}-1}$ and $\RR\subarm = \pali{\LL\subarm}$.
  Let $\RR\rep$ be the run generating $\RR\subarm$.
  By definition, $\RR\rep$ has the same period~$\period$ as $\LL\rep$.

  If $\LL\rep \equiv \RR\rep$ (see \Cref{figPaliTypesB}), then either~$\ibeg{\RR\arm}-\iend{\LL\arm} \le 2$ (i.e., $\agr$ is an ordinary palindrome), 
  or $\agr$ is not maximal.
  That is because of the following: 
  Assume that $\LL\rep$ contains~$\LL\subarm$ and $\RR\subarm$.
  Then we have
  $\warm[\iend{\LL\subarm} + 1] = \warm[\iend{\LL\subarm}-\period+1] = \warm[\ibeg{\RR\subarm}+\period-1] = \warm[\ibeg{\RR\subarm}-1]$, 
  where the first and third equality follows from $\abs{\RR\subarm} = \abs{\LL\subarm} \ge 2\period$,
  and the second equality follows from $\RR\subarm = \pali{\LL\subarm}$.
  
  From now on, we assume that $\LL\rep \not\equiv \RR\rep$.
  Since $\agr$~is maximal, $\iend{\LL\arm} = \iend{\LL\rep}$ or $\ibeg{\RR\arm} = \ibeg{\RR\rep}$ must hold;
  otherwise we could extend $\agr$ inwards.
  This means that~$\agr$ is uniquely determined by the gap $\gap := \ibeg{\RR\arm}-\iend{\LL\arm} -1 $ and
	  \begin{enumerate}[(a)]
	\item $\LL\rep$ in case $\iend{\LL\arm} = \iend{\LL\rep}$, or \label{itPeriodicAPal}
	\item $\RR\rep$ in case $\ibeg{\RR\arm} = \ibeg{\RR\rep}$. \label{itPeriodicBPal}
	  \end{enumerate}
	  Since ordinary palindromes are excluded from the set of all maximal $\alpha$-gapped $\beta$-periodic palindromes,
	  the gap~$\gap$ is at least two.
	  Cases~\ref{itPeriodicAPal} and~\ref{itPeriodicBPal} are depicted in \Cref{figPeriodicPal}.

\begin{figure}[ht]
	{%
\subfloat[] {%
			\begin{tikzpicture}[yscale=0.4,xscale=0.6]
			\col[arm]{2}{$\LL\arm$}
			\col[gap]{1}{$\gap$}
			\col[arm]{2}{$\RR\arm$}
			\newrow
			\void{1}
			\col[subarm]{1}{$\LL\subarm$}
			\void{1}
			\col[subarm]{1}{$\RR\subarm$}
			\newrow
			\void{1}
			\col[rep]{1}{$\LL\rep$}
			\void{0.5}
			\col[rep]{1.5}{$\RR\rep$}
			\end{tikzpicture}
			\hspace{2em}
		}\hspace{2em}\subfloat[ ] {%
\begin{tikzpicture}[yscale=0.4,xscale=0.6]
\col[arm]{2}{$\LL\arm$}
\col[gap]{1}{$\gap$}
\col[arm]{2}{$\RR\arm$}
\newrow
\void{1}
\col[subarm]{1}{$\LL\subarm$}
\void{1}
\col[subarm]{1}{$\RR\subarm$}
\newrow
\void{1}
\col[rep]{1.5}{$\LL\rep$}
\void{0.5}
\col[rep]{1}{$\RR\rep$}
\end{tikzpicture}
}%
}{%
	\caption{%
		Setting of the proof of \Cref{lemmaP}.
		Each figure depicts a maximal $\alpha$-gapped $\beta$-periodic palindrome~$(\LL\arm,\RR\arm)$ with the periodic suffix $\LL\subarm$.
		The periodic suffix
		$\LL\subarm \equiv \LL\rep \cap \LL\arm$ of $\LL\arm$ and the periodic prefix $\RR\subarm \equiv \RR\rep \cap \RR\arm$ of $\RR\arm$ 
		are the intersections of the runs $\LL\rep$ and $\RR\rep$ with the respective arms.
		By the maximality property of runs, 
		the equation~\ref{itPeriodicAPal}~$\iend{\LL\arm} = \iend{\LL\rep}$ or~\ref{itPeriodicBPal}~$\ibeg{\RR\arm} = \ibeg{\RR\rep}$ must hold.
  }
	\label{figPeriodicPal}
}%
\end{figure}

  We analyze Case~\ref{itPeriodicAPal} with $\iend{\LL\subarm} = \iend{\LL\rep}$, 
  Case~\ref{itPeriodicBPal} is treated exactly in the same way by symmetry.
    The gapped palindrome~$\agr$ is identified by its gap~$\gap \ge 2$ and $\LL\rep$.
    We fix $\LL\rep$ and count the number of possible values of~$\gap$.
	Since the starting position $\ibeg{\RR\subarm} = \iend{\LL\rep} + \gap +1$ of the periodic segment~$\RR\subarm$ is determined by~$\gap$, two possible values of~$\gap$ must have a distance of at least~$\period$ due to \Cref{lemmaRepetitiveOccs}.
    Since $\abs{\LL\arm} \le \abs{\LL\subarm} / \beta$ and $\agr$ is $\alpha$-gapped,
    $\gap \leq (\alpha - 1) \abs{\LL\arm} \leq (\alpha - 1) \abs{\LL\subarm} / \beta$.
    Then the number of possible values for $\gap$ is bounded by 
    $\abs{\LL\subarm} (\alpha-1) / (\beta \period) = \abs{\LL\rep} (\alpha-1) / (\beta \period) = \exp(\LL\rep) (\alpha-1) / \beta$.
	In total, the number of maximal $\alpha$-gapped palindromes in this case is bounded by $(\alpha-1) \sumExp{\warm} / \beta$ for the case
	$\iend{\LL\arm} = \iend{\LL\rep}$.
	Case~\ref{itPeriodicBPal} is symmetric, leading to the bound of $2 (\alpha-1) \sumExp{\warm} / \beta$ in total.
  \end{proof}

To apply the results of \Cref{secPointAnalysis}, we map maximal $\alpha$-gapped $\beta$-aperiodic palindromes to points.
\citet{gawrychowski18tighter} map 
a maximal $\alpha$-gapped $\beta$-aperiodic palindrome~$(\LL\arm,\RR\arm)$ to the point $\tuple{\iend{\LL\arm}, \gap}$,
where $\gap := \ibeg{\RR\arm}-\iend{\LL\arm}-1$ is the gap of~$(\LL\arm, \RR\arm)$.
Since~$(\LL\arm,\RR\arm)$ is $\beta$-aperiodic, the gap $\gap$ is at least two (otherwise it could be extended inwards to a maximal ordinary palindrome).
With $\iend{\LL\arm} + \gap = \ibeg{\RR\arm} -1 \le n-1$, we conclude that $\tuple{\iend{\LL\arm}, \gap} \in \CoverN{n}$.
However, this mapping seems not useful in combination with \emph{our} definition of the $\beta$-periodic gapped palindromes.
Defining periodic gapped palindromes to have a left arm with a periodic \emph{suffix} (instead of prefix as in~\cite{gawrychowski18tighter})
invalidates the proof of Lemma~12 in~\cite{gawrychowski18tighter}.
There, we fail to transfer the contradiction in the Sub-Case~2b with $2\zvar - \delta < 0$ to our new definition:
We want to derive a contradiction by showing that $\LL\arm$ has a sufficiently large periodic suffix~$\LL\subarm$
(in \cite[Lemma 12]{gawrychowski18tighter}, it was shown that $\LL\arm$ has a sufficiently large periodic prefix).
However, we have not found a way to
upper bound the length of $\LL\arm$, and thus, we are not able to show that the periodic suffix $\LL\subarm$ is sufficiently large in relation to $\abs{\LL\arm}$.

To solve this problem, we define an alternative mapping~$\mapPali$ that 
maps a maximal $\alpha$-gapped $\beta$-aperiodic palindrome~$(\LL\arm,\RR\arm)$ of a word of length~$n$ to the point
\[
(m, d) := \mapPali(\LL\arm,\RR\arm) := (\upgauss{(\ibeg{\LL\arm}+\iend{\LL\arm})/2 }, \gauss{(\ibeg{\RR\arm}+\iend{\RR\arm})/2} - \upgauss{(\ibeg{\LL\arm}+\iend{\LL\arm})/2 }).
\]
Let $\imgPali = \menge{\mapPali(\LL\arm,\RR\arm) \mid \agr \text{~is a maximal~} \alpha\text{-gapped~} \beta\text{-aperiodic palindrome} } \subset \CoverN{n}$ be the image of~$\mapPali$.
The first coordinate~$m$ is the (integer) position nearest to the mid-point $(\ibeg{\LL\arm}+\iend{\LL\arm})/2$ (tie-breaking to the \emph{right}) of the left arm,
and $m + d$ is the position nearest to the mid-point $(\ibeg{\RR\arm}+\iend{\RR\arm})/2$ (tie-breaking to the \emph{left}) of the right arm (in particular, $\warm[m] = \warm[m+d]$).
The mapping~$\mapPali$ is injective because we can retrieve the pair of segments~$(\LL\arm,\RR\arm)$ by computing the maximal inward and outward matches at the positions $m$ and $m + d$.
Since $m$ and $d$ are positive integers with $m+d \le n$, we conclude that $(m, d) \in \CoverN{n}$. 
For convenience, we give an alternative definition of $(m, d)$ using the function $\corr{i} := (i+1 \mod 2)/2 $ such that $\corr{i} = 0$ if $i$ is odd, and $\corr{i} = 1/2$ if $i$ is even.
With $\corr{\ibeg{\LL\arm}+\iend{\LL\arm}+1} = \corr{2\ibeg{\LL\arm} + \abs{\LL\arm}} = \corr{\abs{\LL\arm}}$ we get
$(m, d) = ((\ibeg{\LL\arm}+\iend{\LL\arm})/2 + \corr{\abs{\LL\arm}}, \qvar - 2 \corr{\abs{\LL\arm}})$,
where $\qvar := \ibeg{\RR\arm} - \ibeg{\LL\arm}$ is the period of~$\agr$ (see also \Cref{figGapPoint}).

  \begin{figure}[t]
    \MyFloatBox{%
		\begin{tikzpicture}[yscale=0.4,xscale=0.7]
 \void{3}
  \void{3}
  \markeup{$m - \corr{\Sarm} + \Sarm/2$}
  \void{5}
  \markeup{$m+d+\corr{\Sarm}$}
  \newrow
  \col[arm]{6}{$\LL\arm$}
  \col[gap]{2}{}
  \col[arm]{6}{$\RR\arm$}
  \newrow
  \void{3}
  \marke{$m - \corr{\Sarm}$}
  \void{5}
  \marke{}
  \newrow
  \dist[$\Sarm/2$]{3}
  \void{3}
  \dist[$y-\Sarm$]{2}
  \marke{$m+d+\corr{\Sarm}-\Sarm/2$}
	\end{tikzpicture}
		}{%
		\caption{A gapped palindrome $(\LL\arm,\RR\arm)$ with $\Sarm = \abs{\LL\arm}$ mapped to the point~$(m,d)$.}
\label{figGapPoint}
		}%
  \end{figure}

  \begin{fact}\label{corGappedCorrectionPlaces}
	  Given a maximal gapped palindrome~$(\LL\arm,\RR\arm)$ with $\Sarm := \abs{\LL\arm}$, it holds that
	  \begin{enumerate}[(a)]
		  \item $\abs{\Sarm/2 - \corr{\Sarm}} \in \menge{1/2,3/2,5/2,\ldots}$, \label{itGappedCorrectionPlacesDiff}
		  \item $\ibeg{\LL\arm} = m - \corr{\Sarm} - \Sarm/2 + 1/2$, \label{itGappedCorrectionPlacesBLArm}
		  \item $\iend{\LL\arm} = m - \corr{\Sarm} + \Sarm/2 - 1/2$, and \label{itGappedCorrectionPlacesELarm}
		  \item $\ibeg{\RR\arm} = m + d + \corr{\Sarm} - \Sarm/2 + 1/2$. \label{itGappedCorrectionBRArm}
		  \item If $d := \gauss{(\ibeg{\RR\arm}+\iend{\RR\arm})/2} - \upgauss{(\ibeg{\LL\arm}+\iend{\LL\arm})/2 } \le 2$, then $(\LL\arm,\RR\arm)$ is a maximal \emph{ordinary} palindrome. \label{itGappedCorrectionOrdinary}
	  \end{enumerate}
  \end{fact}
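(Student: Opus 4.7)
The plan is to handle both parities of $\Sarm := \abs{\LL\arm}$ uniformly, exploiting that $\corr{\Sarm} = 0$ when $\Sarm$ is odd and $\corr{\Sarm} = 1/2$ when $\Sarm$ is even, so exactly one of $\Sarm/2$ and $\corr{\Sarm}$ contributes a half to any given expression. Claim~(\ref{itGappedCorrectionPlacesDiff}) then follows immediately: for odd $\Sarm$, $\Sarm/2 - \corr{\Sarm} = \Sarm/2$, and for even $\Sarm \ge 2$, $\Sarm/2 - \corr{\Sarm} = (\Sarm-1)/2$; in either case the value is a positive odd multiple of $1/2$.

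For (\ref{itGappedCorrectionPlacesBLArm})--(\ref{itGappedCorrectionPlacesELarm}), I will use $\iend{\LL\arm} = \ibeg{\LL\arm} + \Sarm - 1$ to write $(\ibeg{\LL\arm} + \iend{\LL\arm})/2 = \ibeg{\LL\arm} + (\Sarm-1)/2$. A case split on the parity of $\Sarm$ gives $m = \ibeg{\LL\arm} + (\Sarm-1)/2$ when $\Sarm$ is odd (the quantity is already integral) and $m = \ibeg{\LL\arm} + \Sarm/2$ when $\Sarm$ is even (the ceiling adds $1/2$). Both cases unify as $m = \ibeg{\LL\arm} + \Sarm/2 - 1/2 + \corr{\Sarm}$, which rearranges to (\ref{itGappedCorrectionPlacesBLArm}); adding $\Sarm - 1$ to both sides gives (\ref{itGappedCorrectionPlacesELarm}). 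Formula (\ref{itGappedCorrectionBRArm}) is obtained from the same computation on the right arm with $\gauss{\cdot}$ in place of $\upgauss{\cdot}$, which on a half-integer subtracts $1/2$ instead of adding it; this flips the sign of $\corr{\Sarm}$ and yields $m + d = \ibeg{\RR\arm} + \Sarm/2 - 1/2 - \corr{\Sarm}$.

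For (\ref{itGappedCorrectionOrdinary}), subtracting the formulas from (\ref{itGappedCorrectionPlacesBLArm}) and (\ref{itGappedCorrectionBRArm}) immediately yields $d = \qvar - 2\corr{\Sarm}$, where $\qvar = \ibeg{\RR\arm} - \ibeg{\LL\arm}$ is the period of $(\LL\arm,\RR\arm)$. Consequently, the gap $\ibeg{\RR\arm} - \iend{\LL\arm} - 1 = \qvar - \Sarm$ equals $d - \Sarm + 2\corr{\Sarm}$. A short parity check shows $\Sarm - 2\corr{\Sarm} \ge 1$ (namely $\Sarm \ge 1$ in the odd case, and $\Sarm - 1 \ge 1$ in the even case since $\Sarm$ is then at least $2$), so $d \le 2$ forces the gap to be at most~$1$. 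The remark following the definition of maximality (a maximal gapped palindrome with gap~$\le 1$ must satisfy $\LL\arm \equiv \RR\arm$) then shows that $(\LL\arm,\RR\arm)$ is an ordinary palindrome, and since it was assumed maximal as a gapped palindrome it is also a maximal ordinary palindrome.

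The only real hurdle is bookkeeping: keeping the two parity cases aligned and verifying that the constants absorbed into $\corr{\Sarm}$ agree. Everything else reduces to the identity $\iend{\LL\arm} = \ibeg{\LL\arm} + \Sarm - 1$ combined with the definition of~$\corr{\cdot}$, so no structural argument about palindromes is needed beyond the already-stated inward-extension property invoked in~(\ref{itGappedCorrectionOrdinary}).
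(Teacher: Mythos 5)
Your proof is correct. The paper states this as a \emph{fact} without giving a proof, so there is nothing to compare against line by line; your parity bookkeeping matches the derivation the paper sketches in the surrounding text (namely $(m,d) = ((\ibeg{\LL\arm}+\iend{\LL\arm})/2 + \corr{\Sarm},\, \qvar - 2\corr{\Sarm})$), and for item~(e) you correctly reduce $d \le 2$ to a gap of at most~$1$ and invoke the paper's earlier remark that a maximal gapped palindrome with gap at most~$1$ satisfies $\LL\arm \equiv \RR\arm$.
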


\begin{lemma}\label{lemmaPointsPalProp}
Given a maximal $\alpha$-gapped $\beta$-aperiodic palindrome $(\LL\arm, \RR\arm)$ with $\Sarm := \abs{\LL\arm}$ and $(m, d) = \mapPali(\LL\arm, \RR\arm)$,
$(m + i, d - 2i) \notin \imgPali$ for every integer $i$ with $- \gauss{\Sarm/2} - 1 \leq i \le -1$ or $1 \le i \le \upgauss{\Sarm/2 }$.
\end{lemma}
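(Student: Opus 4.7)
The plan is to assume for contradiction that $(m+i, d-2i) \in \imgPali$ for some $i$ in the stated range, obtain the corresponding maximal $\alpha$-gapped $\beta$-aperiodic palindrome $(\s{\LL\arm}, \s{\RR\arm})$ via the injectivity of $\mapPali$, and show that $(\LL\arm, \RR\arm)$ and $(\s{\LL\arm}, \s{\RR\arm})$ cannot coexist as distinct maximal palindromes around the common axis they are forced to share.

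First I would verify that both palindromes have axis $c := m + d/2$: using \Cref{corGappedCorrectionPlaces}, the quantity $(\ibeg{\LL\arm} + \iend{\RR\arm})/2$ telescopes to $(2m+d)/2$ for the original, and the same computation applied with $(m+i, d-2i, \Sarm')$ yields $m + d/2$ for the new palindrome. Next, for each palindrome I would parametrize its arms by the range $[k_{\text{in}}, k_{\text{out}}]$ of offsets $k$ such that $c - k$ lies in the left arm (equivalently, $c + k$ lies in the right arm); by the palindrome property and maximality, this is a maximal contiguous run of offsets satisfying $\warm[c-k] = \warm[c+k]$, with mismatches at $k = k_{\text{in}} - 1$ and $k = k_{\text{out}} + 1$. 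From \Cref{corGappedCorrectionPlaces}, one has $k_{\text{out}} - k_{\text{in}} = \Sarm - 1$ and $k_{\text{in}} + k_{\text{out}} = d + 2\corr{\Sarm}$; analogous identities hold for $[k'_{\text{in}}, k'_{\text{out}}]$ with $(\Sarm', d - 2i)$ in place of $(\Sarm, d)$.

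The decisive observation is that two distinct maximal runs of matching offsets around the same axis must be disjoint and separated by at least one additional mismatching offset: any overlap would force one palindrome's mismatch boundary to coincide with a match position inside the other palindrome's arm, as would an immediate adjacency such as $k'_{\text{out}} + 1 = k_{\text{in}}$. Hence $k'_{\text{out}} \le k_{\text{in}} - 2$ or $k'_{\text{in}} \ge k_{\text{out}} + 2$. Substituting the closed-form expressions and minimizing over $\Sarm' \ge 1$ (the optimum being $\Sarm' = 1$), the two cases respectively force $i \ge \upgauss{\Sarm/2} + 1$ and $i \le -\gauss{\Sarm/2} - 2$, both strictly outside the prescribed range.

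The hardest step will be the final arithmetic: the parity corrections $\corr{\Sarm}$ and $\corr{\Sarm'}$ must be tracked across all four parity combinations of $\Sarm$ and $\Sarm'$ to ensure the extremal bound on $i$ sharpens to precisely $\upgauss{\Sarm/2} + 1$ or $-\gauss{\Sarm/2} - 2$. The asymmetry of the forbidden $i$-range (with endpoints $-\gauss{\Sarm/2} - 1$ on one side and $\upgauss{\Sarm/2}$ on the other) directly reflects the round-up/round-down tie-breaking used in the definition of $\mapPali$ for even-length arms, so the final bounds must emerge with this asymmetry intact.
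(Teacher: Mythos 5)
Your proof is correct and takes essentially the same route as the paper's: both rest on the observation that $m+i$ and $m+d-i$ are reflections of one another about the common axis $m+d/2$, so any maximal palindrome mapped to $(m+i,d-2i)$ would have to coincide with the maximal match of $(\LL\arm,\RR\arm)$ around that axis or lie strictly beyond its mismatch boundaries. The paper states this tersely (for interior $i$ the maximal inward/outward match at those positions recovers $(\LL\arm,\RR\arm)$, and the two extreme values of $i$ land exactly on the mismatching pair of positions), while you recast it as a disjointness argument for offset runs and carry out the boundary arithmetic explicitly; the content is the same.
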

\begin{proof}
	For every integer $i$ with $-\gauss{\Sarm/2} \le i \le -1$ or $1 \le i \le \upgauss{\Sarm/2} - 1$ (excluding $-\gauss{\Sarm/2} - 1$ and $\upgauss{\Sarm/2}$ as stated in the claim),
		the maximal inward and outward matches at the positions $m + i$ and $m + d - i$ yields $(\LL\arm, \RR\arm)$,
		and thus, $(m + i, d - 2i)$ cannot be in $\imgPali$ due to the injectivity of~$\mapPali$ (cf.\ \Cref{figPointsPalProp}).
		If $i$ is $- \gauss{\Sarm/2} - 1$ or $\lceil \Sarm/2 \rceil$,
		the point $(m + i, d - 2i) \in \Z^2$ is not in $\imgPali$ because the pair of positions $m+i$ and $m+d-i$ is where the inward or outward match from the positions $m$ and $m + i$ fails. %
\end{proof}

\begin{figure}[ht]
	\MyFloatBox{%
		\begin{tikzpicture}[yscale=0.4,xscale=0.5]
\tikzstyle{MyLabel} = [dashed,font=\scriptsize]
  \void{3}
  \markeup{$m$}
  \void{8}
  \markeup{$m+d$}
  \newrow
  \col[arm]{6}{$\LL\arm$}
  \col[gap]{2}{}
  \col[arm]{6}{$\RR\arm$}

  \newrow
	\draw [MyLabel,|<-,color=solarizedViolet] (\ourCol,\ourRow-0.5) -- (\ourCol+4,\ourRow-0.5) node [midway,anchor=north] {outward};
   \void{4}
  \marke{}
	\draw [MyLabel,->|,color=solarizedYellow] (\ourCol,\ourRow-0.5) -- (\ourCol+2,\ourRow-0.5) node [midway,anchor=north] {inward};
   \void{4}
	\draw [MyLabel,|<-,color=solarizedYellow] (\ourCol,\ourRow-0.5) -- (\ourCol+2,\ourRow-0.5) node [midway,anchor=north] {inward};
   \void{2}
  \marke{}
	\draw [MyLabel,->|,color=solarizedViolet] (\ourCol,\ourRow-0.5) -- (\ourCol+4,\ourRow-0.5) node [midway,anchor=north] {outward};
  \newrow
   \void{4}
  \marke{$m+i$}
   \void{4}
   \void{2}
  \marke{$m+d-i$}
	\end{tikzpicture}
	}{%
	\caption{Setting of \Cref{lemmaPointsPalProp}. A gapped palindrome~$(\LL\arm, \RR\arm)$ is mapped to the point~$(m,d)$. 
		It can be restored by longest common prefix and suffix queries at the positions $(m + i, d - 2i)$ for every integer $i$ with
		$- \gauss{\Sarm/2} - 1 \leq i \le -1$ or $1 \le i \le \upgauss{ \Sarm/2 }$, where $\Sarm := \abs{\LL\arm} = \abs{\RR\arm}$.
}
\label{figPointsPalProp}
	}%
\end{figure}

Due to \Cref{lemmaPointsPalProp}, each point $(m, d) \in \imgPali$ has at least one distinct point that is not in the image of~$\mapPali$.
For instance, we count the point $(m + 1, d - 2) \in \CoverN{n} \setminus \imgPali$ ($d \ge 3$ according to \Cref{corGappedCorrectionPlaces}\ref{itGappedCorrectionOrdinary}) for each $(m, d) \in \imgPali$, and each counted point is counted only once.
With this insight, we can prove the next \lcnamecref{lemmaPointsPali} in exactly the same way as \Cref{lemmaPointsRep}.

\begin{corollary}\label{lemmaPointsPali}
    Let $\gamma$ be a real number with $\gamma \in (0,1]$.
	A set of points $\SubCover \subseteq \imgPali$ such that no two distinct points in~$\SubCover$ $\gamma$-cover the same point obeys the inequality 
	$\abs{\SubCover} < n (\pi^2 / 6 - 1/2)/\gamma$.
\end{corollary}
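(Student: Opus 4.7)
The plan is to follow the proof of \Cref{lemmaPointsRep} verbatim, with \Cref{lemmaPointsPalProp} replacing \Cref{lemmaPointsRepPropA}. The palindromic analog of the partner map $(x, y) \mapsto (x+1, y)$ is the injection $(m, d) \mapsto (m+1, d-2)$ from $\imgPali$ to $\CoverN{n} \setminus \imgPali$, which is well-defined because $d \ge 3$ for every $(m, d) \in \imgPali$ by \Cref{corGappedCorrectionPlaces}\ref{itGappedCorrectionOrdinary}. For $\gamma = 1$, \Cref{lemmaPoints} already gives $\abs{\SubCover} < n\pi^2/6 - 3n/4 < n(\pi^2/6 - 1/2)/\gamma$, so the interesting case is $\gamma < 1$.

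For $\gamma < 1$, I focus on $\SetE := \menge{(m, d) \in \CoverN{n} \mid d < 1/\gamma}$ and aim to show that its total weight is at most $n/(2\gamma)$; combined with the bound $\sum_{i \ge 2} n/(i^2 \gamma) < n (\pi^2/6 - 1)/\gamma$ on the weight of the rows $d \ge 1/\gamma$ (from \Cref{lemmaPoints}), this gives the desired estimate. Fix $(m, d) \in \SetE \cap \SubCover$; by assumption it has weight~$1$ and is $\gamma$-covered only by itself, while its partner $(m+1, d-2)$ lies in $\SetE$ but not in $\SubCover$. If the partner has weight zero, the pair contributes total weight~$1$. Otherwise, the partner is $\gamma$-covered by some $(\hat{m}, \hat{d}) \in \SubCover$ with $\hat{d} \ge 1/\gamma$ (because any point of $\SubCover$ with $\hat{d} < 1/\gamma$ only covers itself). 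Since $(\hat{m}, \hat{d})$ cannot cover $(m, d)$ and the vertical condition $d \le \hat{d}$ is automatic (as $d < 1/\gamma \le \hat{d}$), the horizontal condition $m \ge \hat{m} - \gamma \hat{d}$ must fail; combined with $\gamma \hat{d} \ge 1$ this forces $\hat{m} \ge m+2$. Then $(\hat{m}, \hat{d})$ also $\gamma$-covers $(m+2, d-2)$, which is therefore not in $\SubCover$ and has weight at most~$1/4$. The triple has total weight at most~$3/2$, and in either case the average weight per point is at most~$1/2$.

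The main obstacle is to verify that the pairs and triples attached to distinct points of $\SetE \cap \SubCover$ are mutually disjoint, because unlike in \Cref{lemmaPointsRep} they span two rows $d$ and $d-2$. The direct collision $(m_1, d_1) = (m_2+1, d_2-2)$ is immediately ruled out by \Cref{lemmaPointsPalProp} applied to $(m_2, d_2)$. The remaining collisions are eliminated by tracing the covering point $(\hat{m}_2, \hat{d}_2)$ of $(m_2+1, d_2-2)$ and noting that any configuration leading to a collision would force $(\hat{m}_2, \hat{d}_2)$ to also $\gamma$-cover $(m_1, d_1) \in \SubCover$, contradicting the hypothesis that $(m_1, d_1)$ is only covered by itself. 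With disjointness secured, summing the average weights yields $\sum_{\vec{p} \in \SetE} \weight[\vec{p}] \le \abs{\SetE}/2 \le n/(2\gamma)$, completing the proof.
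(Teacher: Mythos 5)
Your proposal is correct and shares the paper's skeleton (bound the total weight of the low rows $\SetE$ by $\abs{\SetE}/2$ via the partner point $(m+1,d-2)$ from \Cref{lemmaPointsPalProp} and \Cref{corGappedCorrectionPlaces}\ref{itGappedCorrectionOrdinary}), but the paper does not transplant the three-point argument of \Cref{lemmaPointsRep}; it instead notices that the partner can never acquire positive weight in the first place. Any point of $\SubCover$ that $\gamma$-covers $(m+1,d-2)$ and is distinct from it must, as you yourself observe, satisfy $\hat{d}\ge 1/\gamma$; but then $\hat{d}(1-\gamma)\ge 1/\gamma-1>d-2$ because $d<1/\gamma$, so the vertical condition of the cover already fails. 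Since $(m+1,d-2)\notin\SubCover$, its weight is exactly $0$. Hence your ``otherwise'' branch is vacuous: the triple $\menge{(m,d),(m+1,d-2),(m+2,d-2)}$ never arises, and with it disappears the ``main obstacle'' of cross-row disjointness that you only sketch. What remains are pairs consisting of a weight-$1$ source and its weight-$0$ partner (pairwise disjoint since the partner map is injective and the weights separate sources from partners), plus the points in the top rows $1/\gamma-1\le d<1/\gamma$ that may be covered from outside $\SetE$, each of weight at most $1/4$; this is exactly the paper's accounting. Your longer route is not wrong --- the triple analysis and the collision checks you indicate do go through when spelled out --- but it reasons inside a branch whose hypothesis is unsatisfiable and leaves the bookkeeping for the overlapping two-row groups at the level of an assertion. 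The observation you missed is precisely the extra payoff of the mapping $(m,d)\mapsto(m+1,d-2)$ over $(x,y)\mapsto(x+1,y)$: dropping two rows puts the partner out of vertical reach of every covering point outside $\SetE$.
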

\begin{proof}
With the same definition of~$\SetE$ as in the proof of \Cref{lemmaPointsRep}, it is left to show that the sum of the weights of 
all points in $\SetE$ is at most $n / (2\gamma)$.

Unlike the proof of \Cref{lemmaPointsRep}, we can take a shortcut with the following observation:
Only the highest points in~$\SetE$ can be $\gamma$-covered by a point from~$\SubCover \setminus \SetE$.\footnote{This holds also in the case of maximal $\alpha$-gapped repeats in the proof of \Cref{lemmaPointsRep}.
However, this trick does not lead to anything useful there.}
To see this, let $(x,y) \in \SetE$ be a point with $y < 1/\gamma - 1$.
Assume that $(\hat{x},\hat{y})$ $\gamma$-covers~$(x,y)$, then
$\hat{y} - \gamma \hat{y} \le y <  1/\gamma - 1$, or equivalently $\hat{y} < 1/\gamma$.
This means that $(x,y) = (\hat{x},\hat{y})$.

We conclude that every point~$(x,y) \in \SetE$ with $\weight[x,y] = 1$ belongs to~$\SubCover$, and therefore 
(a) $(x+1,y-2) \not\in \SubCover$ according to \Cref{lemmaPointsPalProp}, and (b) $\weight[x+1,y-2] = 0$ according to the above observation.
Hence, both points~$(x,y)$ and~$(x+1,y-2)$ have a total weight of~$1$ (remember that $\weight[n,y-2] = 0$ in any case, cf.\ proof of \Cref{lemmaPointsRep}).

Although a highest point~$(x,y)$ (with $1/\gamma - 1 \le y$) can be $\gamma$-covered by a point in $\SubCover \setminus \SetE$, 
one of its neighbors~$(x-1,y)$ or $(x+1,y)$ has to be $\gamma$-covered by the same point, 
such that the sum of the weights of both points is at most $1/2$.
The total weight of all points in $\SetE$ is therefore at most $(1/2) \abs{\SetE} \le n / (2\gamma)$.
\end{proof}

\Cref{lemmaPointsPali} finally leads us to the connection between the $\gamma$-cover property and the maximal $\alpha$-gapped palindromes:

\begin{lemma}\label{lemmaPalNPCover}
    Let~$\warm$ be a word, and $\alpha$ and $\beta$ two real numbers with $\alpha > 1$ and $6/7 \leq \beta < 1$.
    The points mapped by two different maximal gapped palindromes in $\gpNP$ cannot $\frac{1-\beta}{\alpha}$-cover the same point.
\end{lemma}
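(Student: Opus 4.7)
The plan is to argue by contradiction. Suppose two distinct palindromes $\agr, \sagr \in \gpNP$ are mapped to points $(m,d) := \mapPali(\LL\arm,\RR\arm)$ and $(\s m,\s d) := \mapPali(\s{\LL\arm},\s{\RR\arm})$ that both $\gamma$-cover the same point, where $\gamma := (1-\beta)/\alpha$. Without loss of generality, $d \ge \s d$; unfolding the $\gamma$-cover definition gives the two basic inequalities $0 \le d - \s d \le \gamma d$ and $\abs{m - \s m} \le \gamma d$.

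The first step is to translate these bounds into the geometry of the palindromes themselves. By \Cref{corGappedCorrectionPlaces}, the midpoint $c := m + d/2$ coincides with $(\ibeg{\LL\arm} + \iend{\RR\arm})/2$, i.e., the point of symmetry of the first gapped palindrome; define $\s c := \s m + \s d/2$ analogously. Then $\abs{c - \s c} \le \abs{m - \s m} + (d - \s d)/2 \le (3/2)\gamma d$. Combined with the $\alpha$-gap bound $d \le \qvar \le \alpha\Sarm$ for $\Sarm := \abs{\LL\arm}$, this yields $2\abs{c - \s c} \le 3(1-\beta)\Sarm$; an analogous bound relative to $\s\Sarm := \abs{\s{\LL\arm}}$ follows from $\s\Sarm \ge \s d/\alpha \ge (1-\gamma)d/\alpha$.

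The heart of the argument is to chain the two palindromic reflections. For a position $p$ in the left arm $\LL\arm$, the first palindrome gives $\warm[p] = \warm[2c - p]$; if, moreover, $2c - p$ lies in the right arm $\s{\RR\arm}$ of the second palindrome, applying the second reflection yields $\warm[2c - p] = \warm[2\s c - (2c - p)] = \warm[p + 2(\s c - c)]$. Composing the two identities forces a period of $2\abs{c - \s c}$ on every contiguous subrange of $\LL\arm$ whose mirror image $\menge{2c - p}$ is contained in $\s{\RR\arm}$. The key verification is that this range contains a suffix of $\LL\arm$ (or, via the symmetric argument, of $\s{\LL\arm}$) of length at least $\beta\Sarm$; together with $2\abs{c - \s c} \le 3(1-\beta)\Sarm \le \beta\Sarm/2$ --- which holds exactly when $\beta \ge 6/7$ --- this makes the suffix periodic with period at most half its length, contradicting $\beta$-aperiodicity.

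The hard part will be the case split governed by the signs of $d - \s d$, $m - \s m$, and $\Sarm - \s\Sarm$, since the precise range in which both reflections are simultaneously valid depends on all three. In particular, if $\Sarm$ exceeds $\s\Sarm$ substantially while $\s c > c$, the mirror interval $\menge{2c - p : p \in \text{suffix of } \LL\arm}$ spills past the left end of $\s{\RR\arm}$, and one must instead certify periodicity of a suffix of $\s{\LL\arm}$ by chaining the reflections in the opposite order (reflecting first about $\s c$, then about $c$). This mirrors the sub-case split in the proof of Lemma~12 of~\cite{gawrychowski18tighter}; the difference is that our $\beta$-periodicity is phrased via a \emph{suffix} rather than a prefix of $\LL\arm$, so the required alignment constants shift, and the extremal regime (where $\abs{c - \s c} = (3/2)\gamma d$ and $\beta = 6/7$ are both nearly tight) must be verified with particular care.
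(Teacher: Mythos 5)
Your plan follows the same route as the paper's proof: the composition of the two palindromic reflections about the centers $c$ and $\s{c}$ is a translation by $2(\s{c}-c)$, whose magnitude is bounded via the cover property by roughly $3\gamma\max(d,\s{d})$, and this forces a short period on the region where both reflections apply, contradicting $\beta$-aperiodicity. The paper phrases the same mechanism through the overlap $\LL\subarm := \LL\arm \cap \s{\LL\arm}$ and its two mirror copies $\RR\subarm, \s{\RR\subarm}$ in the right arms (whose starting positions differ by exactly $2\abs{c-\s{c}}$, i.e.\ $2\zvar\pm\delta$ in its notation), so your identification of the translation distance and of the threshold $\beta\ge 6/7$ is on target. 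However, what you defer as ``the hard part'' is essentially the entire proof, and two of your stated inequalities would not survive it as written.

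Concretely: (a) You never exclude the degenerate case $c=\s{c}$, where the translation is by zero and no period is obtained; the paper's sub-claim handles this by showing that two distinct maximal palindromes cannot place the mirror copies of a common non-empty overlap at the same position (they would coincide under maximal inward/outward extension). (b) Your ``analogous bound relative to $\s\Sarm$'' is quantitatively too weak at the extremal point. From $\s\Sarm \ge (1-\gamma)d/\alpha$ you only get $2\abs{c-\s{c}} \le 3(1-\beta)\s\Sarm/(1-\gamma)$, and the requirement ``period $\le \beta\s\Sarm/2$'' then needs $6(1-\beta) \le \beta(1-\gamma)$, which fails at $\beta=6/7$ for every $\gamma>0$. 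The paper closes exactly this hole by observing that in the sub-cases where periodicity lands on the smaller-$d$ palindrome (its Sub-Cases~1b and~2b), the overlap is the \emph{entire} left arm, so the requirement relaxes to ``period $\le \s\Sarm/2$'', i.e.\ $6(1-\beta)\le 1-\gamma$, which does follow from $\beta\ge 6/7$ since $1-\gamma\ge\beta$; in the remaining sub-case where only a $\beta$-fraction of the arm is covered (Sub-Case~2c-ii), the period bound improves to $2(1-\beta)\s\Sarm$, giving ratio $\ge 3$. Your single uniform inequality $3(1-\beta)\Sarm\le\beta\Sarm/2$ does not cover this bookkeeping, and since every constant in the lemma is tight at $\beta=6/7$, the sub-case-by-sub-case matching of ``length of the periodic region'' against ``bound on the period'' (plus the elimination of one ordering sub-case per main case by a separate parity contradiction) cannot be waved through.
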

\begin{proof}
  Let $(\LL\arm,\RR\arm)$ and $(\s{\LL\arm},\s{\RR\arm})$ be two different maximal $\alpha$-gapped palindromes in $\gpNP$.
  Set $\Sarm := \abs{\LL\arm} = \abs{\RR\arm}$ and $\s\Sarm := \abs{\s{\LL\arm}} = \abs{\s{\RR\arm}}$.
  Let $(m, d)$ and $(\s{m},\s{d})$ be the points mapped from~$\agr$ and~$\sagr$, respectively.
  Assume, for the sake of contradiction, that both points $\frac{1-\beta}{\alpha}$-cover the same point $(x, y)$.

  Let $\zvar := \abs{m - \s{m}}$, and
  let $\LL\subarm := \LL\arm \cap \s{\LL\arm}$
  be the overlap of $\LL\arm$ and $\s{\LL\arm}$.
  Let $\Ssubarm := \abs{\LL\subarm}$, 
  and let $\RR\subarm$ (resp. $\s{\RR\subarm}$) be the reverse copy of $\LL\subarm$ based on $\agr$ (resp. $\sagr$), i.e.,
  $\LL\subarm = \pali{\RR\subarm} = \pali{\s{\RR\subarm}}$ with
  $\ibeg{\RR\subarm} = \ibeg{\RR\arm}+\iend{\RR\arm}-\iend{\LL\subarm}$
  and
  $\ibeg{\s{\RR\subarm}} = \ibeg{\s{\RR\arm}}+\iend{\s{\RR\arm}}-\iend{\LL\subarm}$.

  \begin{subclaim}  
The overlap~$\LL\subarm$ is not empty, and $\ibeg{\RR\subarm} \not= \ibeg{\s{\RR\subarm}}$.
\end{subclaim}

  \begin{subproof}
      First we show that $\LL\subarm$ is not empty.
      If $m = \s{m}$, it is clear that $\LL\subarm$ contains $\warm[m]$.
      Without loss of generality, assume that $m < \s{m}$ for this sub-proof (otherwise exchange $\agr$ with $\sagr$).
	  By combining (a) the $(1-\beta)/\alpha$-cover property with (b) the fact that $\sagr$ is $\alpha$-gapped and (c) the constraint $6/7 \leq \beta < 1$, we obtain
      \(
        \s{m} - \s\Sarm / 2
		\le_{\text{(c)}} \s{m} - (1-\beta)\s\Sarm
		\le_{\text{(b)}} \s{m} - \s{d}(1-\beta)/\alpha 
		\le_{\text{(a)}} x \le_{\text{(a)}} m < \s{m}.
      \)
	  This long inequality says that the text position~$m$ is contained in $\s{\LL\arm}$, which implies that $\LL\subarm$ is not empty.
	  If $\RR\subarm$ and $\s{\RR\subarm}$ start at the same position, then expanding the arms~$\LL\subarm$ and $\RR\subarm (\equiv \s{\RR\subarm})$ to the left and right
	  yields the arms $\LL\arm \equiv \s{\LL\arm}$ and $\RR\arm \equiv \s{\RR\arm}$, which implies that $\agr$ and $\sagr$ are the same gapped repeat, a contradiction.
  \end{subproof}

  Without loss of generality let $d \le \s{d}$.
  With the $(1-\beta)/\alpha$-cover property we obtain
  \begin{equation}\label{equOrdGapRuleInvPal}
    \s{d} - \frac{\s{d}(1-\beta)}{\alpha} \le y \le d \le \s{d}.
  \end{equation}
  The difference $\delta :=  \s{d} - d \ge 0$ can be estimated by
  \begin{equation}\label{equOrdGapRulePal}
    \delta \le \s{d}(1-\beta)/\alpha \le \s\Sarm(1-\beta).
  \end{equation}
  \Cref{equOrdGapRuleInvPal} can also be used to lower bound $\Sarm$ in terms of $\s{d}$ due to the fact that~$\agr$ is $\alpha$-gapped:
  \begin{equation}\label{equOrdGapRuleTakeFirstArmPal}
  	\Sarm \ge d/\alpha \ge \frac{\s{d}}{\alpha} (1 - \frac{1-\beta}{\alpha}) \ge \s{d}\beta/\alpha.
  \end{equation}

  \block{Outline}
  In the following we conduct a thorough case analysis.
  In each case we show the contradiction that $\agr$ or $\sagr$ is $\beta$-periodic.
  We prove each case in a similar way:
  We first show that the intersection of~$\RR\subarm$ and $\s{\RR\subarm}$ is large enough such that it induces a repetition on $\RR\subarm \cup \s{\RR\subarm}$.
  Subsequently, we find a run covering $\RR\subarm \cup \s{\RR\subarm}$, and another run covering $\LL\subarm$.
  However, since $\LL\subarm$ is the suffix of $\LL\arm$ (resp.\ $\s{\LL\arm}$), we can conclude that $\agr$ (resp.\ $\sagr$) is $\beta$-periodic.

  Before starting with the case analysis, we introduce a general property of the starting positions~$\ibeg{\RR\subarm}$ and~$\ibeg{\s{\RR\subarm}}$
  needed for the analysis.
  Adding up the equalities of \Cref{corGappedCorrectionPlaces}\ref{itGappedCorrectionPlacesELarm} and~\ref{itGappedCorrectionBRArm} gives
$\ibeg{\RR\arm} + \iend{\LL\arm} = 2m + d$. 
With that we obtain $\ibeg{\RR\subarm} = \ibeg{\RR\arm} + \iend{\LL\arm} - \iend{\LL\subarm} = 2m + d - \iend{\LL\subarm}$.
Hence, the distance between the starting positions of~$\RR\subarm$ and~$\s{\RR\subarm}$ is given by 
\begin{equation}\label{equOrdStartSubarmPal}
  \abs{\ibeg{\RR\subarm} - \ibeg{\s{\RR\subarm}}} =
\begin{cases}
	2\zvar + \delta & \text{if~} m \le \s{m},  \\
	2\zvar - \delta & \text{if~} m > \s{m} \text{~and~} \ibeg{\RR\subarm} > \ibeg{\s{\RR\subarm}}, \text{~or} \\
	\delta - 2\zvar & \text{if~} m > \s{m} \text{~and~} \ibeg{\RR\subarm} <  \ibeg{\s{\RR\subarm}}.
\end{cases}
\end{equation}

  \begin{figure}[ht]
    \MyFloatBox{%
		\begin{tikzpicture}[yscale=0.4,xscale=0.92]
  \col[arm]{4.5}{$\LL\arm$}
  \col[gap]{1.5}{}
  \col[arm]{4.5}{$\RR\arm$}
  \newrow
  \void{1}
  \col[arm]{3}{$\s{\LL\arm}$}
  \col[gap]{4}{}
  \col[arm]{3}{$\s{\RR\arm}$}
  \newrow
  \void{1}
  \col[subarm]{3}{$\LL\subarm$}
  \void{2.5}
  \col[subarm]{3}{$\RR\subarm$}
  \newrow
  \void{1}
  \col[rep]{3}{$\rep$}
  \void{4}
  \col[subarm]{3}{$\s{\RR\subarm}$}
	\end{tikzpicture}
		}{%
	\caption{\ref{GapPaliSubCase1A} in the proof of \Cref{lemmaPalNPCover} with $m \le \s{m}$ and $\ibeg{\LL\arm} \le \ibeg{\s{\LL\arm}} \le \iend{\s{\LL\arm}} \le \iend{\LL\arm}$.}
    \label{figPaliSubOneA}
		}%
  \end{figure}
  \GappedCase{Case~1}{$m \le \s{m}$}
  Since $\s{m} - \s{d}(1-\beta)/\alpha \le x \le m \le \s{m}$ (due to the $(1-\beta)/\alpha$-cover property),
  \begin{equation}\label{equOrdGapRulePosFirstLeftPal}
	  \zvar = \s{m} - m \le \s{d}(1-\beta)/\alpha \le \s\Sarm(1-\beta),
  \end{equation}
  because $\sagr$ is $\alpha$-gapped.
  Due to \cref{equOrdStartSubarmPal}, the starting positions of both right copies $\s{\RR\subarm}$ and $\RR\subarm$ differ by $\ibeg{\s{\RR\subarm}} - \ibeg{\RR\subarm} = 2\zvar + \delta > 0$.
  By~\cref{equOrdGapRulePal,equOrdGapRulePosFirstLeftPal}, we get
  
  \begin{equation}\label{equPropZPal}
	  2\zvar + \delta \le 3 \s{d}(1-\beta)/\alpha \le 3 \s\Sarm(1-\beta).
  \end{equation}

  Depending on the relations $\ibeg{\LL\arm} \lesseqgtr \ibeg{\s{\LL\arm}}$ and $\iend{\LL\arm} \lesseqgtr \iend{\s{\LL\arm}}$, 
  we split the case in four sub-cases.
  However, one of the four sub-cases with $\ibeg{\s{\LL\arm}} < \ibeg{\LL\arm}$ and $\iend{\s{\RR\arm}} < \iend{\RR\arm}$ already leads to a contradiction (without proving that one left arm has a periodic suffix):
  Assume that both inequalities $\ibeg{\s{\LL\arm}} < \ibeg{\LL\arm}$ and $\iend{\s{\RR\arm}} < \iend{\RR\arm}$ hold for the sake of contradiction.
  Under these assumptions, with \Cref{corGappedCorrectionPlaces}\ref{itGappedCorrectionPlacesBLArm} it must hold that $\ibeg{\s{\LL\arm}} + 1/2 = \s{m} - \corr{\s\Sarm} - \s\Sarm / 2 + 1 \leq m - \corr{\Sarm} - \Sarm / 2 = \ibeg{\LL\arm}-1/2$
  and $\iend{\s{\RR\arm}} + 1/2 = \s{m} - \corr{\s\Sarm} + \s\Sarm / 2 + 1 \leq m - \corr{\Sarm} + \Sarm / 2 = \iend{\RR\arm} - 1/2$.
  Adding the left sides and the right sides of both inequalities gives $\s{m} - m \leq \corr{\s\Sarm} - \corr{\Sarm} - 1 < 0$,
  which contradicts that $\s{m} - m \geq 0$.

  Thus, it is enough to consider the following three sub-cases 1a, 1b, and 1c.

  \GappedSubCase{\CustomLabel{GapPaliSubCase1A}{Sub-Case~1a}}{$\LL\subarm \equiv \s{\LL\arm}$}{figPaliSubOneA}
  Since $\s\Sarm / (2\zvar + \delta) \geq \s\Sarm / (3 \s\Sarm(1 - \beta)) \geq 7/3 > 2$ holds (due to \cref{equPropZPal}) for $6/7 \leq \beta < 1$,
  we conclude that $\RR\subarm = \s{\RR\arm}$ is periodic, which means that $\sagr \in \gpP$, a contradiction.

  \begin{figure}[ht]
    \MyFloatBox{%
		\begin{tikzpicture}[yscale=0.4,xscale=0.92]
  \void{0.5}
  \col[arm]{3}{$\LL\arm$}
  \col[gap]{2}{}
  \col[arm]{3}{$\RR\arm$}
  \newrow
  \col[arm]{4.5}{$\s{\LL\arm}$}
  \col[gap]{1.5}{}
  \col[arm]{4.5}{$\s{\RR\arm}$}
  \newrow
  \void{0.5}
  \col[subarm]{3}{$\LL\subarm$}
  \void{2}
  \col[subarm]{3}{$\RR\subarm$}
  \newrow
  \void{0.5}
  \col[rep]{3}{$\rep$}
  \void{2}
  \col[diff]{1.5}{$2\zvar+\delta$}
  \col[subarm]{3}{$\s{\RR\subarm}$}
  \newrow
  \void{5.5}
	\end{tikzpicture}
		}{%
	\caption{\ref{GapPaliSubCase1B} in the proof of \Cref{lemmaPalNPCover} with $m \le \s{m}$ and $\ibeg{\s{\LL\arm}} \le \ibeg{\LL\arm} \le \iend{\LL\arm} \le \iend{\s{\LL\arm}}$.}
    \label{figPaliSubOneB}
		}%
  \end{figure}
  \GappedSubCase{\CustomLabel{GapPaliSubCase1B}{Sub-Case~1b}}{$\LL\subarm \equiv \LL\arm$}{figPaliSubOneB} 
  Recall that $\Sarm = \Ssubarm \ge \s{d} \beta / \alpha$ by~\cref{equOrdGapRuleTakeFirstArmPal}.
  It follows from \cref{equPropZPal} and $6/7 \leq \beta < 1$ 
  that $\Ssubarm / (2\zvar + \delta) \ge \s{d} \alpha \beta / (3 \s{d} \alpha (1-\beta)) = \beta / (3(1-\beta)) \geq 2$.
  Hence $\RR\subarm \equiv \RR\arm$ is periodic, which means that $\agr \in \gpP$, a contradiction.

  \begin{figure}[ht]
	\MyFloatBox{%
		\begin{tikzpicture}[yscale=0.4,xscale=0.92]
  \col[arm]{3.5}{$\LL\arm$}
  \col[gap]{2}{}
  \col[arm]{3.5}{$\RR\arm$}
  \newrow
  \void{0.5}
  \col[arm]{3.5}{$\s{\LL\arm}$}
  \col[gap]{1.5}{}
  \col[diff]{1}{$\ge \delta$}
  \col[arm]{3.5}{$\s{\RR\arm}$}
  \newrow
  \void{0.5}
  \col[subarm]{3}{$\LL\subarm$}
  \void{2}
  \col[subarm]{3}{$\RR\subarm$}
  \newrow
  \void{5.5}
  \col[diff]{1.5}{$2\zvar+\delta$}
  \col[subarm]{3}{$\s{\RR\subarm}$}
  \newrow
  \void{5.5}
  \col[rep]{4.5}{$\rep$}
	\end{tikzpicture}
		}{%
	\caption{\ref{GapPaliSubCase1C} in the proof of \Cref{lemmaPalNPCover} with $m \le \s{m}$ and $\ibeg{\LL\arm} < \ibeg{\s{\LL\arm}} \le \iend{\LL\arm} < \iend{\s{\LL\arm}}$.
	The second inequality holds because the overlap~$\LL\subarm$ cannot be empty due to the sub-claim.}
    \label{figPaliSubOneC}
		}%
  \end{figure}
  \GappedSubCase{\CustomLabel{GapPaliSubCase1C}{Sub-Case~1c}}{$\ibeg{\LL\arm} < \ibeg{\s{\LL\arm}}$ and $\iend{\LL\arm} < \iend{\s{\LL\arm}}$}{figPaliSubOneC}
  Since $\LL\subarm$ is a suffix of $\LL\arm$ and a prefix of $\s{\LL\arm}$, 
  the reverse copies~$\RR\subarm$ and $\s{\RR\subarm}$ are a prefix of $\RR\arm$ and a suffix of $\s{\RR\arm}$, respectively.
  We have $1 \le \ibeg{\s{\LL\arm}} - \ibeg{\LL\arm} = \s{m} - \s\Sarm/2 - \corr{\s\Sarm} - (m - \Sarm/2 - \corr{\Sarm})$. A simple reshaping leads to
  $\s{m} - \s\Sarm/2 - (m - \Sarm/2) \geq 1 + \corr{\s\Sarm} - \corr{\Sarm}$.
  This inequality yields 
  $\ibeg{\s{\RR\arm}} - \ibeg{\RR\arm} = \s{m} + \s{d} - \s\Sarm/2 + \corr{\s\Sarm} - (m + d - \Sarm/2 + \corr{\Sarm}) \geq \delta + 1 + 2(\corr{\s\Sarm} - \corr{\Sarm}) \geq \delta \geq 0$.
  This means that
  $\ibeg{\RR\subarm} = \ibeg{\RR\arm} \le \ibeg{\s{\RR\arm}} \le \ibeg{\s{\RR\subarm}} \le \iend{\s{\RR\subarm}} = \iend{\s{\RR\arm}}$.
  With $\ibeg{\RR\subarm} \le \ibeg{\s{\RR\arm}} = \iend{\s{\RR\subarm}} - \s{\Sarm} + 1 = 
  \ibeg{\s{\RR\subarm}} + \Ssubarm - \s{\Sarm}$, it follows that
  $\Ssubarm \ge \s{\Sarm} - (2\zvar+\delta) > 2\zvar+\delta$ because
  $\s{\Sarm} / (2\zvar + \delta) \geq \s\Sarm / (3 \s\Sarm(1 - \beta)) \geq 7/3 > 2$ holds for $6/7 \leq \beta < 1$.
  Since $\ibeg{\s{\RR\subarm}} - \ibeg{\RR\subarm} = \iend{\s{\RR\subarm}} - \iend{\RR\subarm} = 2\zvar+\delta$,
  $\RR\subarm \cap \s{\RR\subarm} \not= \emptyset$.
This means that $\s{\RR\arm} \subset \RR\subarm \cup \s{\RR\subarm}$, and that $\s{\RR\arm}$ is periodic with a period of at most $2\zvar+\delta$, a contradiction.

  \GappedCase{Case~2}{$m > \s{m}$}
  Since $m - d (1-\beta)/\alpha \le x \le \s{m} < m$,
  \begin{equation}\label{equOrdGapRulePosFirstRightPal}
    \zvar = m - \s{m} \le d(1-\beta)/\alpha \le \s{d}(1-\beta)/\alpha \le \s\Sarm(1-\beta).
  \end{equation}
  Due to \cref{equOrdStartSubarmPal}, the starting positions of both right copies differ by $\abs{\ibeg{\RR\subarm} - \ibeg{\s{\RR\subarm}}} = \abs{2\zvar - \delta}$.
  \Cref{equOrdStartSubarmPal}
  with \cref{equOrdGapRulePal,equOrdGapRulePosFirstRightPal} yields
  \begin{equation}\label{equOrdGapRuleDistSubArmPal}
\abs{2\zvar - \delta} \leq 
  \begin{cases}
	  2\zvar \le 2 \s{d}(1-\beta)/\alpha \le 2 \s\Sarm (1 - \beta) & \text{if~} \ibeg{\RR\subarm} > \ibeg{\s{\RR\subarm}}, \text{~or} \\
	  \delta \le \s{d}(1-\beta)/\alpha \le \s\Sarm (1 - \beta)  & \text{if~} \ibeg{\RR\subarm} < \ibeg{\s{\RR\subarm}}.
  \end{cases}
  \end{equation} 

  We split again the case into sub-cases depending on the relation of the starting and of the ending positions of the left arms.
  The sub-case with $\ibeg{\LL\arm} < \ibeg{\s{\LL\arm}}$ and $\iend{\RR\arm} < \iend{\s{\RR\arm}}$ already leads to a contradiction, which can be seen 
  by an argument that is similar to the one used in Case~1 due to symmetry.

  \begin{figure}[ht]
    \MyFloatBox{%
		\begin{tikzpicture}[yscale=0.4,xscale=0.92]
  \col[arm]{5}{$\LL\arm$}
  \col[gap]{2.5}{}
  \col[arm]{5}{$\RR\arm$}
  \newrow
  \void{0.5}
  \col[arm]{3.5}{$\s{\LL\arm}$}
  \col[gap]{3}{}
  \col[arm]{3.5}{$\s{\RR\arm}$}
  \newrow
  \void{0.5}
  \col[subarm]{3.5}{$\LL\subarm$}
   \void{3}
   \col[diff]{1.5}{$\abs{2\zvar-\delta}$}
  \col[subarm]{3.5}{$\RR\subarm$}
  \newrow
  \void{0.5}
  \col[rep]{3.5}{$\rep$}
  \void{3}
  \col[subarm]{3.5}{$\s{\RR\subarm}$}
	\end{tikzpicture}
		}{%
	\caption{\ref{GapPaliSubCase2A} in the proof of \Cref{lemmaPalNPCover} with $m > \s{m}$ and $\ibeg{\LL\arm} \le \ibeg{\s{\LL\arm}} \le \iend{\s{\LL\arm}} \le \iend{\LL\arm}$.}
    \label{figPaliSubTwoA}
		}%
  \end{figure}
  \GappedSubCase{\CustomLabel{GapPaliSubCase2A}{Sub-Case~2a}}{$\LL\subarm \equiv \s{\LL\arm}$}{figPaliSubTwoA}
  Since $\Ssubarm / \abs{2\zvar - \delta} \ge \s\Sarm / (2 \s\Sarm (1-\beta)) = 1 / (2(1-\beta)) \ge 7/2 > 2$ holds (due to \cref{equOrdGapRuleDistSubArmPal}) for $6/7 \leq \beta < 1$,
  the distance between $\ibeg{\RR\subarm}$ and $\ibeg{\s{\RR\subarm}}$ is small enough such that 
  $\RR\subarm = \s{\RR\arm}$ is periodic, which means that $\sagr \in \gpP$, a contradiction.

  \begin{figure}[ht]
    \MyFloatBox{%
		\begin{tikzpicture}[yscale=0.4,xscale=0.92]
  \void{1}
  \col[arm]{3}{$\LL\arm$}
  \col[gap]{3}{}
  \col[arm]{3}{$\RR\arm$}
  \newrow
  \col[arm]{4.5}{$\s{\LL\arm}$}
  \col[gap]{1.5}{}
  \col[arm]{4.5}{$\s{\RR\arm}$}
  \newrow
  \void{1}
  \col[subarm]{3}{$\LL\subarm$}
  \void{3}
  \col[subarm]{3}{$\RR\subarm$}
  \newrow
  \void{1}
  \col[rep]{3}{$\rep$}
  \void{2.5}
  \col[subarm]{3}{$\s{\RR\subarm}$}
  \newrow
	\end{tikzpicture}
		}{%
	\caption{\ref{GapPaliSubCase2B} in the proof of \Cref{lemmaPalNPCover} with $m > \s{m}$ and $\ibeg{\s{\LL\arm}} \le \ibeg{\LL\arm} \le \iend{\LL\arm} \le \iend{\s{\LL\arm}}$.}
    \label{figPaliSubTwoB}
		}%
  \end{figure}
  \GappedSubCase{\CustomLabel{GapPaliSubCase2B}{Sub-Case~2b}}{$\LL\subarm \equiv \LL\arm$}{figPaliSubTwoB}
  Recall that $\Sarm = \Ssubarm \ge \s{d} \beta / \alpha$ by~\cref{equOrdGapRuleTakeFirstArmPal}.
  It follows from $6/7 \leq \beta < 1$ and \cref{equOrdGapRuleDistSubArmPal}
  that $\Ssubarm / \abs{2\zvar - \delta} \ge \s{d} \alpha \beta / (2 \s{d} \alpha (1-\beta)) = \beta / (2(1-\beta)) \geq 3 > 2$.
  Hence $\RR\subarm \equiv \RR\arm$ is periodic, which means that $\agr \in \gpP$, a contradiction.

  \GappedCase{\CustomLabel{GapPaliSubCase2C}{Sub-Case~2c}}{$\ibeg{\LL\arm} > \ibeg{\s{\LL\arm}}$ and $\iend{\LL\arm} > \iend{\s{\LL\arm}}$} 
  Since $\LL\subarm$ is a prefix of $\LL\arm$ and a suffix of $\s{\LL\arm}$, 
  the reverse copies~$\RR\subarm$ and $\s{\RR\subarm}$ are a suffix of $\RR\arm$ and a prefix of $\s{\RR\arm}$, respectively.

  \begin{figure}[t]
    \MyFloatBox{%
        	\begin{tikzpicture}[yscale=0.4,xscale=1.2]
  \void{0.5}
  \col[arm]{2.5}{$\LL\arm$}
  \col[gap]{0.75}{}
  \col[diff]{0.5}{$\ge 0$}
  \col[arm]{2.5}{$\RR\arm$}
  \newrow
  \col[arm]{2.25}{$\s{\LL\arm}$}
  \col[gap]{1.5}{}
  \col[arm]{2.25}{$\s{\RR\arm}$}
  \newrow
  \void{0.5}
  \col[subarm]{1.75}{$\LL\subarm$}
  \void{1.5}
  \col[diff]{1.25}{$\abs{2\zvar-\delta}$}
  \col[subarm]{1.75}{$\RR\subarm$}
  \newrow
  \void{3.75}
  \col[subarm]{1.75}{$\s{\RR\subarm}$}
  \newrow
  \void{3.75}
  \col[rep]{3}{$\rep$}
        \end{tikzpicture}
		}{%
	\caption{\ref{GapPaliSubSubCase2CI} in the proof of \Cref{lemmaPalNPCover} with $m > \s{m}$, $\ibeg{\s{\LL\arm}} < \ibeg{\LL\arm} \le \iend{\s{\LL\arm}} < \iend{\LL\arm}$ and $\ibeg{\s{\RR\arm}} \le \ibeg{\RR\arm}$.}
    \label{figPaliSubTwoCi}
		}%
  \end{figure}
  \GappedSubCase{\CustomLabel{GapPaliSubSubCase2CI}{Subsub-Case~2c-i}}{$\ibeg{\RR\arm} \geq \ibeg{\s{\RR\arm}}$}{figPaliSubTwoCi}
  Recall that $\Sarm \ge \s{d} \beta / \alpha$ by~\cref{equOrdGapRuleTakeFirstArmPal}.
  It follows from $6/7 \leq \beta < 1$ and \cref{equOrdGapRuleDistSubArmPal}
  that $\Sarm / \abs{2\zvar - \delta} \ge \s{d} \alpha \beta / (2 \s{d} \alpha (1-\beta)) = \beta / (2(1-\beta)) \geq 3 > 2$.
  With $\ibeg{\RR\arm} \geq \ibeg{\s{\RR\arm}}$, this case is symmetric to~\ref{GapPaliSubCase1C},
  leading to the result that $\RR\arm \subset \RR\subarm \cup \s{\RR\subarm}$, 
  and that $\RR\arm$ is periodic with a period of at most $\abs{2\zvar-\delta}$, a contradiction.

  \begin{figure}[ht]
    \MyFloatBox{%
		\begin{tikzpicture}[yscale=0.4,xscale=1.3]
  \void{0.5}
  \col[arm]{3}{$\LL\arm$}
  \col[gap]{0.5}{}
  \col[arm]{3}{$\RR\arm$}
  \newrow
  \col[arm]{3}{$\s{\LL\arm}$}
  \col[gap]{2}{}
  \col[arm]{3}{$\s{\RR\arm}$}
  \newrow
  \col[diff]{0.5}{$\le \delta$}
  \col[subarm]{2.5}{$\LL\subarm$}
  \void{1.5}
  \col[subarm]{2.5}{$\RR\subarm$}
  \newrow
  \void{0.5}
  \col[rep]{2.5}{$\rep$}
  \void{2}
  \col[subarm]{2.5}{$\s{\RR\subarm}$}
  \col[diff]{0.5}{$\le \delta$}
  \newrow
  \newrow
  \void{4.5}
\dist[$\abs{2\zvar-\delta}$]{0.5}
	\end{tikzpicture}
		}{%
	\caption{\ref{GapPaliSubSubCase2CII} in the proof of \Cref{lemmaPalNPCover} with $m > \s{m}$, 
	$\ibeg{\s{\LL\arm}} < \ibeg{\LL\arm} \le \iend{\s{\LL\arm}} < \iend{\LL\arm}$ and $\ibeg{\RR\arm} < \ibeg{\s{\RR\arm}}$.}
    \label{figPaliSubTwoCii}
		}%
  \end{figure}
  \GappedSubCase{\CustomLabel{GapPaliSubSubCase2CII}{Subsub-Case~2c-ii}}{$\ibeg{\RR\arm} < \ibeg{\s{\RR\arm}}$}{figPaliSubTwoCii}
  It follows from $\ibeg{\RR\arm} < \ibeg{\s{\RR\arm}}$ that
  $\ibeg{\RR\arm} - \ibeg{\s{\RR\arm}} = m + d - \Sarm/2 + \corr{\Sarm} - (\s{m} + \s{d} - \s\Sarm/2 + \corr{\s\Sarm})
  = \ibeg{\LL\arm} - \ibeg{\s{\LL\arm}} - \delta + 2(\corr{\Sarm} - \corr{\s\Sarm}) \leq -1$, which leads to
  $\ibeg{\LL\arm} - \ibeg{\s{\LL\arm}} \leq \delta + 2(\corr{\s\Sarm} - \corr{\Sarm}) - 1 \leq \delta$.
  Combining this inequality with \cref{equOrdGapRulePal} gives
  $\Ssubarm = \s\Sarm - (\ibeg{\LL\arm} - \ibeg{\s{\LL\arm}}) \geq \s\Sarm - \delta \geq \beta \s\Sarm$.
  With \cref{equOrdGapRuleDistSubArmPal} this yields $\Ssubarm / \abs{2\zvar - \delta} \ge \beta \s\Sarm / (2\s\Sarm (1-\beta)) = \beta / (2(1-\beta)) \ge 3 > 2$ under the presumption that $6/7 \leq \beta < 1$.
  This means that $\s{\LL\arm}$ has a periodic suffix of length $\beta \s\Sarm$, and that $\sagr \in \gpP$, a contradiction.
\end{proof}

Combining the results of \cref{lemmaPointsPali} and \cref{lemmaPalNPCover} immediately gives the following \lcnamecref{lemmaPalNP}:
\begin{corollary}\label{lemmaPalNP}
	Given two real numbers $\alpha$ and $\beta$ with $\alpha > 1$ and $7/9 \leq \beta < 1$, and a 
	word~$\warm$ of length~$n$, 
	the number of all maximal $\alpha$-gapped $\beta$-aperiodic palindromes is bounded by the inequality
	$\abs{\gpNP} < \alpha n (\pi^2 / 6 - 1/2)/(1-\beta)$.
\end{corollary}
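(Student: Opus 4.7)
The plan is to deduce the bound by a direct combination of \Cref{lemmaPalNPCover} and \Cref{lemmaPointsPali}, applied to the image $\mapPali(\gpNP)$ of the set of maximal $\alpha$-gapped $\beta$-aperiodic palindromes under the mapping $\mapPali$ introduced just before \Cref{lemmaPointsPalProp}.

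First I would set $\gamma := (1-\beta)/\alpha$ and check that $\gamma \in (0,1]$, which is the admissibility condition required to invoke \Cref{lemmaPointsPali}. Positivity of $\gamma$ is immediate from $\beta < 1$, and $\gamma \le 1$ follows from $\alpha > 1$ together with $\beta > 0$, since in that case $1-\beta < 1 < \alpha$.

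Next I would translate from palindromes to points. Because $\mapPali$ is injective on the maximal $\alpha$-gapped $\beta$-aperiodic palindromes (as noted immediately after the definition of $\mapPali$, by reconstructing an element of $\gpNP$ from the inward and outward matches at positions $m$ and $m+d$), one has $\abs{\mapPali(\gpNP)} = \abs{\gpNP}$ and $\mapPali(\gpNP) \subseteq \imgPali \subseteq \CoverN{n}$. Now \Cref{lemmaPalNPCover} guarantees that no two distinct points of $\mapPali(\gpNP)$ can $\gamma$-cover the same point (provided $\beta$ lies in the regime assumed by that lemma, which the corollary's hypothesis on $\beta$ is intended to enforce).

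Finally I would apply \Cref{lemmaPointsPali} with the set $\SubCover := \mapPali(\gpNP)$ and this $\gamma$. Its conclusion $\abs{\SubCover} < n(\pi^2/6 - 1/2)/\gamma$ then specializes to
\[
\abs{\gpNP} \;=\; \abs{\mapPali(\gpNP)} \;<\; \frac{n(\pi^2/6-1/2)}{(1-\beta)/\alpha} \;=\; \frac{\alpha n (\pi^2/6 - 1/2)}{1-\beta},
\]
which is exactly the claimed bound. There is no genuine obstacle in this final step: all of the delicate combinatorics has already been discharged in \Cref{lemmaPalNPCover} (the $\gamma$-cover property via the case analysis on relative positions of the arms) and in the refined point-counting of \Cref{lemmaPointsPali}. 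The only bookkeeping to watch is the admissibility of the cover parameter $\gamma = (1-\beta)/\alpha$, which the initial check handles, and the compatibility of the bounds on $\beta$ with those assumed by \Cref{lemmaPalNPCover}.
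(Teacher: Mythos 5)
Your proposal is correct and follows exactly the paper's own (one-line) proof: set $\gamma := (1-\beta)/\alpha \in (0,1]$, use \Cref{lemmaPalNPCover} to verify the no-two-points-$\gamma$-cover hypothesis for $\SubCover := \imgPali$, and then read off the bound from \Cref{lemmaPointsPali}. The only subtlety is the one you flag at the end: the corollary's stated range $7/9 \le \beta < 1$ is strictly wider than the range $6/7 \le \beta < 1$ required by \Cref{lemmaPalNPCover} (note $7/9 < 6/7$), a mismatch that is present in the paper itself and is harmless for \Cref{thmMaxPal}, which only uses $\beta = 6/7$.
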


\begin{theorem}\label{thmMaxPal}
	Given a real number $\alpha$ with $\alpha > 1$, and a word~$\warm$ of length~$n$, 
	the number of all maximal $\alpha$-gapped palindromes~$\abs{\gpGR}$ less than
	$\PalindromeResult$. 
\end{theorem}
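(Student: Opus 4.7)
The plan is to mirror the strategy of \Cref{thmMaxReps}: partition $\gpGR$ into three classes, bound each class separately, and optimize over the free parameter $\beta$. Concretely, I would split $\gpGR$ into $\gpP$ (the $\beta$-periodic maximal $\alpha$-gapped palindromes), the set of maximal ordinary palindromes, and $\gpNP$ (the $\beta$-aperiodic ones), choosing $\beta \in [6/7, 1)$ so that both \Cref{lemmaP} and \Cref{lemmaPalNP} apply simultaneously.

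Combining \Cref{lemmaP} with \Cref{lemmaExponent} yields $|\gpP| \le 2(\alpha-1)\sumExp{\warm}/\beta < 6(\alpha-1)n/\beta$. The number of maximal ordinary palindromes in a word of length $n$ is at most $2n-1$ by the classical counting argument. Finally, \Cref{lemmaPalNP} gives $|\gpNP| < \alpha n (\pi^2/6 - 1/2)/(1-\beta)$. Summing these three bounds yields
\[
	|\gpGR| < \frac{6(\alpha-1)n}{\beta} + (2n-1) + \frac{\alpha n (\pi^2/6 - 1/2)}{1-\beta}.
\]

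It remains to choose $\beta$ minimizing the right-hand side on $[6/7,1)$. The objective is strictly convex on $(0,1)$, with unconstrained minimum at $\beta^\ast = \sqrt{A}/(\sqrt{A}+\sqrt{B})$, where $A := 6(\alpha-1)n$ and $B := \alpha n(\pi^2/6 - 1/2)$. A short algebraic check shows that $\beta^\ast < 6/7$ is equivalent to $\alpha(\pi^2 - 4) > -1$, which holds for every $\alpha > 1$ since $\pi^2 > 4$. Hence the constrained minimum is attained at the left endpoint $\beta = 6/7$. Substituting $\beta = 6/7$ gives $6(\alpha-1)n/\beta = 7\alpha n - 7n$ and $\alpha n (\pi^2/6 - 1/2)/(1-\beta) = 7\alpha n(\pi^2/6 - 1/2)$, and the total collapses to
\[
	7\alpha n - 7n + (2n-1) + 7\alpha n(\pi^2/6 - 1/2) = 7(\pi^2/6 + 1/2)\alpha n - 5n - 1 = \PalindromeResult.
\]

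The bulk of the work—the point-counting argument in \Cref{lemmaPointsPali} and the elaborate case analysis of \Cref{lemmaPalNPCover}—has already been done. The only mildly delicate remaining step is verifying that the constrained optimum of the convex objective really does sit at the boundary $\beta = 6/7$ rather than in the interior, but this reduces to the one-line derivative check sketched above. No new ideas beyond those already developed for repeats are needed for this theorem.
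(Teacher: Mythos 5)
Your proposal is correct and follows essentially the same route as the paper: the same three-way partition into $\beta$-periodic, ordinary, and $\beta$-aperiodic maximal palindromes, the same invocation of \Cref{lemmaP}, \Cref{lemmaExponent}, and \Cref{lemmaPalNP}, and the same choice $\beta = 6/7$. Your explicit convexity check that the constrained optimum sits at the boundary $\beta = 6/7$ is a small addition the paper leaves implicit, but otherwise the arguments coincide.
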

\begin{proof}
	Combining the results of \cref{lemmaP} and \cref{lemmaPalNP} yields
  \[
  \abs{\gpGR} 
  = \underbrace{2n - 1}_{\text{max.\ palindromes}} + \underbrace{\abs{\gpP}}_{\text{$\beta$-periodic}} + \underbrace{\abs{\gpNP}}_{\text{$\beta$-aperiodic}} 
  < 2n - 1 + \underbrace{2 (\alpha-1) \frac{\sumExp{\warm}}{\beta}}_{\text{\Cref{lemmaP}}} + \underbrace{(\frac{\pi^2}{6} - \frac{1}{2}) \frac{\alpha n}{1 - \beta}}_{\text{\Cref{lemmaPalNP}}}
  \]
  for every $6/7 \leq \beta < 1$.
  Applying~\cref{lemmaExponent}, the term on the right side is upper bounded by 
  $2n - 1 + 2 (\alpha-1) (3n / \beta) + (\pi^2 / 6 - 1/2) \alpha n / (1 - \beta)$.
  This number is minimal when $\beta = 6/7$, yielding the bound 
  $2n - 1 + 7 n (\alpha - 1) + 7 (\pi^2 / 6 - 1/2) \alpha n = \PalindromeResult$.
\end{proof}

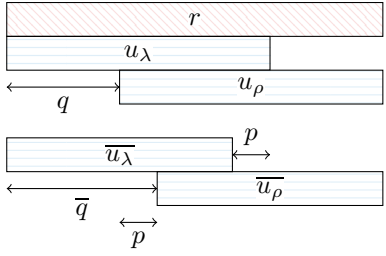
\begin{figure}[ht]
	\MyFloatBox{%
		\begin{tikzpicture}[yscale=0.45,xscale=0.5]
			\col[rep]{10}{$\rep$}
			\newrow
  \col[arm]{7}{$\LL\arm$}
			\newrow
			\void{3}
  \col[arm]{7}{$\RR\arm$}
  \newrow
			\dist[$\qvar$]{3}
  \newrow
  \col[arm]{6}{$\s{\LL\arm}$}
			\distup[$\period$]{1}
			\newrow
			\void{4}
  \col[arm]{6}{$\s{\RR\arm}$}
  \newrow
  \dist[$\s{\qvar}$]{4}
  \newrow
  \void{3}
			\dist[$\period$]{1}

	\end{tikzpicture}
		}{%
\caption{%
	Two gapped repeats~$(\LL\arm,\RR\arm)$ and~$(\s{\LL\arm},\s{\RR\arm})$ with overlapping arms.
	Both gapped repeats are within a run~$\rep$.
	They are maximal if their arms border the run~$\rep$.
	Each such maximal gapped repeat with overlapping arms has a period ($\qvar$ or $\s{\qvar}$ in the figure) that is a multiple of~$\rep$'s period~$\period$.
  }
\label{figMaxGappedOverlappingRepeats}
}%
\end{figure}

\section{A Linear Time Algorithm on Integer Alphabets}\label{secAlgoGappedRepeatOverlap}
In this algorithmic section, we are given a word~$\warm$ of length~$n$ on an \emph{integer} alphabet~$\Sigma$ as input such that $\abs{\Sigma} = n^{\Oh{1}}$.
In the following, we provide an \Oh{n} time algorithm that finds all maximal $\alpha$-gapped repeats/palindromes~$(\LL\arm,\RR\arm)$ 
with $\iend{\LL\arm} \ge \ibeg{\RR\arm}$. We call these $\alpha$-gapped repeats/palindromes \intWort{with overlap}.
We compute the other $\alpha$-gapped repeats/palindromes with a slight modification of the algorithm in~\cite{gawrychowski18tighter}, which
finds all maximal $\alpha$-gapped repeats/palindromes~$(\LL\arm,\RR\arm)$ with $\iend{\LL\arm} < \ibeg{\RR\arm}$, i.e., 
with a non-negative gap between~$\LL\arm$ and~$\RR\arm$.

When studying $\alpha$-gapped repeats/palindromes with overlap,
we can neglect the parameter~$\alpha$, because a gapped repeat/palindrome~$(\LL\arm,\RR\arm)$ whose arms overlap obeys 
the inequality~$\ibeg{\RR\arm} - \ibeg{\LL\arm} < \abs{\LL\arm} \le \alpha \abs{\LL\arm}$ for every~$\alpha \ge 1$.
For a gapped palindrome~$(\LL\arm,\RR\arm)$ with~$\iend{\LL\arm} \ge \ibeg{\RR\arm}$, we already know that either~$\agr$ is not maximal, 
or $\LL\arm \equiv \RR\arm$. 
Hence, a maximal gapped palindrome with an overlap is equal to a maximal ordinary palindrome.
It is well known that maximal ordinary palindromes can be found in \Oh{n} time~\cite{ManacherPalindromes}.

In what follows, we focus on the maximal gapped repeats with overlap.
Given a maximal gapped repeat~$(\LL\arm,\RR\arm)$ with period~$\qvar := \ibeg{\RR\arm} - \ibeg{\LL\arm} < \abs{\LL\arm}$,
it induces a square with $\warm[\ibeg{\LL\arm}..\ibeg{\LL\arm}+\qvar-1] = \warm[\ibeg{\RR\arm}..\ibeg{\RR\arm}+\qvar-1]$.
The square induces a run~$\rep$ whose minimal period~$\period$ divides~$\qvar$~(also observed in \cite[Conclusions]{crochemore16optimal}).
Both arms~$\LL\arm$ and~$\RR\arm$ are contained in~$\rep$.
Because $\agr$ is maximal, $\ibeg{\LL\arm} = \ibeg{\rep}$ and $\iend{\RR\arm} = \iend{\rep}$ hold; 
otherwise we could extend the arms to the left or to the right, respectively.
This means that the left arm~$\LL\arm$ covers at least the segment $\warm[\ibeg{\rep}..\ibeg{\rep}+\exp(\rep)\period/2]$ (otherwise the arms would not overlap).
Since~$\qvar$ is a multiple of~$\period$, the number of different lengths of~$\LL\arm$ is bounded by $\exp(\rep)/2$.
\Cref{figMaxGappedOverlappingRepeats} illustrates two maximal gapped repeats with overlapping arms within the same run.

Our idea is that we probe at the borders of each run~$\rep$ for all possible values of~$\qvar$ to find a gapped repeat
whose arms overlap and are contained in~$\rep$.
Having the \LCEds{} of~\cite{gawrychowski18tighter}, we spend~$\Oh{\exp(\rep)}$ time on each run~$\rep$,
summing up to $\Oh{n}$ due to \Cref{lemmaExponent}.
The positions of the runs can be computed in linear time~\cite{kolpakov99maximalRepetitions,runstheorem}. %
Since a gapped repeat~$(\LL\arm,\RR\arm)$ with overlapping arms is uniquely defined by its period and the borders of the run containing~$\LL\arm$ and $\RR\arm$, 
we can report each such gapped repeat exactly once.

Finally, it is left to modify the algorithm of \citet{gawrychowski18tighter} to find \emph{only} all \emph{maximal} $\alpha$-gapped repeats.
This modification is necessary, because a maximal gapped repeat in the scenario prohibiting overlaps is in general not a maximal gapped repeat
in the scenario supporting overlaps. 
Remembering $\warm = \Char{aaa}$ of \Cref{exGappedAAA}, 
it contains two maximal gapped repeats (with arm-length~one) when prohibiting overlaps, whereas
$\warm$ contains only one maximal gapped repeat (with arm-length~two) when supporting overlaps.
The modification is easy:
On reporting a gapped repeat, we additionally check whether its arms can be extended to the left or to the right with an LCE query.
In the case that we can extend both arms, we discard the gapped repeat instead of reporting it 
(the repeat would not be maximal without being extended, and the maximal gapped repeats with overlap are found with the above algorithm).
The algorithm finding all maximal $\alpha$-gapped palindromes can be changed analogously by discarding each discovered 
gapped palindrome whose inward extension results in an overlap of both arms.

\begin{theorem}
	Given a word~$\warm$ of length~$n$ on an integer alphabet, we can compute all maximal $\alpha$-gapped repeats~$\grGR$ and all maximal $\alpha$-gapped palindromes~$\gpGR$ 
	in \Oh{\alpha n} time.
\end{theorem}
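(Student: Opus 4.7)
The plan is to split the output set into two parts according to whether the arms overlap, handle each part by a dedicated linear-time routine, and then argue that every maximal $\alpha$-gapped repeat or palindrome under the new (overlap-supporting) definition is reported exactly once in total time $\Oh{\alpha n}$.

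First, I would treat the overlap case directly, essentially following the observations made just before the theorem. For palindromes, a maximal $\alpha$-gapped palindrome $(\LL\arm,\RR\arm)$ with $\iend{\LL\arm} \ge \ibeg{\RR\arm}$ must satisfy $\LL\arm \equiv \RR\arm$, hence it is a maximal ordinary palindrome, and all such palindromes can be enumerated by Manacher's algorithm in $\Oh{n}$ time. For repeats, I would first compute all runs of $\warm$ in $\Oh{n}$ time using the algorithm of Kolpakov and Kucherov. For each run $\rep$ with smallest period $\period$, I iterate over the admissible periods $\qvar = k\period$ for $1 \le k \le \gauss{\exp(\rep)/2}$ (the divisibility follows from Fine--Wilf, and the upper bound ensures that both arms still overlap inside $\rep$). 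Each such $\qvar$ determines a unique maximal gapped repeat whose arms start at $\ibeg{\rep}$ and end at $\iend{\rep}$, so I emit it once. The per-run cost is $\Oh{\exp(\rep)}$, and \Cref{lemmaExponent} bounds the total by $\Oh{n}$.

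Second, I would invoke the algorithm of \citet{gawrychowski18tighter} to enumerate the maximal $\alpha$-gapped repeats and palindromes under the non-overlap convention; this runs in $\Oh{\alpha n}$ time on integer alphabets using their \LCEds. The catch is that some candidates produced by that algorithm are not maximal in the overlap-supporting sense: inside a run, a repeat or palindrome whose inward extension is blocked only by the no-overlap convention becomes a strict sub-instance of exactly one object that was already enumerated in the first phase. I would filter such candidates by performing, for each emitted triple, a constant-time $\text{LCE}^{\leftrightarrow}$ query testing whether both arms can still be extended inwards (in the repeat case, this means checking $\warm[\iend{\LL\arm}+1] = \warm[\iend{\RR\arm}+1]$ through an LCE call; for palindromes, the symmetric inward check on $\warm$ and $\pali\warm$). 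Candidates that can be inwardly extended are discarded; the rest are reported. The filtering adds only $\Oh{1}$ per candidate, so the overall cost stays $\Oh{\alpha n}$.

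The main obstacle will be proving correctness, i.e., that every maximal $\alpha$-gapped repeat or palindrome (in the new sense) is emitted exactly once by the combined procedure. The key case analysis is: if $(\LL\arm,\RR\arm)$ is maximal and the arms overlap, it is captured by the first phase, and the pair (run, period) identifies it uniquely so there is no duplication; if the arms do not overlap, then it is also maximal in the old sense (left/right extensions and inward extensions coincide with those of the old definition) and therefore produced by Gawrychowski et al.'s algorithm; moreover, it survives the inward-extension filter precisely because maximality already prohibits the inward extension, so it is emitted exactly once and not discarded. Combined with the two $\Oh{n}$ and $\Oh{\alpha n}$ time bounds, this yields the theorem.
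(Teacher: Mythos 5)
Your proposal matches the paper's argument essentially step for step: the overlap case is handled by Manacher's algorithm for palindromes and by enumerating, per run, the $\Oh{\exp(\rep)}$ admissible periods that are multiples of the run's period (summing to $\Oh{n}$ via \Cref{lemmaExponent}), while the non-overlap case reuses the algorithm of \citet{gawrychowski18tighter} with a constant-time LCE filter discarding candidates that are still inwardly/rightwards extendable. This is the same decomposition, the same run-based enumeration, and the same filtering modification as in the paper, so the proposal is correct and not a different route.
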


\section{Conclusion}
We provided a thorough analysis on the maximum number of all maximal $\alpha$-gapped repeats and palindromes,
for which we achieved the bounds of \RepeatResult{} and \PalindromeResult{}, respectively, for a word of length~$n$.
Our proofs work for both supporting overlaps and prohibiting overlaps, and thus generalize the analysis of former studies.
Our study does not lead to a blind end, as can be seen by the following open problems:

\block{Generalizing Gaps}
A generalization of $\alpha$-gapped repeats are $(f,g)$-gapped repeats, i.e., gapped repeats~$(\LL\arm,\RR\arm)$ with the 
  additional property that $g(\abs{\LL\arm}) \le \ibeg{\RR\arm}-\iend{\LL\arm}-1 \le f(\abs{\LL\arm})$ for two functions $f,g : \N \rightarrow \RealNumber$.
  The $(f,g)$-gapped repeats with $f(j) := 1, g(j) = \alpha j$ are exactly the $\alpha$-gapped repeats without overlap.
  \citet{Kolpakov17arbitrary} showed that the number of all maximal $(f,g)$-gapped repeats is bounded by
  \[
  \Oh{n \tuple{1 + 
  \max\tuple{%
	  \sup_{j \in \N} (1/j) (f(j) - g(j))
	  , \sup_{j \in \N} \abs{f(j+1)-f(j)}
	  , \sup_{j \in \N} \abs{g(j+1)-g(j)}
  }}}.\]
  Shaping the upper bound, or devising a lower bound for certain $f$ and $g$ is left for future work.

Regarding the algorithmic part,
  \citet{brodal99gap} presented an algorithm computing all maximal $(f,g)$-gapped repeats in \Oh{n \lg n + \occ} time, where $\occ$ is the number of occurrences.
  In the light that we achieved $\Oh{\alpha n}$ running time for finding all maximal $\alpha$-gapped repeats, 
  it looks feasible to devise an algorithm whose running time depends linearly on $n$ and on the values of $f$ and $g$.
  Needless to say, $(f,g)$-gapped palindromes are also an unexplored topic.

\block{Online Algorithm}
To the best of our knowledge, there has not yet been an algorithm devised for 
computing all maximal $\alpha$-gapped repeats/palindromes of a given word \emph{online}.
We are aware of the algorithm of \citet{Fujishige16gapped} finding all gapped palindromes with a fixed gap 
($\ibeg{\RR\arm}-\iend{\LL\arm}-1 = c$ for a constant~$c$) in \Oh{n \lg \sigma} time online while taking \Oh{n} words of working space.

\block{Distinct Sets}
From literature it is already known that searching all distinct squares~\cite{squares17,CrochemoreIKRRW14} or all distinct ordinary palindromes~\cite{Groult10palindrome} of a word of length~$n$ can be done in \Oh{n} time.
A natural extension is computing all distinct $\alpha$-gapped repeats/palindromes, for which we are unaware of any results, both on 
the combinatorial (like giving an upper bound on the number of all distinct $\alpha$-gapped repeats/palindromes) and on the algorithmic aspects.

\bibliography{papermeta/literature}

\clearpage
\appendix

\section{Missing Proofs}

Here, we show that our bounds obtained in \Cref{thmMaxReps} hold when supporting overlaps as we do.
\Cref{thmMaxReps} uses results of~\cite{gawrychowski18tighter}, where gapped repeats are divided into $\beta$-periodic and $\beta$-aperiodic gapped repeats.
Lemma~9 in~\cite{gawrychowski18tighter} for the maximal $\alpha$-gapped $\beta$-aperiodic repeats does not assume that $\iend{\LL\arm} < \ibeg{\RR\arm}$, and therefore supports gapped repeats with overlap.
It is left to show a slightly modified proof of~\cite[Lemma 8]{gawrychowski18tighter}, which treats the maximal $\alpha$-gapped $\beta$-periodic repeats:
\begin{lemma}\label{lemmaPRepeats}
Let $\warm$ be a word, $\alpha > 1$ and $0 < \beta < 1$ 
two real numbers.
Then the number of maximal $\alpha$-gapped $\beta$-periodic is at most $2 \alpha \sumExp{\warm} / \beta$.
\end{lemma}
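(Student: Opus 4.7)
The plan is to mirror the structure of \Cref{lemmaP}, adapted for repeats (where $\RR\arm = \LL\arm$ rather than $\RR\arm = \pali{\LL\arm}$, and the maximality is invoked to the left rather than inwards). For $\agr \in \grP$ with period $\qvar = \ibeg{\RR\arm} - \ibeg{\LL\arm}$, I would extract the periodic prefix $\LL\subarm$ of $\LL\arm$ of length at least $\beta \abs{\LL\arm}$ witnessing $\beta$-periodicity, and let $\LL\rep$ be the run containing $\LL\subarm$ with smallest period $\period$. Since $\LL\arm = \RR\arm$, the length-$\abs{\LL\subarm}$ prefix $\RR\subarm$ of $\RR\arm$ is an identical occurrence of the factor $\LL\subarm$, contained in a run $\RR\rep$ with the same period~$\period$.

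First I would derive a structural dichotomy from left-maximality. If both $\ibeg{\LL\arm} > \ibeg{\LL\rep}$ and $\ibeg{\RR\arm} > \ibeg{\RR\rep}$ held, the positions $\ibeg{\LL\arm}-1$ and $\ibeg{\RR\arm}-1$ would lie inside their respective runs. Shifting by the period yields $\warm[\ibeg{\LL\arm}-1] = \warm[\ibeg{\LL\arm}+\period-1]$ and $\warm[\ibeg{\RR\arm}-1] = \warm[\ibeg{\RR\arm}+\period-1]$, which are corresponding characters of the identical factors $\LL\subarm$ and $\RR\subarm$ and hence equal, contradicting maximality. Therefore $\ibeg{\LL\arm} = \ibeg{\LL\rep}$ (Case~(a)) or $\ibeg{\RR\arm} = \ibeg{\RR\rep}$ (Case~(b)).

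Next I would count. In Case~(a), fix $\LL\rep$. Every such repeat is determined by $\LL\rep$ and $\qvar$, and distinct $\qvar$ correspond to distinct occurrences of the periodic factor~$\LL\subarm$ at positions $\ibeg{\LL\rep}+\qvar$; by \Cref{lemmaRepetitiveOccs} these occurrences are spaced by at least~$\period$. Combining this with $\qvar \le \alpha \abs{\LL\arm} \le \alpha \abs{\LL\subarm}/\beta \le \alpha \abs{\LL\rep}/\beta$ shows that the number of admissible $\qvar$ per run is at most $\alpha \exp(\LL\rep)/\beta$. Summing over all runs yields $\alpha \sumExp{\warm}/\beta$ for Case~(a); Case~(b) is symmetric (fix $\RR\rep$ instead), giving the same bound, for a total of $2 \alpha \sumExp{\warm}/\beta$.

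The main subtlety, and the reason this is a modification rather than a verbatim copy of \cite[Lemma~8]{gawrychowski18tighter}, is the degenerate case $\LL\rep \equiv \RR\rep$ which naturally arises when the arms overlap. Here $\qvar$ is automatically a multiple of~$\period$ (two occurrences of~$\LL\subarm$ in the same run), and since $\ibeg{\LL\arm} \le \ibeg{\RR\arm}$ the left-maximality argument forces $\ibeg{\LL\arm} = \ibeg{\LL\rep}$, so only Case~(a) applies and no double-counting occurs. The counting step is unaffected because \Cref{lemmaRepetitiveOccs} governs any two distinct occurrences of~$\LL\subarm$ in~$\warm$, regardless of whether they sit in the same run, so the $2 \alpha \sumExp{\warm}/\beta$ bound transfers unchanged to the overlap-supporting setting.
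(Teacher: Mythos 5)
Your proposal is correct and follows essentially the same route as the paper's proof: the same dichotomy $\ibeg{\LL\arm}=\ibeg{\LL\rep}$ or $\ibeg{\RR\arm}=\ibeg{\RR\rep}$ derived from left-maximality, the same identification of each repeat by the run plus its period~$\qvar$, and the same counting of admissible~$\qvar$ via \Cref{lemmaRepetitiveOccs} and the bound $\qvar\le\alpha\abs{\LL\rep}/\beta$. Your explicit period-shift justification of the dichotomy and your remark on the degenerate case $\LL\rep\equiv\RR\rep$ only spell out what the paper leaves implicit.
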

\begin{proof}
	Let $(\LL\arm,\RR\arm)$ be a maximal $\alpha$-gapped $\beta$-periodic repeat, 
  $\qvar := \ibeg{\RR\arm} - \ibeg{\LL\arm}$ its period,
	and
	$\Sarm := \abs{\LL\arm} = \abs{\RR\arm}$ the length of its arms.
  By definition, the left arm~$\LL\arm$ has a periodic prefix~$\LL\subarm$ of length at least $\beta \Sarm$.
  Let $\LL\rep$ denote the run that generates $\LL\subarm$, i.e.,
  $\LL\subarm \subseteq \LL\rep$. 
  The two segments~$\LL\subarm$ and~$\LL\rep$ have the shortest period~$\period$ in common.
  By the definition of the gapped repeats, there is a right copy $\RR\subarm$ of $\LL\subarm$ contained in $\RR\arm$ with
  $\RR\subarm \equiv \substr{\warm}{\ibeg{\LL\subarm}+\qvar}{\iend{\LL\subarm}+\qvar} = \LL\subarm$.
  Let $\RR\rep$ be a run generating $\RR\subarm$ (it is possible that $\RR\rep$ and $\LL\rep$ are identical).
  By definition, $\RR\rep$ has the same period~$\period$ as $\LL\rep$.

  Since $\agr$~is maximal, $\ibeg{\LL\arm} = \ibeg{\LL\rep}$ or $\ibeg{\RR\arm} = \ibeg{\RR\rep}$ must hold~(see \Cref{figPeriodic});
  otherwise we could extend $\agr$ to the left.

\begin{figure}[ht]
	\centering{%
\subfloat[ ]{%
		\begin{tikzpicture}[yscale=0.45,xscale=0.6]
  \col[arm]{2}{$\LL\arm$}
  \col[gap]{1}{}
  \col[arm]{2}{$\RR\arm$}
  \newrow
  \col[subarm]{1}{$\LL\subarm$}
  \void{2}
  \col[subarm]{1}{$\RR\subarm$}
  \newrow
  \col[rep]{1}{$\LL\rep$}
  \void{1.5}
  \col[rep]{1.5}{$\RR\rep$}
	\end{tikzpicture}
}%
\hspace{2em}%
\subfloat[ ]{%
		\begin{tikzpicture}[yscale=0.45,xscale=0.6]
  \col[arm]{2}{$\LL\arm$}
  \col[gap]{1}{}
  \col[arm]{2}{$\RR\arm$}
  \newrow
  \col[subarm]{1}{$\LL\subarm$}
  \void{2}
  \col[subarm]{1}{$\RR\subarm$}
  \newrow
  \void{-0.5}
  \col[rep]{1.5}{$\LL\rep$}
  \void{2}
  \col[rep]{1}{$\RR\rep$}
	\end{tikzpicture}
}%
}%
\caption{%
	Setting of the proof of \Cref{lemmaPRepeats}.
	Each figure shows a maximal $\alpha$-gapped $\beta$-periodic repeat~$(\LL\arm,\RR\arm)$ and the periodic prefixes~$\LL\subarm$ and $\RR\subarm$
	of its respective arms~$\LL\arm$ and $\RR\arm$. The periodic prefixes are contained respectively in the runs~$\LL\rep$ and $\RR\rep$.
	The equation~\ref{itPeriodicA} $\ibeg{\LL\arm} = \ibeg{\LL\rep}$ or~\ref{itPeriodicB} $\ibeg{\RR\arm} = \ibeg{\RR\rep}$ must hold.
	By the maximality property of runs, $\iend{\LL\rep} = \iend{\LL\subarm}$ and $\iend{\RR\rep} = \iend{\RR\subarm}$,
	i.e., $\LL\subarm \equiv \LL\rep \cap \LL\arm$ and $\RR\subarm \equiv \RR\rep \cap \RR\arm$.
  }
\label{figPeriodic}
\end{figure}

\begin{figure}[ht]
	\MyFloatBox{%
		\begin{tikzpicture}[yscale=0.45,xscale=0.4]
  \col[rep]{2.5}{$\LL\rep$}
  \void{1}
  \col[rep]{3}{$\RR\rep$}
  \newrow
  \col[subarm]{2.5}{$\LL\subarm$}
  \void{1.5}
  \col[subarm]{2.5}{$\RR\subarm$}
  \newrow
  \col[arm]{3}{$\LL\arm$}
  \col[gap]{1}{}
  \col[arm]{3}{$\RR\arm$}
  \newrow
  \col[arm]{1.5}{$\s{\LL\arm}$}
  \col[gap]{3.5}{}
  \col[arm]{1.5}{$\s{\RR\arm}$}
  \newrow
  \col[subarm]{1.5}{$\s{\LL\subarm}$}
  \void{3.5}
  \col[subarm]{1.5}{$\s{\RR\subarm}$}
  \newrow
  \newrow
  \void{4}
  \dist[$\delta \in \period \N$]{1}
	\end{tikzpicture}
	\hspace{1em}
		\begin{tikzpicture}[yscale=0.45,xscale=0.4]
  \col[rep]{2}{$\LL\rep$}
  \void{1.5}
  \col[rep]{2.5}{$\RR\rep$}
  \newrow
  \col[subarm]{2}{$\LL\subarm$}
  \void{2}
  \col[subarm]{2}{$\RR\subarm$}
  \newrow
  \col[arm]{3}{$\LL\arm$}
  \col[gap]{1}{}
  \col[arm]{3}{$\RR\arm$}
  \newrow
  \col[arm]{2.5}{$\s{\LL\arm}$}
  \col[gap]{3}{}
  \col[arm]{2.5}{$\s{\RR\arm}$}
  \newrow
  \void{5}
  \col[rep]{2.5}{$\s{\RR\rep}$}
  \newrow
  \col[subarm]{2}{$\s{\LL\subarm}$}
  \void{3.5}
  \col[subarm]{2}{$\s{\RR\subarm}$}
  \newrow
  \newrow
  \void{5}
  \dist[$< \period$]{1}
	\end{tikzpicture}
		}{%
\caption{%
	Setting of the proof of Case~\ref{itPeriodicA} in \Cref{lemmaPRepeats} for two different maximal $\alpha$-gapped repeats~$(\LL\arm,\RR\arm)$ and $(\s{\LL\arm},\s{\RR\arm})$
with $\ibeg{\LL\arm} = \ibeg{\s{\LL\arm}} = \ibeg{\LL\rep}$.
\emph{Left:} The periodic prefixes~$\RR\subarm$ and $\s{\RR\subarm}$ of the right arms of both gapped repeats are contained in a single run.
	The minimal period~$\period$ of both runs~$\LL\rep$ and $\RR\rep$ determine the possible starting positions of the right arms.
	\emph{Right:} The periodic prefixes of the right arms of both gapped repeats are contained in different runs.
	Both runs cannot overlap more than $\period-1$ positions due to \Cref{lemmaRepetitiveOccs}.
  }
\label{figPeriodicOverlap}
		}%
\end{figure}

  The periodic $\alpha$-gapped repeat~$\agr$ is uniquely determined by
  its period~$\qvar$ and
  \begin{enumerate}[(a)]
	  \item $\LL\rep$ in case $\ibeg{\LL\arm} = \ibeg{\LL\rep}$, or \label{itPeriodicA}
	  \item $\RR\rep$ in case $\ibeg{\RR\arm} = \ibeg{\RR\rep}$. \label{itPeriodicB}
	  \end{enumerate}
	  Since $\agr$ is $\alpha$-gapped, it holds that $\qvar \le \alpha \Sarm$.
	  We analyze Case~\ref{itPeriodicA}, where $\ibeg{\LL\arm} = \ibeg{\LL\subarm} = \ibeg{\LL\rep}$ holds. 
	  Case~\ref{itPeriodicB} is treated exactly in the same way by symmetry.
    The gapped repeat~$\agr$ is identified by its period~$\qvar$ and~$\LL\rep$.
  We fix $\LL\rep$ and pose the question how many maximal periodic gapped repeats can be generated by $\LL\rep$.
  We answer this question by counting the number of possible values for the period~$\qvar$.
  Since the starting position $\ibeg{\RR\subarm} = \ibeg{\RR\arm} = \ibeg{\LL\arm} + \qvar = \ibeg{\LL\rep} + \qvar$ of the periodic segment~$\RR\subarm$ 
  is determined by~$\qvar$, two possible values of~$\qvar$ must have a distance of at least~$\period$ due to \Cref{lemmaRepetitiveOccs}, see also \Cref{figPeriodicOverlap}.

  With $\Sarm \le \abs{\LL\subarm} / \beta$ and $\qvar \le \alpha \Sarm$, we obtain
    $1 \le \qvar \leq \abs{\LL\subarm} \alpha / \beta \leq \abs{\LL\rep} \alpha / \beta$.
	Then the number of possible periods~\qvar{} is at most $\abs{\LL\rep} \alpha / (\beta \period) = \exp(\LL\rep) \alpha / \beta$.
	Overall, the number of all maximal $\alpha$-gapped repeats is at most $\alpha \sumExp{\warm} / \beta$ for the case $\ibeg{\LL\arm} = \ibeg{\LL\rep}$.
	Since Case~\ref{itPeriodicB} with $\ibeg{\RR\arm} = \ibeg{\RR\rep}$ is symmetric, we get the total upper bound $2 \alpha \sumExp{\warm} / \beta$.
  \end{proof}

\end{document}